\newtheorem{theorem}{Theorem}
\newtheorem{proposition}{Proposition}
\newtheorem{lemma}{Lemma}
\newtheorem{ass}{Assumption}
\newtheorem{definition}{Definition}
\newtheorem{remark}{Remark}
\newcommand\EE {\mathbb E}
\newcommand\FF {\mathbb F}
\newcommand\RR {\mathbb R}
\newcommand\PP {\mathbb P}
\newcommand\ZZ {\mathbb Z}
\def\bone{\mathbf{1}}
\def\qed{\hskip6pt\vrule height6pt width5pt depth1pt}
\def\qed{\hskip 6pt\vrule height6pt width5pt depth1pt}
\newcommand{\ed}{\end{document}}
\newcommand{\be}{\begin{equation}}
\newcommand{\ee}{\end{equation}}
\newcommand{\bq}{\begin{eqnarray}}
\newcommand{\eq}{\end{eqnarray}}
\newcommand{\supp}{\text{supp}}
\newcommand{\Fr}{\mathbf{F}}
\begin{document}

\title{Control-stopping Games for Market Microstructure and Beyond.\footnote{Partial support from the NSF grant DMS-1411824 is acknowledged by both authors.}}
\author{Roman~Gayduk and Sergey~Nadtochiy\footnote{Address the correspondence to: Mathematics Department, University of Michigan, 530 Church St, Ann Arbor, MI 48104; sergeyn@umich.edu.}}
\date{Current version: Nov 1, 2018
\vskip 4pt
Original version: Aug 1, 2017
%\vskip 6pt 
%Working draft -- confidential
}
\maketitle

\begin{abstract}
In this paper, we present a family of a control-stopping games which arise naturally in equilibrium-based models of market microstructure, as well as in other models with strategic buyers and sellers. A distinctive feature of this family of games is the fact that the agents do not have any exogenously given fundamental value for the asset, and they deduce the value of their position from the bid and ask prices posted by other agents (i.e. they are pure speculators). As a result, in such a game, the reward function of each agent, at the time of stopping, depends directly on the controls of other players.
%The latter feature creates additional mathematical challenges in the construction of equilibrium.
The equilibrium problem leads naturally to a system of coupled control-stopping problems (or, equivalently, Reflected Backward Stochastic Differential Equations (RBSDEs)), in which the individual reward functions (or, reflecting barriers) depend on the value functions (or, solution components) of other agents.
The resulting system, in general, presents multiple mathematical challenges due to the non-standard form of coupling (or, reflection). In the present case, this system is also complicated by the fact that the continuous controls of the agents, describing their posted bid and ask prices, are constrained to take values in a discrete grid. The latter feature reflects the presence of a positive tick size in the market, and it creates additional discontinuities in the agents' reward functions (or, reflecting barriers). Herein, we prove the existence of a solution to the associated system in a special Markovian framework, provide numerical examples, and discuss the potential applications.
%This system does not fall within any known classes of RBSDEs, due to the lack of the desired monotonicity and Lipschitz properties of the generator, and of the convexity and regularity of the reflecting barrier. I will discuss the mathematical challenges associated with such systems and will prove the existence of their solutions in several cases. The theory will be accompanied with applications and numerical examples.
%In this paper we present an extension of the microstructure modeling framework from our previous work \cite{GaydukNadtochiy1}, \cite{GaydukNadtochiy2}, incorporating the discreteness of prices, i.e. the existence of a finite tick size, which is a feature of most markets. This change makes the problem significantly more singular: mathematically we now have a two-agent control-stopping game with coupling through controls and through obstacles which are \emph{discontinuous} functionals of other agent's value function. The problem in general appears to be beyond the scope of current techniques. We manage to solve it in the case of one-dimensional state space with sufficient noise in the system, by utilizing the recent geometric approach to the optimal stopping of linear diffusions (\cite{DayanikKaratzas},\cite{Dayanik}) and certain monotonicity properties we can establish in the case of sufficient noise.
\end{abstract}

\section{Buyer-seller game}

In this paper, we consider a non-zero-sum control-stopping game between two players, which can be stated in the form
\begin{equation}\label{eq.game.main.def}
\left\{
\begin{aligned}
{(\theta^a,\tau^a)\in\text{argmax}_{\theta,\,\tau} \,\EE\left( U^a_{\tau}(\theta,\theta^b)\,\bone_{\{\tau\leq \tau^b\}} 
+ L^a_{\tau^b}(\theta,\theta^b)\,\bone_{\{\tau > \tau^b\}} \right)}\\
{(\theta^b,\tau^b)\in\text{argmin}_{\theta,\,\tau} \,\EE\left( U^b_{\tau}(\theta^a,\theta)\,\bone_{\{\tau\leq \tau^a\}} 
+ L^b_{\tau^a}(\theta^a,\theta)\,\bone_{\{\tau > \tau^a\}} \right)}
\end{aligned}
\right.
\end{equation}
In the above, $\theta$ is a stochastic process, referred to as continuous control; $\tau$ represents a stopping time (its exact meaning is discussed at the end of this section); $U^a$, $U^b$, $L^a$, and $L^b$, are exogenously given non-anticipative random functions, mapping the paths of $(\theta^a,\theta^b)$ into the so-called reward paths. The interpretation of (\ref{eq.game.main.def}) is clear: the first player, referred to as the seller, chooses optimal control $(\theta^a,\tau^a)$, and the second player, the buyer, chooses optimal $(\theta^b,\tau^b)$. The payoff of each player depends on her individual continuous control, on the continuous control of the other player, and on who stops first.
The general class of games defined by (\ref{eq.game.main.def}) can be viewed as an extension of the well known Dynkin games (\cite{Dyn1}), where the extension is due to the fact that (i) the players can choose continuous controls, in addition to stopping times, and (ii) the game is not necessarily zero-sum. The two-player non-zero-sum control-stopping games have been considered in \cite{KaratzasLi}, \cite{KaratzasSuderth}. Other extensions of Dynkin games can be found, e.g., in \cite{BensoussanFriedman}, \cite{Dyn3}, \cite{HamadenLapeltier}, \cite{HamadeneZhang}, \cite{AngelisFerrariMoriarty}, \cite{ZhouZhou}, and the references therein.
However, the specific type of games considered herein has not been analyzed before, with the only exception of our earlier work, \cite{GaydukNadtochiy2}. Namely, we assume that $\theta=(p,q)$, with real-valued processes $p$, $q$, and
\begin{equation}\label{eq.UL.def}
\begin{aligned}
U^{a}_t(p^a,p^b,q^b) = \int_0^t \exp\left(-\int_0^u c^a_s(p^a_s,p^b_s)ds\right) g^a_u(p^a_u,p^b_u)\,du 
\,+\,\exp\left(-\int_0^t c^a_s(p^a_s,p^b_s)ds\right)\, q^b_t,\\
L^{a}_t(p^a,q^a,p^b) = \int_0^t \exp\left(-\int_0^u c^a_s(p^a_s,p^b_s)ds\right) g^a_u(p^a_u,p^b_u)\,du 
\,+\,\exp\left(-\int_0^t c^a_s(p^a_s,p^b_s)ds\right)\, q^a_t,\\
U^{b}_t(p^a,q^a,p^b) = \int_0^t \exp\left(-\int_0^u c^b_s(p^a_s,p^b_s)ds\right) g^b_u(p^a_u,p^b_u)\,du 
\,+\,\exp\left(-\int_0^t c^b_s(p^a_s,p^b_s)ds\right)\, q^a_t,\\
L^{b}_t(p^a,p^b,q^b) = \int_0^t \exp\left(-\int_0^u c^b_s(p^a_s,p^b_s)ds\right) g^b_u(p^a_u,p^b_u)\,du 
\,+\,\exp\left(-\int_0^t c^b_s(p^a_s,p^b_s)ds\right)\, q^b_t.
\end{aligned}
\end{equation}
This choice has a clear economic interpretation, as a game between buyers and sellers, which is the main motivation for our study of such problems (cf. \cite{GaydukNadtochiy1}, \cite{GaydukNadtochiy2}). Consider a strategic buyer and a strategic seller (or two homogeneous groups of buyers and sellers), trying to buy and sell, respectively, one unit of an asset. Assume that the strategic agents know each other's characteristics (e.g. because they have played this game many times, or because they can ``research" each other), but there exist other potential buyers and sellers, who are not fully strategic, and whose characteristics are not completely known to the strategic players. An example of such setting appears naturally in the context of market microstructure, where professional high-frequency traders use common predictive factors and, hence, may, to some extent, predict the behavior of each other, while the longer term investors (or, simply, less sophisticated ones) may use very different types of strategies and, hence, are more difficult to predict. A similar game may describe the behavior of a department store and the ``bargain hunters", representing the strategic sellers and buyers, respectively. The department store may know fairly well the preferences (and, hence, the behavior) of the bargain hunters by conducting market research or through past experience. However, there may also exist other potential shoppers, whose behavior is more difficult to predict: e.g. it may be motivated more by the immediate personal needs than by the prices and discounts. Regardless of the specific economic interpretation, the behavior of the strategic agents can be described in the following way. The strategic agents model the arrival time of the first non-strategic agent via an exponential random variable, independent of the past information they observed. The non-strategic agents, despite being non-strategic, do have price preferences, hence, the arrival rate of their orders, $c^a_t$ or $c^b_t$ (we allow the strategic agents to have different beliefs about the arrival rate), depends on the prices posted by the strategic buyers and sellers, $p^b_t$ and $p^a_t$, respectively. The latter also affect the probability that the arriving non-strategic agent is a buyer or a seller and, hence, will trade with the strategic seller or the strategic buyer, respectively. The game ends whenever the first trade occurs. If the arriving non-strategic agent is a seller, then, the strategic seller (who, in this case, is left with positive inventory at the end of the game), marks her inventory to the ``new fair price", which is typically lower than her posted price, $p^a$. Analogous rule applies if the non-strategic agent is a buyer. The specific choice of such ``marking to market" of the remaining inventory is discussed in the subsequent section. The resulting profits and losses are captured by the terms $g^a$ and $g^b$, in (\ref{eq.UL.def}). However, the game may also end if the strategic buyer and seller decide to trade with each other. The latter may occur if one of them becomes sufficiently pessimistic about the arrival of a non-strategic agent of appropriate type. In this case, they will trade at the so-called ``internal" prices, $q^a$ and $q^b$, which are not posted (and, hence, not known to the non-strategic agents) but can be deduced by each strategic agent in equilibrium. A strategic agent who initiates the trade accepts the price offered by the other agent: e.g. if the strategic seller initiates a trade with the strategic buyer, the transaction will occur at the price $q^b$. This is captured by the last terms in the right hand side of (\ref{eq.UL.def}). A more detailed mathematical description of the market microstructure model leading to  (\ref{eq.game.main.def})--(\ref{eq.UL.def}) is presented in \cite{GaydukNadtochiy2}.

Let us now explain why the games in the form (\ref{eq.game.main.def})--(\ref{eq.UL.def}) are challenging to analyze. First of all, the fact that such games are not zero-sum implies that the description of equilibrium through a doubly-reflected BSDE, established, e.g., in \cite{Dyn3}, \cite{HamadenLapeltier}, is not applicable in the present case. In the absence of zero-sum property, the existence of equilibrium is typically established via a fixed-point theorem -- either using its abstract version (cf. \cite{KaratzasLi}, \cite{KaratzasSuderth}, \cite{HamadeneZhang}, \cite{ZhouZhou}), or in the context of Partial Differential Equations (cf. \cite{BensoussanFriedman}).\footnote{An exception is \cite{AngelisFerrariMoriarty}, where an equilibrium is constructed explicitly. However, the latter results rely on the specific structure of the problem, and, in particular, do not allow for continuous controls.} In order to apply a fixed-point theorem, one has to either rely on certain monotonicity properties of the objective, which are not present in our case, or (a) to find a compact set of individual controls, which is sufficiently large to include any maximizer of the objective function, and (b) to establish the continuity of the objective with respect to all controls.
However, the form of the last terms in the right hand side of (\ref{eq.UL.def}), representing the ``reward" at stopping, makes it difficult to implement the program (a)--(b). Notice that these terms introduce expressions of the form ``$q_{\tau}$" into the objective. It turns out that it is very difficult to find a topology on the space of controls $(q,\tau)$, such that the set of all admissible controls is compact and the mapping $(q,\tau)\mapsto q_{\tau}$ is continuous. Note that this issue does not arise in many cases considered in the existing literature, because the instantaneous reward process, typically, depends on the continuous controls only through a controlled state process (cf. \cite{KaratzasLi}, \cite{KaratzasSuderth}). The latter type of dependence is more convoluted than the explicit functional dependence on the continuous controls, given by the last terms in the right hand side of (\ref{eq.UL.def}), but it does have a ``smoothing" effect and, in many cases, allows one to avoid the aforementioned issue of the lack of compactness or continuity.
Our earlier work, \cite{GaydukNadtochiy2}, shows how to construct equilibria in a version of (\ref{eq.game.main.def})--(\ref{eq.UL.def}), using a system of Reflected Backward Stochastic Differential Equations (RBSDEs).

Herein, we extend our discussion of the control-stopping games in the form (\ref{eq.game.main.def})--(\ref{eq.UL.def}) and consider the case where the continuous controls $(p^a,q^a,p^b,q^b)$ are only allowed to take values in a \emph{discrete equidistant grid}, which, without loss of generality, is assumed to be $\ZZ$.
%As discussed in the next section, the equilibrium problem in such a game can be reduced to a system of coupled control-stopping problems, in which the immediate reward process of one problem depends directly on the value function of the other one (cf. \ref{eq.2agentjointproblem}). Such a connection holds regardless of the constraints on the values of continuous controls.
This restriction reflects the presence of a strictly positive \emph{tick size} in financial markets: i.e., a transaction can only occur at a price that is a multiple of the tick size.
It turns out that, in the presence of a positive tick size, the equilibrium dynamics become qualitatively different, and the construction of equilibrium requires novel mathematical methods, as compared to the related model of \cite{GaydukNadtochiy2}. These two features constitute the main contribution of the present paper.
In particular, the equilibrium constructed herein allows for the continuation values of the two players to \emph{cross}: i.e., a strategic seller may expect to receive for his asset less than the strategic buyer expects to pay. In such a case, without the friction caused by a positive tick size, one would naturally expect that the two strategic agents would trade with each other at any price level lying between the two continuation values, and the game would end. The latter is indeed the case in \cite{GaydukNadtochiy2}. However, in the present model, the agents may not be able to trade with each other because there may not exist an admissible price level (i.e. a multiple of tick size) lying between their continuation values. In fact, such a crossing of continuation values occurs necessarily if the two strategic agents have the same beliefs about the future order flow, in which case the game ends instantaneously in the model of \cite{GaydukNadtochiy2}, whereas it typically continues in the present model. The size of such crossing (i.e., by how much the continuation value of the buyer exceeds the continuation value of the seller) measures the \emph{inefficiency} created by the positive tick size. Namely, if the two players were offered to trade with each other in a ``shadow market", without a tick size, the maximum fee they would be willing to pay for such an opportunity is precisely equal to the size of the crossing. The results of the present paper allow us to compute the crossing size and to study its dependence on the tick size. This analysis is presented in Section \ref{se:example}.

On the mathematical side, our main contribution is in showing how to prove the existence of equilibrium in a game in which the individual objectives of the players depend on a discontinuous (but piece-wise continuous) function of the controlled state process, such as the ``floor" and ``ceiling" functions.
Such dependencies appear naturally in various games. For example, a series of attacks on a computer server does not cause any loss to its owner until the server's security is breached, in which case the loss jumps instantaneously. Similarly, changes in credit quality of a borrower do not cause any loss to his creditor until the credit quality drops to a critical value, at which the borrower defaults on his loan and the loss jumps. In the present case, it turns out that there exists an equivalent formulation of the equilibrium problem in which the internal prices $(q^a,q^b)$ of strategic agents are given by the ``floor" and ``ceiling" functions of the respective continuation values, introducing a potential discontinuity in the objectives. It is well known that the existence of a fixed point for a discontinuous mapping is quite rare, and proving it is a very challenging task. However, if the controlled state process is sufficiently ``diffuse",\footnote{The precise interpretation of this term depends on the equilibrium problem at hand. Herein, we show that the continuation values of strategic agents are given by the functions of a Brownian motion, with the first derivatives being bounded away from zero.} the discontinuities are smoothed out by the optimization. Thus, the method we propose and implement herein is to restrict the controls of the players in such a way that (i) the controlled state process remains sufficiently diffuse and (ii) there is no loss of optimality due to such restriction, and to show the existence of a fixed point in the restricted set. A more detailed discussion of our implementation of this general method is given after the statement of Theorem \ref{thm:main}.

The paper is organized as follows. In Section \ref{sec.2agents}, we describe the control-stopping game in a Markovian framework and obtain an alternative representation of its equilibria via an auxiliary system of coupled control-stopping problems, in which the immediate reward process of one problem depends directly on the value function of the other one (cf. (\ref{eq.2agentjointproblem})). The main result of the paper, Theorem \ref{thm:main}, is also stated in Section \ref{sec.2agents}, and it proves the existence of a solution to the aforementioned auxiliary system (and, hence, shows the existence of equilibrium in the proposed game). Most of the remainder of the paper is devoted to the proof of the main result. 
In Section \ref{sec.individualvalfuncprops}, we investigate the properties of the value function of an individual optimal stopping problem faced by an agent, given the continuous controls of both agents and the other agent's value function. Section \ref{sec.individualvalfuncprops} describes the main novel mathematical ideas developed in this paper (see the discussion following Theorem \ref{thm:main}). In Section \ref{sec.responsepricesandequilibrium}, we, first, show that the candidate optimal continuous controls in feedback form are, indeed, optimal, and that they possess certain continuity properties. We, then, use these properties and the results of Section \ref{sec.individualvalfuncprops} to prove the existence of a solution to the system of coupled control-stopping problems derived in Section \ref{sec.2agents}, via a fixed-point theorem.
A numerical example, illustrating the properties of proposed equilibrium models, and its applications, are discussed in Section \ref{se:example}.

%%%%%%%%%%%%%%%%%%%%%%%%%%%%

\section{Problem formulation and main results}\label{sec.2agents}

%{\color{red}Discuss the equilibrium construction approach in previous paper from the economic point of view; compare to present case with non-zero tick}
%Consider two agents, long and short, aiming to sell and buy, respectively, one unit of the asset.
%Every agent can post a limit order (of the respective type -- buy or sell) at a chosen price level, or submit a market order (of the same type). We denote the location of a limit order posted by the long agent at time $t$ by $p^a_t$.  Similarly, we denote the location of a limit order posted by the short agent at time $t$ by $p^b_t$. Both processes take values in he set of integers $\mathbb{Z}$. We interpret their values as prices in multiples of tick size for the asset.
%We denote the stopping time at which the long agent executes a market order by $\tau^a$.  Similarly, we denote the stopping time at which the long agent executes a market order by $\tau^b$.
%{\color{red}Do both agents have to believe in the same $\lambda$? Can we make the arrival process be an integral w.r.t. an independent Poisson measure $N$, with the integrand being a function of $B_t$?}

The continuous controls of each agent take values in the set of integers $\mathbb{Z}$, which corresponds to measuring prices in the multiples of tick size, in the context of the market microstructure model.
The information of each agent is generated by (the same) random factor
$$
X_t = X_0 + \sigma B_t,
$$
where $X_0\in\RR$ and $B$ is a one-dimensional Brownian motion.
The rest of the specifications are given in order to determine a specific form of $c^{a/b}_t$ and $g^{a/b}_t$, and are motivated by the model of market microstructure, discussed, e.g., in \cite{GaydukNadtochiy2}. However, most of the results described herein hold for more general $c^{a/b}_t$ and $g^{a/b}_t$, as long as they are given by deterministic functions of $(p^a_t,p^b_t,X_t)$ and the appropriate assumptions, stated further in the paper, are satisfied.

In the proposed specification, $X$ represents the current \emph{estimate} of the ``reservation price" of the non-strategic investors. Each agent believes that the non-strategic investors arrive to the market according to a Poisson process $N$, with intensity $\lambda>0$, independent of $B$.
%The processes $(p^a,p^b)$ and the stopping times $(\tau^a,\tau^b)$ are adapted to $\FF^B$.
At any arrival time $T_i$ of $N$, the value of the \emph{actual} reservation price, $p^0_{T_i}$, of a non-strategic investor arriving at this time, is given by 
$$
p^0_{T_i}=X_{T_i}+\xi_i,
$$ 
where $\{\xi_i\}$ are i.i.d. random variables, independent of $B$ and $N$, with mean $0$ and c.d.f. $F$.
%The value of $p^0_t$ represents the reservation price  of a non-strategic investor arriving at time $t$.
An arriving non-strategic investor submits a buy order (to the strategic seller) if and only if $p^0_{T_i}\geq p^a_{T_i}$. Similarly, she submits a sell order (to the strategic buyer) if and only if $p^0_{T_i}\leq p^b_{T_i}$.
After the order of a non-strategic seller is executed (all orders are of unit size), the remaining inventory of strategic seller (of unit size) is marked to market at the price $\lfloor X_{T_i-}+\alpha\xi_{i} \rfloor$. Similarly, if the non-strategic agent is a buyer, the remaining inventory of strategic buyer is marked to market at the price $\lceil X_{T_i-}+\alpha\xi_{i} \rceil$. We denote by $\lfloor\cdot\rfloor$ and $\lceil\cdot\rceil$ the standard ``floor" and ``ceiling" functions. The parameter $\alpha\in(0,1)$ measures the (linear) impact of a single trade on the estimated fundamental price. The choice of this specific marking rule is justified if the bid-ask spread is equal to a single tick: in this case, $X_{T_i-}+\alpha\xi_{i}$ represents the new estimate of the reservation price (after the order of the non-strategic agent is executed), and the bid and ask prices are the closest integers to this number from below and from above.
%We note, however, that the proposed such specification is not the o, to a large extent, arbitrary.
%In addition, we assume that the agents have time preference, which manifests itself in the additional exponential discounting with exponent $\rho>0$. The latter has to be strictly positive for a technical reason, but it can be arbitrarily small. 
The above specification leads to the following expressions for $c^{a/b}$ and $g^{a/b}$:
\begin{equation}\label{def.cpapbx}
c^{a}_t(p^a_t,p^b_t) = c^{b}_t(p^a_t,p^b_t) = c(p^a_t,p^b_t,X_t),
\quad c(p^a,p^b,x):=\lambda\left( \left(1-F\left(p^a-x\right)\right)+F\left(p^b-x\right)\right),
\end{equation}
\begin{equation}\label{def.gapapbx}
g^{a}_t(p^a_t,p^b_t) = g^a(p^a_t,p^b_t,X_t),
\quad g^a(p^a,p^b,x):=\lambda\left(
p^a\left( 1-F\left(p^a-x\right)\right)+\mathcal{F}^b(p^b,x)
\right),
\end{equation}
\begin{equation}\label{def.gbpapbx}
g^{b}_t(p^a_t,p^b_t) = g^b(p^a_t,p^b_t,X_t),
\quad g^b(p^a,p^b,x):=\lambda\left(
p^bF\left(p^b-x\right)+\mathcal{F}^a(p^a,x)
\right),
\end{equation}
\begin{equation}\label{def.Fbpbx}
\mathcal{F}^b(p^b,x):=\int_{-\infty}^{p^b-x} \lfloor x+\alpha y\rfloor\text{d}F(y),
%\end{equation}
%\begin{equation}\label{Fapax}
\quad \mathcal{F}^a(p^a,x):=\int^{\infty}_{p^a-x} \lceil x+\alpha y\rceil\text{d}F(y).
\end{equation}
Any strategic agent can also stop the game at any point in time by trading with the other strategic agent at the price acceptable to the latter: $q^a$, if the buyer initiates the trade, and $q^b$ if the seller does. This creates the optimal stopping problems for strategic agents, with the instantaneous reward processes being equal to (discounted) $q^a$ and $q^b$ (see (\ref{eq.game.main.def}) and (\ref{eq.UL.def})).
Such functional dependence of the instantaneous reward process on the continuous control of the other agent creates an additional challenge in the definition of stopping strategies. Assume, for simplicity, that $c^{a/b}\equiv0$ and notice that, if $\tau^a>\tau^b$, then, the payoff of the seller is $q^a_{\tau^b}$. If $\tau^b$ is fixed as a stopping time (and known to the seller), then the seller can increase the value of $q^a$ around $\tau^b$, thus, making her payoff arbitrarily large. Of course, in practice, the buyer would not accept an arbitrarily high price. Hence, to make our game more realistic, we assume that each stopping strategy depends on the continuous control of the other agent, so that changing $q^a$ would cause changes in $\tau^b$.
%This corresponds to ``feedback" stopping controls.

The resulting control-stopping game is given by
\begin{equation}\label{eq.game.main.def.2}
\left\{
\begin{aligned}
(p^a,q^a,\tau^a)\in\text{argmax}_{p\in\mathcal{A}^a,\,q\in\mathcal{A},\,\tau\in\mathcal{T}^a}& \,\EE\left( U^a_{\tau(q^b)}(p(X),p^b(X),q^b(X))\,\bone_{\{\tau(q^b)\leq \tau^b(q)\}} \right.\\
&\left.\quad\quad\quad\quad + L^a_{\tau^b(q)}(p(X),q(X),p^b(X))\,\bone_{\{\tau(q^b) > \tau^b(q)\}} \right),\\
(p^b,q^b,\tau^b)\in\text{argmin}_{p\in\mathcal{A}^b,\,q\in\mathcal{A},\,\tau\in\mathcal{T}^b}&\,\EE\left( U^b_{\tau(q^a)}(p^a(X),q^a(X),p(X))\,\bone_{\{\tau(q^a)\leq \tau^a(q)\}} \right.\\
&\left.\quad\quad\quad\quad + L^b_{\tau^a(q)}(p^a(X),p(X),q(X))\,\bone_{\{\tau(q^a) > \tau^a(q)\}} \right),
\end{aligned}
\right.
\end{equation}
where $U^{a/b}$ and $L^{a/b}$ are defined via (\ref{eq.UL.def}) and (\ref{def.cpapbx})--(\ref{def.Fbpbx}); $\mathcal{A}$ is the space of all measurable functions from $\RR$ to $\mathbb{Z}$; 
\begin{equation}\label{eq.Aa.Ab.def}
\mathcal{A}^a = \{p\in\mathcal{A}\,:\,1-F(p(x)-x) \geq \frac{c_l}{2\lambda},\,\,\forall\,x\in\RR\},
\quad \mathcal{A}^b = \{p\in\mathcal{A}\,:\,F(p(x)-x) \geq \frac{c_l}{2\lambda},\,\,\forall\,x\in\RR\},
\end{equation}
with some constant $c_l>0$; and $\mathcal{T}^a$ and $\mathcal{T}^b$ consist of all mappings from $\mathcal{A}$ into the space of $\FF^X$-stopping times, s.t.
$$
\tau\in\mathcal{T}^a\Rightarrow\exists\, \FF^X\text{-adapted process }v,\text{ s.t. } \forall\,q\in\mathcal{A},\,\,\tau(q) = \inf\{t\geq0\,:\,q(X_t) \geq v_t\},
$$
$$
\tau\in\mathcal{T}^b\Rightarrow\exists\, \FF^X\text{-adapted process }v,\text{ s.t. } \forall\,q\in\mathcal{A},\,\,\tau(q) = \inf\{t\geq0\,:\,q(X_t) \leq v_t\}.
$$
The above definition of stopping strategies reflects the conclusion of the paragraph preceding (\ref{eq.game.main.def.2}) and implies that an agent's stopping rule is of ``threshold" type: she stops whenever the continuous control of the other agent (i.e. the quoted price) becomes sufficiently attractive. One can uniquely define any $\tau$ in $\mathcal{T}^a$ or in $\mathcal{T}^b$ by specifying the associated threshold $v$. We will, therefore, sometimes, use the notation $\tau^v$. Clearly, for any fixed $q\in\mathcal{A}$, any $\FF^X$-stopping time can be represented as $\tau^v(q)$, with an appropriately chosen $v$.
The sets of admissible posted prices, $\mathcal{A}^{a/b}$, are chosen to ensure that the rate of the order flow from non-strategic agents, $c(p^a,p^b,x)$ (given in (\ref{def.cpapbx})), is bounded away from zero.\footnote{This is a technical condition needed to ensure that the game does not last too long, and, as a result, the agents' value functions do not deviate too far from $X$ (see, e.g., (\ref{eq.cgeqcl}) and the proof of Lemma \ref{lemma.Cfromx}).}
%Thus, we do not, effectively, restrict the set of agents' stopping startegies by the above specification 
%The game ends at the time of the first trade: when the first market order is executed. The agent, or agents, who participated in that trade receive/pay the price the trade was struck at. If the trade ended via an external market buy (resp. sell) order, so that only one of the agents actually traded, at some time $\tau$, and $\xi_\tau=p^0_\tau-X_\tau$, then the short (resp. long) agent's remaining inventory is marked to $\lceil X_\tau+\alpha\xi_\tau \rceil$ (resp. $\lfloor X_\tau+\alpha\xi_\tau\rfloor$). The choice of $X_\tau+\alpha\xi_\tau$ (where $0<\alpha<1$ is the adverse selection parameter) is based on the following. The observed difference between the actual fundamental price $p^0$ and the mean fundamental price $X$ is informative (to a degree controlled by $\alpha$) for the subsequent estimate of the mean fundamental price, or, in other words, the external order flow affects linearly the agents' estimate of the mean fundamental price. Taking the ceiling and floors is meant to approximate the 'next round' $p^a$ and $p^b$.
%Note that these sets might be empty for all $c_l>0$, if the support of $\xi$ is too narrow, hence, we need to make the following assumption.
%\begin{ass}\label{ass.xiandadmcontrols}
%There exists $c_l>0$ so that the corresponding $\mathcal{A}^a$, $\mathcal{A}^b$ are non-empty. We fix such $c_l$ for the rest of the paper.
%\end{ass}

To ensure that the optimal continuous control of each agent is well-behaved, we make the following assumption (compare to Assumptions 2--5 in \cite{GaydukNadtochiy2}).
\begin{ass}\label{ass.Foverfmono}
The distribution of $\xi$ has density $f$, bounded from above by a constant $C_f>0$, whose support is convex and contained in $[-C_0,C_0]$, for some constant $C_0>0$. Furthermore, in the interior of the support, $f$ is continuous, $(1-F)/f$ is non-increasing, and $F/f$ is non-decreasing.
%$$
%\frac{F^+}{f}\text{ is non-increasing, and } \frac{F}{f}\text{ is non-decreasing, in }\operatorname{int}\supp{f},
%$$ 
%where $\operatorname{int}$ denotes the interior of a set.
\end{ass}
\begin{remark}
Assumption \ref{ass.Foverfmono} can be relaxed so that $(1-F)/f$ is only required to be non-increasing in the interior of $\text{supp}(f)\cap\RR_+$, and $F/f$ is non-decreasing in the interior of $\text{supp}(f)\cap\RR_-$. This would require a minor change in the modeling assumptions: namely, an external buy order can only arrive at a positive jump of $p^0$, and an external sell can only arrive at a negative jump of $p^0$ (in addition to $p^0\geq p^a$ and $p^0\leq p^b$, respectively). The latter setting is used in \cite{GaydukNadtochiy2}, but we choose to avoid it here, in order to streamline the notation. In addition, as discussed in the remark following Assumption 5 in \cite{GaydukNadtochiy2}, a sufficient condition for Assumption \ref{ass.Foverfmono} can be stated in terms of the log-concavity of $f$.
\end{remark}

As discussed in the introduction, games in the form (\ref{eq.game.main.def})--(\ref{eq.UL.def}) cannot be analyzed directly using standard methods.
Thus, our first goal is to find a more convenient system of equations describing a sub-class of equilibria of (\ref{eq.game.main.def.2}). To this end, we recall the notion of a value function, which represents the supremum or infimum (whichever one is appropriate) of the objective value of an agent over all admissible controls. It is easy to deduce that this value depends only on the initial condition $X_0$. Let us denote by $\bar{V}^a(x)$ and $\bar{V}^b(x)$ the value functions of the seller and the buyer, respectively, given $X_0=x$. It is natural to expect that, in equilibrium, $q^a(x)\geq \bar{V}^a(x)$ and $q^b(x)\leq \bar{V}^b(x)$: the optimal prices at which the agents are willing to trade immediately should not be worse than the execution price they expect to receive if they act optimally. Since the prices take integer values, we conclude: $q^a(x)\geq \lceil\bar{V}^a(x)\rceil$ and $q^b(x)\leq \lfloor\bar{V}^b(x)\rfloor$. Then, it is suboptimal for any agent to stop unless either $\bar{V}^b(X_t)\geq \lceil\bar{V}^a(X_t)\rceil$ or $\bar{V}^a(X_t)\leq \lfloor\bar{V}^b(X_t)\rfloor$. Notice that each of the latter two inequalities describes the same event: the interval $[\bar{V}^a(X_t),\bar{V}^b(X_t)]$ contains at least one integer.
Then, it is optimal for each strategic agent to trade with the other one at the price given by any integer in this interval, and it is suboptimal for at least one of them to trade at any other price. If such integer is unique, we obtain the unique threshold price such that the agents stop (simultaneously) when their value functions reach this threshold. This heuristic discussion motivates the search for an equilibrium of (\ref{eq.game.main.def.2}) via the auxiliary system of coupled control problems: find $p^a\in\mathcal{A}^a$, $p^b\in\mathcal{A}^b$, and measurable $\bar{V}^a$ and $\bar{V}^b$, such that

\begin{equation}\label{eq.2agentjointproblem}
\begin{cases}
\bar{V}^a(x)=\sup_{p\in\mathcal{A}^a,\,\tau} J^a\left(x,\tau,p,p^{b},\bar{V}^b\right)= \sup_\tau J^a\left(x,\tau,p^{a},p^{b},\bar{V}^b\right),\\
\bar{V}^b(x)=\inf_{p\in\mathcal{A}^b,\,\tau} J^b\left(x,\tau,p^{a},p, \bar{V}^a\right)= \inf_\tau J^b\left(x,\tau,p^{a},p^{b}, \bar{V}^a\right),
\end{cases}
\end{equation}
where $\tau$ changes over all $\FF^X$-stopping times,
%and the classes of admissible control %{\color{red} and value?}%
%functions, $\mathcal{A}^a$, $\mathcal{A}^b$, are defined below, in Assumption \ref{ass.xiandadmcontrols}, and the optimal $p^{a*}\in \mathcal{A}^a$ and $p^{b*}\in \mathcal{A}^b$ satisfy
%$$
%\sup_{p^a\in\mathcal{A}^a}\sup_\tau J^a\left(x,\tau,p^a,p^{b*},\bar{V}^b\right)
%= \sup_\tau J^a\left(x,\tau,p^{a*},p^{b*},\bar{V}^b\right),
%$$
%$$
%\inf_{p^b\in\mathcal{A}^b}\inf_\tau J^b\left(x,\tau,p^{a*},p^b, \bar{V}^a\right)
%= \inf_\tau J^b\left(x,\tau,p^{a*},p^{b*}, \bar{V}^a\right),
%$$
and
\begin{equation}
\begin{split}\label{def.JagivenJb}
J^a\left(x,\tau,p^a,p^{b},v\right)=\EE^x\Big[\int_0^\tau \exp\left(-\int_0^t c(p^a(X_s),p^b(X_s),X_s)\text{d}s\right) g^a(p^a(X_t),p^b(X_t),X_t)\text{d}t \\ +\exp\left(-\int_0^\tau c(p^a(X_s),p^b(X_s),X_s)\text{d}s\right)\lfloor v(X_\tau)\rfloor\Big],
\end{split}
\end{equation}
\begin{equation}\label{def.JbgivenJa}
\begin{split}
J^b\left(x,\tau,p^a,p^{b},v\right)=\EE^x\Big[\int_0^\tau \exp\left(-\int_0^t c(p^a(X_s),p^b(X_s),X_s)\text{d}s\right) g^b(p^a(X_t),p^b(X_t),X_t)\text{d}t \\+\exp\left(-\int_0^\tau c(p^a(X_s),p^b(X_s),X_s)\text{d}s\right)\lceil v(X_\tau)\rceil\Big],
\end{split}
\end{equation}
with $\EE^x[\cdot]=\EE\left[\cdot\vert X_0=x\right]$.

\begin{proposition}\label{prop:equivEquil}
Assume that $(p^a,p^b,\bar{V}^a,\bar{V}^b)$, solve (\ref{eq.2agentjointproblem}), and that $(\bar{V}^a,\bar{V}^b)$ are continuous and increasing. Then, $(p^a, \lceil \bar{V}^a \rceil,\tau^{\bar{V}^a})$ and $(p^b,\lfloor \bar{V}^b\rfloor,\tau^{\bar{V}^b})$ form an equilibrium of (\ref{eq.game.main.def.2}).
\end{proposition}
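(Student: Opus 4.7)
The plan is to verify that neither player gains from a unilateral deviation, focusing on the seller since the buyer's case is symmetric. Fix the buyer at $(p^b, \lfloor \bar{V}^b\rfloor, \tau^{\bar{V}^b})$, let the seller choose an arbitrary $(p, q, \tau^v) \in \mathcal{A}^a \times \mathcal{A} \times \mathcal{T}^a$, and write $\tau^a = \tau^v(\lfloor \bar{V}^b\rfloor)$, $\tau^b = \tau^{\bar{V}^b}(q) = \inf\{t : q(X_t) \leq \bar{V}^b(X_t)\}$, and $T = \tau^a \wedge \tau^b$. The goal is to show the seller's objective is bounded above by $\bar{V}^a(x)$ and that this bound is attained by $(p^a, \lceil \bar{V}^a\rceil, \tau^{\bar{V}^a})$.

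For the upper bound, I exploit the interaction between the integer constraint on $q$ and the threshold definition of the opponent's stopping rule. On the event $\{\tau^a \leq \tau^b\}$ the seller's terminal reward is $q^b_{\tau^a} = \lfloor \bar{V}^b(X_{\tau^a})\rfloor$ by definition of the buyer's quote. On $\{\tau^a > \tau^b\}$ the terminal reward is $q(X_{\tau^b})$; since $q$ is $\ZZ$-valued and $\tau^b$ is defined with threshold $\bar{V}^b$, we have $q(X_{\tau^b}) \leq \bar{V}^b(X_{\tau^b})$, hence $q(X_{\tau^b}) \leq \lfloor \bar{V}^b(X_{\tau^b})\rfloor$. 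Combining, the seller's expected payoff is dominated by $J^a(x, T, p, p^b, \bar{V}^b)$, which is in turn at most $\sup_{p\in\mathcal{A}^a,\,\tau} J^a(x, \tau, p, p^b, \bar{V}^b) = \bar{V}^a(x)$ by the first equation of (\ref{eq.2agentjointproblem}).

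For attainability, the proposed profile makes the two stopping times coincide: both $\tau^{\bar{V}^b}(\lceil \bar{V}^a\rceil) = \inf\{t : \lceil \bar{V}^a(X_t)\rceil \leq \bar{V}^b(X_t)\}$ and $\tau^{\bar{V}^a}(\lfloor \bar{V}^b\rfloor) = \inf\{t : \lfloor \bar{V}^b(X_t)\rfloor \geq \bar{V}^a(X_t)\}$ equal the first time the interval $[\bar{V}^a(X_t), \bar{V}^b(X_t)]$ contains an integer. Denote this common time by $T^*$; the seller's payoff under the equilibrium profile is then exactly $J^a(x, T^*, p^a, p^b, \bar{V}^b)$. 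Since taking $\tau = 0$ in (\ref{eq.2agentjointproblem}) yields $\bar{V}^a \geq \lfloor \bar{V}^b\rfloor$, the set $\{\bar{V}^a \leq \lfloor \bar{V}^b\rfloor\}$ coincides with the stopping region of the seller's optimal-stopping problem, and $T^*$ is its first hitting time. Standard optimal-stopping theory, applicable because $\bar{V}^a$ is continuous and $\lfloor \bar{V}^b\rfloor$ is upper semi-continuous, then gives $J^a(x, T^*, p^a, p^b, \bar{V}^b) = \bar{V}^a(x)$, matching the upper bound.

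The buyer's side is carried out symmetrically: on $\{\tau^a < \tau^b\}$ the integer constraint combined with the threshold $\bar{V}^a$ gives $q(X_{\tau^a}) \geq \lceil \bar{V}^a(X_{\tau^a})\rceil$, producing the lower bound $J^b(x, T, p^a, p, \bar{V}^a) \geq \bar{V}^b(x)$, which is attained at the equilibrium profile by the analogous argument. The main subtle step I expect is the upper-bound argument on $\{\tau^a > \tau^b\}$: converting the deviator's possibly non-optimal integer quote $q(X_{\tau^b})$ into the clean floor $\lfloor \bar{V}^b(X_{\tau^b})\rfloor$ is what links the game formulation (\ref{eq.game.main.def.2}) to the single-agent optimal-stopping formulation (\ref{eq.2agentjointproblem}), and it relies essentially on the positive tick size.
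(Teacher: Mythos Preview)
Your proof is correct and follows essentially the same route as the paper's: bound the deviating seller's payoff above by $J^a(x,T,p,p^b,\bar{V}^b)\le\bar{V}^a(x)$, then show the prescribed strategies make both players stop simultaneously at the first time the interval $[\bar{V}^a(X_t),\bar{V}^b(X_t)]$ contains an integer, and invoke standard optimal-stopping theory for attainability. Your treatment of the upper bound is in fact more explicit than the paper's---you spell out the case split $\{\tau^a\le\tau^b\}$ versus $\{\tau^a>\tau^b\}$ and use the integer constraint on $q$ to convert $q(X_{\tau^b})\le\bar{V}^b(X_{\tau^b})$ into $q(X_{\tau^b})\le\lfloor\bar{V}^b(X_{\tau^b})\rfloor$, whereas the paper compresses this into a single sentence.
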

\begin{proof}
The proof consists of elementary verification that the proposed strategies are optimal for the agents in (\ref{eq.game.main.def.2}).
Consider the strategic seller. From the fact that $q^b = \lfloor\bar{V}^b\rfloor\leq \bar{V}^a$, we easily deduce that the maximum objective value of the seller in (\ref{eq.game.main.def.2}) does not exceed $\bar{V}^a$, provided that the other agent uses the strategy given in the statement of the proposition. As $\bar{V}^b\leq \lceil\bar{V}^a\rceil = p^a$, it is easy to see that the agents using the prescribed strategies, in (\ref{eq.game.main.def.2}), both stop at the stopping time $\tau^*:=\inf\{t\geq0\,:\,\bar{V}^a(X_t) = \lfloor \bar{V}^b(X_t)\rfloor\}$, and that the resulting objective value of the seller in (\ref{eq.game.main.def.2}) coincides with the associated value of (\ref{def.JagivenJb}). The continuity and monotonicity of $\bar{V}^a$ and $\bar{V}^b$ imply: $\bar{V}^a(X_{\tau^*}) = \lfloor \bar{V}^b(X_{\tau^*})\rfloor$, whenever $\tau^*<\infty$. Then, the standard arguments in optimal stopping theory yield that the associated value of (\ref{def.JagivenJb}) is optimal and, hence, coincides with $\bar{V}^a$. Similar arguments apply to the strategic buyer.
\qed
\end{proof}

The above proposition shows that the problem of constructing an equilibrium of (\ref{eq.game.main.def.2}) reduces to solving (\ref{eq.2agentjointproblem}). The latter is a system of two Markovian control-stopping optimization problems, coupled through continuous controls and stopping barriers, with each barrier given by a discontinuous function (i.e. floor or ceiling) of the other agent's value function. Even in the present one-dimensional Brownian setting, the associated fixed-point problem lacks the desired continuity or monotonicity properties, rendering it intractable by general methods.
Nevertheless, the following theorem establishes the existence of a solution to (\ref{eq.2agentjointproblem}). In addition, its proof, presented in Sections \ref{sec.individualvalfuncprops} and \ref{sec.responsepricesandequilibrium}, and the discussions in Section \ref{se:example} shed more light on the structure of the solution.

\begin{theorem}\label{thm:main}
For any $C_f,\,C_0>0$, any c.d.f. $F$, satisfying Assumption \ref{ass.Foverfmono}, and any $c_l,\,\lambda>0$, $\alpha\in(0,1)$, there exists $\bar{\sigma}>0$, such that, for any $\sigma\geq \bar{\sigma}$, there exists a solution $(p^a,p^b,\bar{V}^a,\bar{V}^b)$ to (\ref{eq.2agentjointproblem}), with $(\bar{V}^a,\bar{V}^b)$ being continuous and increasing.\footnote{It is worth mentioning that the proof of Theorem \ref{thm:main}, in fact, shows the existence of a Markov perfect (or, sub-game perfect) equilibrium, in the sense that, if the agents re-evaluate their objectives at any intermediate time, the proposed strategies would still form an equilibrium.}
\end{theorem}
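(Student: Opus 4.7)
The plan is a fixed-point argument on a compact convex family of pairs $(\bar V^a,\bar V^b)$ of candidate value functions, set up so that the discontinuities introduced by the floor and ceiling are smoothed out by expectation against the Brownian state $X$.

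\textbf{Step 1 (the fixed-point set).} I would introduce a class $\mathcal K$ of continuous, strictly increasing functions $v:\RR\to\RR$ such that $v(x)-x$ is uniformly bounded (the bound comes from Assumption~\ref{ass.Foverfmono} together with the lower bound $c_l$ on the arrival rate, via an a priori estimate of the form $|\bar V^{a/b}(x)-x|\le C$), and such that $v$ is Lipschitz with derivative bounded uniformly \emph{below} by some $\kappa>0$. The lower bound is the crucial ``diffuseness'' requirement: it guarantees that for every $t>0$ and every integer $k$ the random variable $v(X_t)$ has a density at $k$, so that $\lfloor v(X_t)\rfloor$ and $\lceil v(X_t)\rceil$ are almost-surely continuous functionals of $v$ under the law of $X$. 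By Arzel\`a--Ascoli, $\mathcal K$ is compact in the topology of uniform convergence on compact sets.

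\textbf{Step 2 (the best-response map).} Given $(\bar V^a,\bar V^b)\in\mathcal K\times\mathcal K$ I would construct the best response $(\tilde V^a,\tilde V^b)$ by solving, for each agent separately, the Markovian control-stopping problem with barriers $\lfloor \bar V^b\rfloor$ and $\lceil \bar V^a\rceil$ appearing in (\ref{def.JagivenJb})--(\ref{def.JbgivenJa}). Under Assumption \ref{ass.Foverfmono} the monotonicity of $(1-F)/f$ and $F/f$ delivers a unique (up to ties) feedback-optimal integer quote $p^a(x)\in\mathcal A^a$, $p^b(x)\in\mathcal A^b$ via a one-shot concavity argument, which is what collapses the control problem to a pure optimal-stopping problem with floor/ceiling barrier and a well-behaved instantaneous reward. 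I would then invoke the results of Section \ref{sec.individualvalfuncprops} to deduce that $\tilde V^{a/b}$ are continuous, increasing, satisfy the same a priori bounds as $\mathcal K$, and — this is the key point — have derivative bounded below by $\kappa$, provided $\sigma$ is large enough. The threshold $\bar\sigma$ is dictated exactly by how large the diffusion must be in order to force the value function of an optimal-stopping problem on $\RR$ to have derivative bounded uniformly below by $\kappa$. This realises items (i)--(ii) of the authors' road-map: the restriction is preserved by the best-response map, and it causes no loss of optimality because the unrestricted optimizers already lie in $\mathcal K$ when $\sigma\ge\bar\sigma$.

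\textbf{Step 3 (continuity and fixed point).} The main obstacle is the continuity of $(\bar V^a,\bar V^b)\mapsto(\tilde V^a,\tilde V^b)$ on $\mathcal K\times\mathcal K$, because the floor and ceiling are discontinuous functionals of their argument. This is where the lower bound on $v'$ pays off. If $\bar V^b_n\to\bar V^b$ uniformly on compacts inside $\mathcal K$, then the level sets $\{x:\bar V^b(x)=k\}$ consist of isolated points and the event $\{\bar V^b_n(X_t)\in\ZZ\}$ has negligible probability uniformly in $n$, so $\lfloor\bar V^b_n(X_t)\rfloor\to\lfloor\bar V^b(X_t)\rfloor$ almost surely. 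Combining dominated convergence with the standard continuity of the value of an optimal-stopping problem with respect to its barrier (in an $L^1$ sense against the hitting distributions of $X$), I would obtain pointwise convergence $\tilde V^{a/b}_n\to\tilde V^{a/b}$ and then upgrade it to uniform-on-compacts convergence by equicontinuity in $\mathcal K$. Schauder's theorem applied to the resulting continuous self-map on the compact convex set $\mathcal K\times\mathcal K$ (intersected with the natural constraint $\bar V^a\le\bar V^b$) produces the desired fixed point, and Proposition \ref{prop:equivEquil} converts it into an equilibrium of (\ref{eq.game.main.def.2}).

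The hardest step, by a wide margin, is Step 2's uniform lower bound on $(\tilde V^{a/b})'$ for $\sigma$ large: this is what both secures the self-map property and powers the continuity argument of Step 3, and is essentially the novel mathematical ingredient the authors attribute to Section \ref{sec.individualvalfuncprops}.
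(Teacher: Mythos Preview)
Your overall architecture matches the paper's: a Schauder fixed point on a compact convex set of Lipschitz, strictly increasing candidate value functions, with the derivative lower bound being the key invariant that both ensures the self-map property and defuses the floor/ceiling discontinuity. You also correctly identify the derivative lower bound (your Step~2) as the heart of the matter, which is exactly Proposition~\ref{prop.Vmonotonicity}.

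There is, however, a genuine gap in your Step~3 continuity argument. Your reasoning is: since $v'\ge\kappa$, the level sets $\{v=k\}$ are isolated, so for \emph{fixed} $t$ the event $\{v(X_t)\in\ZZ\}$ is null, hence $\lfloor v_n(X_t)\rfloor\to\lfloor v(X_t)\rfloor$ a.s., and then ``standard continuity of the optimal-stopping value in the barrier'' finishes. The trouble is that the optimal stopping time $\tau^*$ is not a fixed time; it is precisely the first time the value function touches $\lfloor v(X_\cdot)\rfloor$, and at that instant $v(X_{\tau^*})$ may well sit exactly at (or arbitrarily close to) an integer. The law of $X_{\tau^*}$ is therefore \emph{not} diffuse with respect to the integer level sets of $v$, and the a.s.\ convergence of floors fails exactly where it is needed. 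There is no ``standard'' result that rescues this. The paper's Proposition~\ref{prop.VcontinuousinJ} handles it by a different, constructive argument: given a $\delta$-optimal $\tau_2$ for $v^2$, one builds $\tau_1\ge\tau_2$ for $v^1$ by waiting after $\tau_2$ until $X$ rises by $\delta/\kappa$ (so that $\lfloor v^1(X)\rfloor$ catches up to $\lfloor v^2(X_{\tau_2})\rfloor$), with a stop-loss if $X$ drops by $1$. Brownian hitting estimates control both the probability of the stop-loss and the time cost of waiting, uniformly in $\tau_2$. This is the precise sense in which ``diffuseness'' is exploited, and your dominated-convergence sketch does not capture it.

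Two further omissions are worth flagging. First, the paper's compact set $A_0(w)$ carries an additional \emph{1-shift} constraint $v(x+1)=v(x)+1$, which is not cosmetic: the proofs of the derivative bounds (Propositions~\ref{prop.fabderbounds} and~\ref{prop.Vmonotonicity}) reduce to a single period via this quasi-periodicity, and without it the estimates on $(f^{a/b})'$ and on $\phi'\psi'/W$ would not be uniform in~$x$. Second, the best-response map factors through the feedback prices $P^{a/b}(v)$, which are integer-valued step functions of~$x$; continuity of the composite map also requires continuity of $v\mapsto P^{a/b}(v)$ in an $\mathbb L^1$ sense (Lemma~\ref{lemma.PofVcontinuous}) and of $(p^a,p^b)\mapsto V^{a/b}_0$ (Lemma~\ref{lemma.Vofpcontinuous}), neither of which is automatic. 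Finally, the paper does \emph{not} intersect the fixed-point domain with $\{\bar V^a\le\bar V^b\}$; that ordering is an output, not an input, and imposing it would complicate convexity unnecessarily.
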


Let us outline the proof of Theorem \ref{thm:main}. First, we notice that the RBSDE approach of \cite{GaydukNadtochiy2} cannot be used in this case. Indeed, the domain of reflection for the associated system of RBSDEs is non-convex and discontinuous (see Section \ref{se:example}), and, to date, no existence results are available for such systems. Thus, we pursue a direct approach, aiming to show that the mapping $(v^a,v^b)\mapsto (J^a,J^b)\mapsto (\bar{V}^a,\bar{V}^b)$, where the first mapping is via (\ref{def.JagivenJb})--(\ref{def.JbgivenJa}) and the second is via (\ref{eq.2agentjointproblem}), has a fixed point.
As mentioned in the introduction, the main challenge of this approach is due to the presence of the ``floor" and ``ceiling" functions in the definition of $J^a$ and $J^b$, which introduces discontinuity. We propose the following general methodology for addressing this difficulty (which can be applied in many problems with similar types of discontinuities). This approach is based on the observation that the ``floor" and ``ceiling" operators can be made continuous, in the appropriate sense, if restricted to a set of functions that are ``sufficiently noisy" (see \cite{Delarue2}, \cite{NadShk}, for another application of this observation). Notice that $\lfloor v(X) \rfloor$ is the instantaneous reward process for the optimal stopping problem of the seller. Denote the corresponding optimal stopping time by $\tau$. Next, consider a small perturbation $\tilde{v}$ of $v$ in the uniform topology.
%If $v(X_{\tau})$ is not an integer, then, $\tilde{v}(X_{\tau})\rightarrow v(X_{\tau})$ as the perturbation gets small. Thus, the discontinuities arise only on those random outcomes on which $v(X_{\tau})$ is an integer.
Assume that the process $v(X)$ is sufficiently noisy, in the sense that, for any $\varepsilon>0$, there exists a stopping time $\tau_{\varepsilon}$, such that $v(X_{\tau_{\varepsilon}})\geq v(X_{\tau})+\varepsilon$ and $\tau_{\varepsilon}\downarrow\tau$ as $\varepsilon\downarrow0$ (e.g., Brownian motion satisfies this property). Then, $\lfloor\tilde{v}(X_{\tau_{\varepsilon}})\rfloor\geq\lfloor v(X_{\tau})\rfloor$ holds, provided $\|\tilde{v}-v\|<\varepsilon$. Since the running rewards are continuous in time and $\tau_{\varepsilon}$ is close to $\tau$, we can find a stopping time for the perturbed problem that produces the objective value close to the optimal objective value of the unperturbed problem. Interchanging the two problems, we conclude that the value functions of the two problems are close, which yields the desired continuity of the mapping $(v^a,v^b)\mapsto (J^a,J^b)\mapsto (\bar{V}^a,\bar{V}^b)$.
Then, restricting the latter mapping to a compact, on which $v^{a/b}(X)$ are guaranteed to be sufficiently noisy, we can apply Schauder's theorem to obtain a fixed point.

We emphasize that this strategy is quite general and can be applied outside the scope of the present problem. However, in order to apply Schauder's theorem, one needs to show that the chosen compact is preserved by the mapping: in particular, that sufficiently noisy $v^{a/b}(X)$ are mapped into sufficiently noisy $\bar{V}^{a/b}(X)$. The proof of the latter may depend on the structure of a problem at hand.
In the present case, we provide such a proof in Section \ref{sec.individualvalfuncprops}, by utilizing the geometric approach of \cite{DayanikKaratzas}, \cite{Dayanik}, which can be applied to rather irregular linear diffusion stopping problems. It allows us to show that $\bar{V}^{a/b}(\cdot)$ have derivatives bounded away from zero, provided $(v^a(\cdot),v^b(\cdot))$ satisfy this property and the volatility $\sigma$ of the underlying process $X$ is sufficiently large (Proposition \ref{prop.Vmonotonicity}).\footnote{As follows from Proposition  \ref{prop.Vmonotonicity}, $\sigma$ has to be sufficiently large to ensure that $w(\sigma)<1$, where $w$ is defined in Proposition \ref{prop.fabderbounds}.} The latter, in turn, implies that $\bar{V}^{a/b}(X)$ are sufficiently noisy, in the above sense, and Proposition \ref{prop.VcontinuousinJ} uses this conclusion to prove the continuity of $(v^a,v^b)\mapsto (J^a,J^b)\mapsto (\bar{V}^a,\bar{V}^b)$.

Most of Section \ref{sec.responsepricesandequilibrium} is concerned with the characterization of the optimal continuous controls in a feedback form (Proposition \ref{prop.Va0=Vatrue}) and showing the regularity of the feedback functionals. Typically, such characterization is obtained via PDE or BSDE methods, but the latter cannot be applied in the present case (i.e. the well-posedness of the associated equations is not known) due to irregularity of the stopping reward processes. The proof of Theorem \ref{thm:main} is concluded by combining Proposition \ref{prop.Va0=Vatrue} and Theorem \ref{thm.fixedpointexistence}.

%%%%%%%%%%%%%%%%%%%%%%%%%%%%%%%%%%%%%%

\section{Agents' optimal stopping problems}
\label{sec.individualvalfuncprops}

In this section, we investigate the properties of an individual agent's value function, given that the continuous controls of both agents, and the other agent's value function, are fixed. In other words, we consider the value function of an individual optimal stopping problem.
We first establish its basic relative boundedness and quasi-periodicity properties. Then, we recall the analytic machinery of second order ODEs associated with linear diffusions, which, together with the geometric approach to optimal stopping of \cite{DayanikKaratzas}, \cite{Dayanik}, allows us to establish a sufficiently strong monotonicity of the value function of an agent's optimal stopping problem. Finally, the established monotonicity, relative boundedness, and quasi-periodicity, allow us prove the continuity of the value function with respect to the value function of the other agent (for a fixed continuous control), which is the main result of this section, stated in Proposition \ref{prop.VcontinuousinJ}.
%In the subsequent section, we use these results to establish the existence of a solution to the target fixed-point problem, and, in turn, construct an equilibrium of the associated control-stopping game.

%First we introduce the notion of admissible prices. Unfortunately, we have to constrain agents' choice a bit a priori, namely, we have to assume agents bid and ask prices are always such that the order execution rate at either side is bounded away from zero. It is necessary to have this undegenerate discounting for our analytical machinery to work, and we couldn't find a way to obtain this property endogenously in sufficient generality. In all the realistic examples we investigated this constraint is not binding. More specifically, define
%$$
%\mathcal{A}^a_0=\{ x| 1-F(x)\ge \frac{c_l}{2\lambda} \},\quad
%\mathcal{A}^b_0=\{ x| F(x)\ge \frac{c_l}{2\lambda} \}
%$$
%$$
%\mathcal{A}^a(x)=x+\mathcal{A}^a_0,\quad \mathcal{A}^b(x)=x+\mathcal{A}^b_0
%$$
%For any $c_l>0$ we define
%\begin{definition}\label{def.admissibleprices}
%The control $p^a(x)$ ($p^b(x)$) is admissible if it is
%a measurable function
%and
%\begin{equation*}
%p^a(x)\in \mathcal{A}^a(x)\cap\mathbb{Z},\,\,\forall x\quad
%\left(\text{resp.} p^b(x)\in \mathcal{A}^b(x)\cap\mathbb{Z},\,\,\forall x\right)
%\end{equation*}
%and there exists at least one $p^b$ (resp. $p^a$) with the above property such that $p^a\ge p^b$.
%\end{definition}

\subsection{Preliminary constructions}

In view of Assumption \ref{ass.Foverfmono} and the definition of $(\mathcal{A}^a,\mathcal{A}^b)$, it is no loss of generality to assume that $p^a\in\mathcal{A}^a$ and $p^b\in\mathcal{A}^b$ satisfy 
\begin{equation}\label{eq.C.def}
\left\Vert p^a(x)-x\right\Vert_{\infty}\le C,\quad\left\Vert p^b(x)-x\right\Vert_{\infty}\le C,
\end{equation}
with some fixed constant $C>0$ (depending only on $(c_l,\lambda,F)$), where $\Vert\cdot\Vert_\infty$ denotes $\mathbb{L}^\infty$ norm (i.e. the agents' maximum objective values would not change with such a restriction).
In addition, for any $p^a\in\mathcal{A}^a$ and $p^b\in\mathcal{A}^b$, we have:
\begin{equation}\label{eq.cgeqcl}
c(p^a(x),p^b(x),x)\ge c_l,\quad\forall x\in\RR.
\end{equation}
From the definition of $c$, and as $\lambda$ is fixed throughout the paper, we obtain
\begin{equation*}
c(p^a(x),p^b(x),x)\le c_u=2\lambda>0.
\end{equation*}
%We denote by $\mathcal{A}^a$ ($\mathcal{A}^b$) the sets of all admissible $p^a$ (resp. $p^b$). 
Another property we will frequently use is the following.
\begin{definition}\label{def.Cclosetox}
A measurable function $f:\RR\rightarrow\RR$ is ``\emph{$C$-close to $x$}", if
$$
\Vert f(x)-x\Vert_{\infty}\le C,\quad \forall\,x\in\RR,
$$
with the constant $C$ appearing in (\ref{eq.C.def}).
\end{definition}
Using the above property we define the notion of admissible barrier: i.e. function $v$, such that the reward function in the associated optimal stopping problem arises as a rounding of $v$.
\begin{definition}\label{def.admissiblebarriers}
A function $v:\RR\rightarrow\RR$ is an \emph{admissible barrier} if it is measurable and $C$-close to $x$.
\end{definition}
Next, we introduce the value functions of the optimal stopping problems faced by the agents:
\begin{equation}\label{def.VagivenJb}
V^a\left(x,p^a,p^{b},v\right):=\sup_{\tau}J^a\left(x,\tau,p^a,p^{b},v\right),\quad\forall\,x\in\RR,
\end{equation}
\begin{equation}\label{def.VbgivenJa}
V^b\left(x,p^{a},p^b,v\right):=\inf_{\tau}J^b\left(x,\tau,p^a,p^{b},v\right),\quad\forall\,x\in\RR,
\end{equation}
for $p^a\in\mathcal{A}^a$, $p^b\in\mathcal{A}^b$, and admissible barrier $v$.
It is easy to see these value functions are well-defined and finite.
%$C$-close to $x$ for some $C$ for all admissible controls $p^a$, $p^b$ and admissible barriers $v$. ??}
Note that $p^{a}$, $p^b$, $\bar{V}^a$, $\bar{V}^b$ and $x$ are measured in ticks, and only the relative measurements are interpretable. Hence the following ansatz for an equilibrium is natural:
\begin{equation}
\bar{V}^a(x+1)=\bar{V}^a(x)+1,\,\, \bar{V}^b(x+1)=\bar{V}^b(x)+1,\,\,
p^{a}(x+1)=p^{a}(x)+1,\,\, p^{b}(x+1)=p^{b}(x)+1.
\end{equation}
Let us introduce a special term for the above property.
\begin{definition}
We say a function $f:\RR\rightarrow\RR$ has ``\emph{1-shift property}" if
$$
f(x+1)=f(x)+1,\quad\forall x\in\mathbb{R}.
$$
\end{definition}

We will also need a version of the objective and the associated value function without floors and ceilings in the reward function, and with $x$ subtracted (this will be particularly important for certain approximation arguments in Section \ref{sec.responsepricesandequilibrium}). Hence, for any $p^a\in\mathcal{A}^a$, $p^b\in\mathcal{A}^b$, any stopping time $\tau$, and any admissible barrier $v$, we introduce:
\begin{multline}\label{def.Ja0}
J^a_0(x,\tau,p^a,p^b,v)=\\\EE^x\Big[\int_0^\tau \exp\left(-\int_0^t c(p^a(X_s),p^b(X_s),X_s)\text{d}s\right) \left(g^a(p(X_t),p^b(X_t),X_t)-c(p^a(X_t),p^b(X_t),X_t)X_t
\right)
\text{d}t \\ +\exp\left(-\int_0^\tau c(p^a(X_s),p^b(X_s),X_s)\text{d}s\right)\left(v(X_\tau)-X_\tau\right)\Big],
\end{multline}
\begin{multline}\label{def.Jb0}
J^b_0(x,\tau,p^a,p^b,v)=\\\EE^x\Big[\int_0^\tau \exp\left(-\int_0^t c(p^a(X_s),p^b(X_s),X_s)\text{d}s\right) \left(g^b(p^a(X_t),p^b(X_t),X_t)-c(p^a(X_t),p^b(X_t),X_t)X_t
\right)
\text{d}t \\ +\exp\left(-\int_0^\tau c(p^a(X_s),p^b(X_s),X_s)\text{d}s\right)\left(v(X_\tau)-X_\tau\right)\Big],
\end{multline}
\begin{equation}\label{def.Va0}
V^a_0(x,p^a,p^b,v)=\sup_{\tau} J^a_0(x,\tau,p^a,p^b,v),
\end{equation}
\begin{equation}\label{def.Vb0}
V^b_0(x,p^a,p^b,v)=\inf_{\tau} J^b_0(x,\tau,p^a,p^b,v).
\end{equation}

Ultimately, we aim to show that there exists a solution of (\ref{eq.2agentjointproblem}), $(p^a,p^b,\bar{V}^a,\bar{V}^b)$, such that $(\bar{V}^a,\bar{V}^b)$ satisfy the 1-shift property and are $C$-close to $x$.
An important first step, then, is to verify that these properties are preserved by the mappings $v\mapsto V^a\left(\cdot,p^a,p^{b},v\right), V^b\left(\cdot,p^a,p^{b},v\right)$.
The following lemma shows that the 1-shift property is preserved. Its proof follows easily after rewriting the objective in the form (\ref{eq.dexpformofobjective}).

\begin{lemma}\label{lemma.shiftproperty}
Assume that $p^a\in\mathcal{A}^a$, $p^b\in\mathcal{A}^b$, and an admissible barrier $v$, have 1-shift property.
Then, so do $V^a(\cdot,p^a,p^b,v)$ and $V^b(\cdot,p^a,p^b,v)$, while $V^a_0(\cdot,p^a,p^b,v^b)$ and $V^b_0(\cdot,p^a,p^b,v^a)$ are 1-periodic.
In addition, we have
\begin{equation*}
\begin{split}
c(x+1,p^a+1,p^b+1)=c(x,p^a,p^b),\\
\frac{g^{a/b}}{c}(x+1,p^a+1,p^b+1)=\frac{g^{a/b}}{c}(x,p^a,p^b)+1,\\
%\frac{g^b}{c}(x+1,p^a+1,p^b+1)=\frac{g^b}{c}(x,p^a,p^b)+1,
\end{split}
\end{equation*}
for all $x\in\RR$, which means that $c(\cdot,p^a(\cdot),p^b(\cdot))$ is 1-periodic and $(g^{a/b}/c)(\cdot,p^a(\cdot),p^b(\cdot))$ have the 1-shift property.
\end{lemma}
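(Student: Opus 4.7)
The plan is to reduce everything to two elementary identities at the level of the integrands, and then to exploit the coupling $X^{x+1}_t=X^x_t+1$ under a common Brownian path, which lets one compare objectives starting from $x$ and from $x+1$ pathwise.

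First I would verify the two claimed identities for $c$ and $g^{a/b}/c$ by direct calculation from (\ref{def.cpapbx})--(\ref{def.Fbpbx}). For $c$, a one-line change of variables in $F$ gives $c(x+1,p^a+1,p^b+1)=c(x,p^a,p^b)$. For $g^a$, the key observation is that
\begin{equation*}
\mathcal F^b(p^b+1,x+1)=\int_{-\infty}^{p^b-x}\lfloor x+\alpha y\rfloor\,dF(y)+F(p^b-x)=\mathcal F^b(p^b,x)+F(p^b-x),
\end{equation*}
and combining this with $(p^a+1)(1-F(p^a-x))=p^a(1-F(p^a-x))+(1-F(p^a-x))$ one obtains $g^a(p^a+1,p^b+1,x+1)=g^a(p^a,p^b,x)+c(p^a,p^b,x)$, i.e.\ $(g^a/c)(x+1,p^a+1,p^b+1)=(g^a/c)(x,p^a,p^b)+1$. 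The argument for $g^b$ is symmetric. Combined with the 1-shift property of $p^a,p^b$, these identities yield that, along any Brownian path, the process $c(p^a(X^{x+1}_t),p^b(X^{x+1}_t),X^{x+1}_t)$ coincides with the one starting at $x$, while $(g^{a/b}/c)(\,\cdot\,)$ along $X^{x+1}$ equals its counterpart along $X^x$ plus the constant $1$. This establishes the 1-periodicity of $c(\,\cdot\,)$ and the 1-shift property of $(g^{a/b}/c)(\,\cdot\,)$ asserted in the statement.

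Next I would pass from integrands to objectives. Writing $D_t:=\exp(-\int_0^t c_s\,ds)$, one rewrites (\ref{def.JagivenJb}) in the form (\ref{eq.dexpformofobjective}), namely $J^a(x,\tau,p^a,p^b,v)=\EE^x\bigl[-\int_0^\tau (g^a/c)(X_t)\,dD_t+D_\tau\lfloor v(X_\tau)\rfloor\bigr]$. Coupling $X^{x+1}_t=X^x_t+1$, the previous step shows that $D$ is unchanged by the shift, and so
\begin{equation*}
J^a(x+1,\tau,p^a,p^b,v)-J^a(x,\tau,p^a,p^b,v)=\EE^x\bigl[-\textstyle\int_0^\tau 1\cdot dD_t+D_\tau\bigr]=\EE^x[D_0-D_\tau+D_\tau]=1,
\end{equation*}
where I used $v(X^x_\tau+1)=v(X^x_\tau)+1$ (hence $\lfloor v(X^x_\tau+1)\rfloor=\lfloor v(X^x_\tau)\rfloor+1$) from the 1-shift of the admissible barrier. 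Taking the supremum over $\tau$ then gives $V^a(x+1)=V^a(x)+1$; the argument for $V^b$ is identical with $\inf$ and ceilings.

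For the auxiliary $V^a_0,V^b_0$, I would repeat the same pathwise comparison with the integrand $(g^a-cX)/c=g^a/c-X$ and terminal payoff $v(X_\tau)-X_\tau$. The $+1$ from the shift of $g^a/c$ is exactly cancelled by the $-1$ coming from $-X_t$, and the terminal $v(X_\tau)-X_\tau$ is visibly invariant under the joint shift; hence $J^a_0(x+1,\tau,\cdots)=J^a_0(x,\tau,\cdots)$, yielding the 1-periodicity of $V^a_0$ and, by symmetry, of $V^b_0$. The only real obstacle in this proof is notational/bookkeeping — making sure the shifts of $p^a,p^b,v$ and of the state $X$ are all accounted for simultaneously — and once the identities for $c$ and $g^{a/b}$ are in hand, the telescoping identity $-\int_0^\tau dD_t+D_\tau=1$ does all the work.
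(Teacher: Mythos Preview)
Your proposal is correct and follows essentially the same route the paper sketches: the paper merely says the proof ``follows easily after rewriting the objective in the form (\ref{eq.dexpformofobjective}),'' and you carry out exactly that rewriting, supplying the direct verification of the $c$ and $g^{a/b}/c$ identities from (\ref{def.cpapbx})--(\ref{def.Fbpbx}) and the telescoping computation $-\int_0^\tau dD_t+D_\tau=1$ under the coupling $X^{x+1}_t=X^x_t+1$. The only place one might pause is whether the same $\tau$ is admissible for both starting points, but since the coupling uses a common Brownian filtration this is immediate, as you implicitly use.
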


In what follows, we will often suppress the dependence on $p^a(x)$, $p^b(x)$ from notation to avoid clutter. Hence, we denote
\begin{equation}\label{def.cgagbetcgivenpab}
\begin{split}
c(x):=c_p(x):=c(p^a(x),p^b(x),x),&\\
g^{a}(x):=g^{a}_p(x):=g^a(p^a(x),p^b(x),x),&
\quad g^b(x):=g^b_p(x):=g^b(p^a(x),p^b(x),x),\\
\mathcal{F}^b(x):= \mathcal{F}^b_p(x):=\int_{-\infty}^{p^b(x)-x} \lfloor x+\alpha y\rfloor \text{d}F(y),&
\quad \mathcal{F}^a(x):=\mathcal{F}^a_p(x):=\int_{p^a(x)-x}^{\infty} \lceil x+\alpha y \rceil \text{d}F(y),
\end{split}
\end{equation}
where we use the subscript $p$ whenever we want to emphasize the dependence on $p^a$, $p^b$.
The next lemma shows that the ``$C$-close to $x$" property is preserved.

\begin{lemma}\label{lemma.Cfromx}
Assume that $v$ is an admissible barrier, and that $p^a\in\mathcal{A}^a$, $p^b\in\mathcal{A}^b$ are such that $g^{a/b}/c$ are $C$-close to $x$. 
Then, $V^a(\cdot,p^a,p^b,v^b)$ and $V^b(\cdot,p^a,p^b,v^a)$ are $C$-close to $x$, while $V^a_0(\cdot,p^a,p^b,v^b)$ and $V^b_0(\cdot,p^a,p^b,v^a)$ are absolutely bounded by $C$.
\end{lemma}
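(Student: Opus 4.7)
The plan is to base the argument on a single Dynkin-type identity and then bound everything by $C$ plus a harmless rounding error. Since $X_t=X_0+\sigma B_t$ is a martingale, integration by parts applied to $e^{-\int_0^{\cdot}c(X_s)ds}X_{\cdot}$ yields, for every bounded $\FF^X$-stopping time $\tau$,
\begin{equation}\label{eq.planned.key.identity}
\EE^x\!\left[\int_0^\tau e^{-\int_0^t c(X_s)ds}\,c(X_t)\,X_t\,dt+e^{-\int_0^\tau c(X_s)ds}\,X_\tau\right]=x.
\end{equation}
I would first establish \reff{eq.planned.key.identity} in the general case by approximating any stopping time $\tau$ by $\tau\wedge n$, passing to the limit, and using the bound $c(\cdot)\geq c_l>0$ from \reff{eq.cgeqcl}: since $\EE[|X_{\tau\wedge n}|]$ grows at most like $\sqrt{n}$ while the discount decays as $e^{-c_l n}$, both the boundary term and the tail of the $dt$-integral are dominated.

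Next, I would plug \reff{eq.planned.key.identity} into the definitions \reff{def.Ja0}--\reff{def.Jb0} to write, for any admissible stopping time $\tau$,
\begin{equation*}
J^a_0(x,\tau,p^a,p^b,v)=\EE^x\!\left[\int_0^\tau e^{-\int_0^t c\,ds}\,c(X_t)\!\left(\tfrac{g^a}{c}(X_t)-X_t\right)dt+e^{-\int_0^\tau c\,ds}\,\bigl(v(X_\tau)-X_\tau\bigr)\right].
\end{equation*}
By hypothesis, $|(g^a/c)(X_t)-X_t|\le C$ and, because $v$ is admissible, $|v(X_\tau)-X_\tau|\le C$. Hence
\begin{equation*}
|J^a_0(x,\tau,p^a,p^b,v)|\le C\cdot\EE^x\!\left[\int_0^\tau e^{-\int_0^t c\,ds}c(X_t)dt+e^{-\int_0^\tau c\,ds}\right]=C,
\end{equation*}
because the last expectation telescopes to $\EE^x[1-e^{-\int_0^\tau c\,ds}+e^{-\int_0^\tau c\,ds}]=1$. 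Taking $\sup_\tau$ (or $\inf_\tau$) gives $|V^a_0|,|V^b_0|\le C$, which is the absolute bound claimed in the lemma.

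To transfer the bound from $V^a_0,V^b_0$ to $V^a,V^b$, I would subtract the identity \reff{eq.planned.key.identity} from the integral definitions \reff{def.JagivenJb}--\reff{def.JbgivenJa} to get
\begin{equation*}
J^a(x,\tau,p^a,p^b,v)-x=J^a_0(x,\tau,p^a,p^b,v)+\EE^x\!\left[e^{-\int_0^\tau c\,ds}\bigl(\lfloor v(X_\tau)\rfloor-v(X_\tau)\bigr)\right],
\end{equation*}
and analogously for $J^b$ with $\lceil\cdot\rceil$. The rounding error is pointwise bounded by $1$, so $|J^a(x,\tau,\cdot)-x|\le C+1$ uniformly in $\tau$; taking suprema/infima yields that $V^a(\cdot,p^a,p^b,v)-x$ and $V^b(\cdot,p^a,p^b,v)-x$ are uniformly bounded, giving the claimed ``$C$-close to $x$'' property up to enlarging $C$ (which was only fixed up to a constant depending on $(c_l,\lambda,F)$ in \reff{eq.C.def}).

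The only real technical obstacle is the justification of \reff{eq.planned.key.identity} for unbounded stopping times, which is why the assumption $c\ge c_l>0$ from \reff{eq.Aa.Ab.def} is essential — without it, the discount factor would not decay fast enough to kill the linear-in-time growth of $\EE[|X_{\tau\wedge n}|]$ in the limit $n\to\infty$. Everything else reduces to this identity plus the elementary bound $|\lfloor y\rfloor-y|,|\lceil y\rceil-y|\le 1$.
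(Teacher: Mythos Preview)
Your approach is essentially the paper's: both rewrite the objective using the identity you call \reff{eq.planned.key.identity}, which the paper records implicitly as \reff{eq.dexpformofobjective} via the substitution $e^{-\int_0^t c\,ds}c(X_t)\,dt = d\bigl(-e^{-\int_0^t c\,ds}\bigr)$. Your justification of this identity for unbounded stopping times is more explicit than the paper's, which is fine.

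There is, however, a small but genuine gap in your final step. You end with $|V^a(x)-x|\le C+1$ and propose to absorb the extra $1$ into $C$. But the lemma is invoked later precisely to show that $\overline{\Phi}$ maps $A_0\times A_0$ into \emph{itself}, i.e.\ that the \emph{same} constant is preserved; enlarging $C$ at each application would break the Schauder argument. The fix is immediate from what you already have. For $V^a$ (a supremum), the rounding error $\lfloor v\rfloor - v$ is nonpositive, so your decomposition gives $J^a(x,\tau,\cdot)-x\le J^a_0(x,\tau,\cdot)\le C$ for every $\tau$, hence $V^a-x\le C$; for the lower bound take $\tau=\infty$, where the terminal (and hence the rounding) term vanishes, giving $V^a-x\ge J^a_0(x,\infty,\cdot)\ge -C$. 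The argument for $V^b$ is symmetric, using $\lceil v\rceil - v\ge 0$ and $\tau=\infty$ for the upper bound. This is exactly how the paper handles the two directions.
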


\begin{proof}
We show the claim for $V^a$, the other ones being analogous. From the definitions (\ref{def.VagivenJb}) and (\ref{def.JagivenJb}) we obtain:
\begin{multline}\label{eq.dexpformofobjective}
V^a(x,p^a,p^b,v^b)-x=\sup_{\tau}\EE^x\left[
\int_0^\tau \exp\left(-\int_0^t c(X_s)\text{d}s\right)g^a(X_t)\text{d}t
+\exp\left(-\int_0^\tau c(X_s)\text{d}s\right)\lfloor v^b\rfloor
\right]-x\\
=\sup_{\tau}\EE^x\left[
\int_0^\tau \left(\frac{g^a}{c}(x)-x\right)\, \text{d}\left(-\exp\left(-\int_0^t c(X_s)\text{d}s\right)\right)+\exp\left(-\int_0^\tau c(X_s)\text{d}s\right)\left(\lfloor v^b\rfloor-x\rfloor\right)
\right].
\end{multline}
To get the upper bound, note that the right hand side of the above is
\begin{equation*}
\le 
\sup_{\tau}\EE^x\left[
\int_0^\tau C \text{d}\left(-\exp\left(-\int_0^t c(X_s)\text{d}s\right)\right)+\exp\left(-\int_0^\tau c(X_s)\text{d}s\right)C
\right]=C,
\end{equation*}
where we made use of $\frac{g^a}{c}-x\le C$, $\lfloor v^b\rfloor -x\le v^b-x\le C$, by the assumption of the lemma.
To obtain a lower bound, note that the same expression also is 
\begin{equation*}
\ge \EE^x\left[
\int_0^\infty \frac{g^a}{c}(x)-x \text{d}\left(-\exp\left(-\int_0^t c(X_s)\text{d}s\right)\right) \right]\ge
\EE^x\left[
\int_0^\infty -C \text{d}\left(-\exp\left(-\int_0^t c(X_s)\text{d}s\right)\right) \right]\ge -C,
\end{equation*}
where we used $\tau=\infty$. The claim for $V^a_0$ and $V^b_0$ is even simpler and can be proven as in the first part of this lemma.
\qed
\end{proof}

In what follows, we analyze $V^a(x,p^a,p^b,v)$ more closely ($V^b(x,p^a,p^b,v)$ being analogous). Ultimately, in Subsection \ref{subse:mon.cont}, we establish its monotonicity in $x$ and continuity in $v$, under appropriate conditions. Throughout this analysis, we think of $p^a$ and $p^b$ as fixed functions of $x$, while we vary $x$ and $v^b$($v^a$).

\subsection{Representation of the value function}

In this subsection, we develop a convenient representation of the value function $V^a(x,p^a,p^b,v)$ (cf. Proposition \ref{prop.vabmcmcharacterization}), which, along with Proposition \ref{prop.fabderbounds}, is needed to prove the main results of Subsection \ref{subse:mon.cont}. We will make heavy use of the well-known connection between linear diffusions and second-order ODEs. Our discounting and running cost functions are a bit less regular (measurable and locally bounded, but not continuous) than is commonly assumed in the literature, hence, a modicum of care is required to make this connection rigorous.

First, as in \cite{IMC}, for any given $p^a\in\mathcal{A}^a$, $p^b\in\mathcal{A}^b$, we define
\begin{equation}\label{def.psiprob}
\psi(x):=\psi_p(x):=\begin{cases}
\EE^x\left[\exp\left(-\int_0^{\tau_0} c_p(X_s)\text{d}s\right)\right] ,\, x\le 0,\\
\EE^0\left[\exp\left(-\int_0^{\tau_x} c_p(X_s)\text{d}s\right)\right]^{-1} ,\, x>0,
\end{cases}
\end{equation}
and
\begin{equation}\label{def.phiprob}
\phi(x):=\phi_p(x):=\begin{cases}
\EE^x\left[\exp\left(-\int_0^{\tau_0} c_p(X_s)\text{d}s\right)\right] ,\, x> 0,\\
\EE^0\left[\exp\left(-\int_0^{\tau_x} c_p(X_s)\text{d}s\right)\right]^{-1} ,\, x\le0,
\end{cases}
\end{equation}
where $\tau_x$ is the first hitting time of $x\in\RR$ by the process $X$.
It is clear that $\psi(0)=\phi(0)=1$, $\psi$ is strictly increasing, and $\phi$ is strictly decreasing. The results from \cite{IMC} (and the absolute continuity of the killing measure of the diffusion, in the present case) imply that $f=\phi,\psi$ has right derivative, $f^+$, defined everywhere, and it satisfies
$$
\frac{2}{\sigma^2}\int_{(a,b]}c(x)f(x)\text{d}x=f^+(b)-f^+(a),
$$
for all $b>a$. Passing to the limit as $b\downarrow a$ and $a\uparrow b$ shows that $f^+$ is continuous. One can also show the following.
\begin{lemma}\label{lemma.rightdertoC1}
If $f$ is continuous and has continuous right derivative on $[a,b]$ then $f\in C^1(a,b)$.
\end{lemma}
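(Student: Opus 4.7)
The plan is to reduce the statement to a well-known one-sided-derivative variant of the mean value theorem, namely: a continuous function on a compact interval whose right derivative is identically zero must be constant. Granting this, the argument becomes short and routine.

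First I would introduce the auxiliary function
\begin{equation*}
F(x) := \int_a^x f^+(t)\,dt, \qquad x\in[a,b].
\end{equation*}
Since $f^+$ is assumed to be continuous on $[a,b]$, the fundamental theorem of calculus gives that $F$ is continuously differentiable with $F'(x) = f^+(x)$ on $(a,b)$; in particular its right derivative equals $f^+$ everywhere. Set $g(x):=f(x)-F(x)$. Then $g$ is continuous on $[a,b]$, and because the right derivative is an additive operation on functions admitting one, we get $g^+(x) = f^+(x) - f^+(x) = 0$ for every $x\in[a,b)$.

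The key step is the lemma: if $g$ is continuous on $[a,b]$ with $g^+\equiv 0$ on $[a,b)$, then $g$ is constant. I would prove this (or recall it) by the standard $\varepsilon$-perturbation trick. Fix $\varepsilon>0$ and set $h_\varepsilon(x):=g(x)+\varepsilon x$, so that $h_\varepsilon^+ \equiv \varepsilon > 0$ on $[a,b)$. Suppose, toward a contradiction, that $h_\varepsilon$ fails to be non-decreasing; choose $c<d$ with $h_\varepsilon(c)>h_\varepsilon(d)$, and let $x^* = \sup\{x\in[c,d]\,:\,h_\varepsilon(x)\ge h_\varepsilon(c)\}$. Continuity and $h_\varepsilon(d)<h_\varepsilon(c)$ imply $x^*<d$ and $h_\varepsilon(x^*)\ge h_\varepsilon(c)$, so that $h_\varepsilon(x^*+\delta)<h_\varepsilon(x^*)$ for all sufficiently small $\delta>0$; taking $\delta\downarrow 0$ forces $h_\varepsilon^+(x^*)\le 0$, contradicting $h_\varepsilon^+(x^*)=\varepsilon>0$. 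Hence $h_\varepsilon$ is non-decreasing, and letting $\varepsilon\downarrow 0$ shows $g$ is non-decreasing. Applying the same reasoning to $-g$ (which also has right derivative zero) shows $g$ is non-increasing, hence constant.

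Conclude that $g \equiv g(a)$, which means
\begin{equation*}
f(x) = f(a) + \int_a^x f^+(t)\,dt, \qquad x\in[a,b].
\end{equation*}
Because $f^+$ is continuous, the right-hand side is continuously differentiable on $(a,b)$ with derivative $f^+(x)$; thus $f\in C^1(a,b)$ with $f' = f^+$, as claimed. The only nontrivial obstacle is the one-sided-derivative lemma invoked above, but once that classical fact is in hand the conclusion of Lemma \ref{lemma.rightdertoC1} follows immediately.
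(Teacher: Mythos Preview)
Your proof is correct. The paper itself omits the proof of this lemma entirely, stating only that ``the proof of this fact is elementary and, hence, is omitted,'' so there is no argument in the paper to compare yours against; your integrate-the-right-derivative approach, combined with the one-sided mean value lemma proved via the $\varepsilon$-perturbation trick, is a standard and fully rigorous way to establish the result.
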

The proof of this fact is elementary and, hence, is omitted. Thus, the equation for $f=\phi,\psi$ can be rewritten as
\begin{equation}\label{eq.phipsiprimedifference}
\frac{2}{\sigma^2}\int_{(a,b]}c(x)f(x)\text{d}x=f'(b)-f'(a)
\end{equation}
As $c\in\mathbb{L}^\infty(\mathbb{R})$ and $f\in C(\mathbb{R})$, $f_{xx}$ exists a.e. and satisfies
$$
\frac{\sigma^2}{2}f_{xx}-cf=0,\,\text{a.e.},
$$
for $f=\psi,\phi$, and, in particular, $f$ belongs to the Sobolev space $\mathbb{W}^{2,loc}(\RR)$.
Next, we define
\begin{equation}\label{def.fa}
f^a(x):=f^a_p(x):=\frac{2}{\sigma^2 W}\left(
\phi(x)\int_{-\infty}^{x}\psi(y)g^a(y)\text{d}y+
\psi(x)\int_{x}^{\infty}\phi(y)g^a(y)\text{d}y
\right),
\end{equation}
\begin{equation}\label{def.fb}
f^b(x):=f^b_p(x):=-\frac{2}{\sigma^2 W}\left(
\phi(x)\int_{-\infty}^{x}\psi(y)g^b(y)\text{d}y+
\psi(x)\int_{x}^{\infty}\phi(y)g^b(y)\text{d}y
\right),
\end{equation}
where the Wronskian $W=\psi'(x)\phi(x)-\phi'(x)\psi(x)$ is independent of $x$ and strictly positive. Using the fact that $x\mapsto\phi(x)\int_{-\infty}^{x}\psi(y)g^a(y)\text{d}y$ (along with other similar terms) is absolutely continuous, as a product of two absolutely continuous functions, we conclude that $f^a$ has continuous derivative 
$$
\frac{2}{\sigma^2 W}\left(
\phi'(x)\int_{-\infty}^{x}\psi(y)g^a(y)\text{d}y+
\psi'(x)\int_{x}^{\infty}\phi(y)g^a(y)\text{d}y
\right),
$$
which is, in turn, a.e. differentiable, and furthermore
\begin{equation}\label{eq.faode}
\frac{\sigma^2}{2}f^a_{xx}-cf^a=-g^a,\,\text{a.e.}
\end{equation}
Similar claims hold for $f^b$. This, in particular, implies that $f^a$ and $f^b$ belong to $\mathbb{W}^{2,loc}(\RR)$. Applying Dynkin's formula, together with some obvious asymptotic properties of $\exp\left(-\int_0^t c(X_s)\text{d}s\right)f^a(X_t)$, we pass to the limit along a sequence of increasing to infinity stopping times, to obtain the following probabilistic representation
\begin{equation}\label{eq.faprobrep}
f^a(x)=\EE^x\left[\int_0^\infty \exp\left(-\int_0^t c(X_s)\text{d}s\right)g^a(X_t)\text{d}t\right].
\end{equation}
Similar representation holds for $f^b$.
Finally, we establish the following elementary bounds on $\phi$, $\psi$ (whose proof is also omitted).
\begin{lemma}\label{lemma.phipsiasymptotics}
Let $c_l>0$ ($c_u>0$) be the lower (resp. upper) bound of function $c$. Then, for all $x\in\RR$,
$$
\psi(x)\le \exp\left(\sqrt{\frac{2c_l}{\sigma^2}}x\right)\vee\exp\left(\sqrt{\frac{2c_u}{\sigma^2}}x\right)
$$
and
$$
\phi(x)\le \exp\left(-\sqrt{\frac{2c_l}{\sigma^2}}x\right)\vee
\exp\left(-\sqrt{\frac{2c_u}{\sigma^2}}x\right).
$$
\end{lemma}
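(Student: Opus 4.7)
The plan is to compare the (measurable but otherwise only bounded) discount rate $c(X_s)$ to its constants $c_l\le c\le c_u$, and then to reduce everything to the classical Laplace transform of a first hitting time for the scaled Brownian motion $X_t=X_0+\sigma B_t$. Recall that an elementary optional-stopping argument applied to the exponential martingale $\exp(\theta(X_t-X_0)-\tfrac{1}{2}\theta^2\sigma^2 t)$, with the sign of $\theta$ chosen appropriately so that the stopped process is uniformly integrable, yields
\begin{equation*}
\EE^x\!\left[e^{-\lambda\tau_y}\right]=\exp\!\left(-\sqrt{\tfrac{2\lambda}{\sigma^2}}\,|x-y|\right),\qquad\lambda>0,\ x,y\in\RR.
\end{equation*}
This is the only analytical ingredient I would use beyond the definitions \reff{def.psiprob}--\reff{def.phiprob}.

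For $\psi$, I would split according to the piecewise definition. On $\{x\le 0\}$, where $\psi(x)=\EE^x[\exp(-\int_0^{\tau_0}c(X_s)\,ds)]$, the pointwise bound $c\ge c_l$ immediately gives
\begin{equation*}
\psi(x)\le \EE^x[e^{-c_l\tau_0}]=\exp\!\left(\sqrt{\tfrac{2c_l}{\sigma^2}}\,x\right).
\end{equation*}
On $\{x>0\}$, however, $\psi(x)$ is defined as the \emph{reciprocal} of an expectation, so to obtain an \emph{upper} bound on $\psi$ I must now apply the \emph{opposite} inequality $c\le c_u$ to lower-bound the denominator,
\begin{equation*}
\EE^0\!\left[\exp\!\left(-\int_0^{\tau_x}c(X_s)\,ds\right)\right]\ge \EE^0[e^{-c_u\tau_x}]=\exp\!\left(-\sqrt{\tfrac{2c_u}{\sigma^2}}\,x\right),
\end{equation*}
whence $\psi(x)\le\exp(\sqrt{2c_u/\sigma^2}\,x)$. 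Since $c_l\le c_u$ implies that the first exponential bound is the larger of the two on $\{x\le 0\}$ while the second is larger on $\{x>0\}$, both sub-cases are automatically absorbed into the maximum of the two exponentials appearing in the statement.

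The bound on $\phi$ follows by an entirely symmetric case split, with the roles of $\{x\le 0\}$ and $\{x>0\}$ reversed: on the direct side (now $x>0$) I use $c\ge c_l$ to obtain $\phi(x)\le\exp(-\sqrt{2c_l/\sigma^2}\,x)$, and on the reciprocal side (now $x\le 0$) I use $c\le c_u$ in the denominator to obtain $\phi(x)\le\exp(-\sqrt{2c_u/\sigma^2}\,x)$; once again the two sub-cases are combined by the maximum.

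There is no real obstacle here; the argument is a routine comparison, and the one subtlety worth emphasizing is the sign-dependent role reversal of $c_l$ and $c_u$ forced by the piecewise definitions of $\psi$ and $\phi$, which is precisely the reason the stated bound must be phrased as a maximum of two exponentials rather than a single one.
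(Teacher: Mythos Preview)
Your argument is correct and is exactly the elementary comparison the paper has in mind; the paper in fact omits the proof entirely, calling the bounds ``elementary'', and your reduction to the Laplace transform of the Brownian first hitting time, together with the sign-dependent use of $c_l$ versus $c_u$ forced by the reciprocal in the piecewise definitions, is the natural way to carry it out.
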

%Throughout the rest of this section we will assume price functions $p^a$, $p^b$ we work with are $C'_p$ close to $x$
%\begin{equation}\label{eq.pabc'pclosetox}
%\sup |p^a(x)-x|\le C'_p,\,\,\sup |p^b(x)-x|\le C'_p
%\end{equation}
%for some positive constant $C'_p$. We'll later see all the response controls we will actually use in constructing an equilibrium satisfy this property (see Assumption \ref{ass.xiandadmcontrols} and Lemma \ref{lemma.responsecontrolprops}).
Next, we show that $f^a$ and $f^b$ inherit the $C$-closeness to $x$ and 1-shift properties from $p^a$ and $p^b$.
\begin{lemma}\label{lemma.fabclosetox}
Assume that $p^a\in\mathcal{A}^a$ and $p^b\in\mathcal{A}^b$ are such that $g^{a/b}/c$ are $C$-close to $x$.
Then, so are $f^a$ and $f^b$.
\end{lemma}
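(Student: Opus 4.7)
The plan is to work directly from the probabilistic representation \reff{eq.faprobrep} of $f^a$; the argument for $f^b$ is identical, with $g^b$ replacing $g^a$. The strategy is the same ``integration by parts against $\mathrm{d}(1-e^{-A_t})$'' trick used in the proof of Lemma \ref{lemma.Cfromx}, adapted to the no-stopping setting. Set $A_t := \int_0^t c(X_s)\mathrm{d}s$. Since $c(X_t)e^{-A_t}\mathrm{d}t = \mathrm{d}(1-e^{-A_t})$, the representation \reff{eq.faprobrep} may be rewritten as
$$
f^a(x) = \EE^x\left[\int_0^\infty \frac{g^a(X_t)}{c(X_t)}\,\mathrm{d}(1-e^{-A_t})\right].
$$

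Next, I would produce an analogous representation for $x$ itself. Applying Ito's formula to $X_t e^{-A_t}$ (with $X$ a driftless Brownian motion of diffusion $\sigma$) yields
$$
\mathrm{d}(X_t e^{-A_t}) = \sigma e^{-A_t}\,\mathrm{d}B_t - X_t\,c(X_t)\, e^{-A_t}\,\mathrm{d}t,
$$
whose stochastic integral part is a true martingale because the integrand is bounded by $\sigma$. Taking expectations on $[0,T]$ gives $\EE^x[X_T e^{-A_T}] - x = -\EE^x\left[\int_0^T X_t\, c(X_t)\, e^{-A_t}\,\mathrm{d}t\right]$. Using \reff{eq.cgeqcl} to get $A_t\ge c_l t$, combined with the bound $\EE^x|X_t| \le |x| + \sigma\sqrt{2t/\pi}$, shows $\EE^x[X_T e^{-A_T}]\to 0$ as $T\to\infty$ by dominated convergence. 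Passing to the limit produces the desired companion identity
$$
x = \EE^x\left[\int_0^\infty X_t\,\mathrm{d}(1-e^{-A_t})\right].
$$

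Subtracting the two identities gives
$$
f^a(x) - x = \EE^x\left[\int_0^\infty \left(\frac{g^a(X_t)}{c(X_t)} - X_t\right)\mathrm{d}(1-e^{-A_t})\right],
$$
and the hypothesis that $g^a/c$ is $C$-close to $x$ bounds the integrand pointwise by $C$. Because $A_t\uparrow\infty$ almost surely, the total mass of the Stieltjes measure is $\int_0^\infty \mathrm{d}(1-e^{-A_t}) = 1$, whence $|f^a(x)-x|\le C$, which is the claim. I do not anticipate any real obstacle: the only step with genuine content is the passage to the limit after Ito's formula, and it reduces to elementary dominated convergence using $c\ge c_l>0$ and the sublinear-in-$\sqrt{t}$ growth of $\EE^x|X_t|$; the rest is bookkeeping.
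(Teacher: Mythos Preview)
Your proposal is correct and follows essentially the same route as the paper: the paper's proof simply records the identity
\[
f^a(x)-x=\EE^x\left[\int_0^\infty \left(\frac{g^a}{c}(X_t)-X_t\right)\mathrm{d}\!\left(-\exp\left(-\int_0^t c(X_s)\,\mathrm{d}s\right)\right)\right]
\]
and reads off the bound, which is exactly your subtracted formula (with the opposite sign convention on the Stieltjes measure). You provide more justification---the It\^o argument for the companion identity for $x$ and the dominated-convergence step---than the paper does, but the approach is the same.
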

\begin{proof}
The claim about $f^a$ follows from
\begin{equation}\label{eq.fadexprep}
f^a(x)-x=\EE^x\left[
\int_0^\infty \left(\frac{g^a}{c}(X_t)-X_t\right)
\text{d}\left(
\exp\left(-\int_0^t c(X_s)\text{d}s\right)
\right)
\right].
\end{equation}
Similarly for $f^b$.
\qed
\end{proof}
\begin{lemma}\label{lemma.fafbshiftproperty}
If $p^a\in\mathcal{A}^a$ and $p^b\in\mathcal{A}^b$ have 1-shift property, then so do $f^a$ and $f^b$.
\end{lemma}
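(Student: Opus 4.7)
The plan is to work with the probabilistic representation \reff{eq.faprobrep} rather than the analytic one \reff{def.fa}, since translation invariance of Brownian motion makes the shift easy to handle. I will prove the statement for $f^a$; the argument for $f^b$ is identical.

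First, Lemma \ref{lemma.shiftproperty} (applied to $p^a, p^b$ possessing the 1-shift property) gives that $c_p$ is 1-periodic and $g^{a/b}_p/c_p$ has the 1-shift property. Multiplying through, this yields the key pointwise identities
\begin{equation*}
c(x+1) = c(x), \qquad g^{a}(x+1) = g^{a}(x) + c(x), \qquad \forall x \in \RR.
\end{equation*}

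Next, I would exploit that under $\PP^{x+1}$ the process $X_\cdot$ has the same law as $1 + Y_\cdot$, where $Y$ starts at $x$ and is a copy of $X$ under $\PP^x$. Substituting into \reff{eq.faprobrep} and using the two identities above,
\begin{equation*}
f^a(x+1) = \EE^x\left[\int_0^\infty \exp\left(-\int_0^t c(Y_s+1)\,ds\right) g^a(Y_t+1)\,dt\right]
= \EE^x\left[\int_0^\infty e^{-\int_0^t c(X_s)\,ds}\bigl(g^a(X_t) + c(X_t)\bigr)\,dt\right].
\end{equation*}
The first summand is $f^a(x)$ by \reff{eq.faprobrep}. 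For the second summand, I recognize $c(X_t)\,dt = -d\bigl(e^{-\int_0^t c(X_s)\,ds}\bigr) \cdot e^{\int_0^t c(X_s)\,ds}$, or more directly,
\begin{equation*}
\EE^x\left[\int_0^\infty e^{-\int_0^t c(X_s)\,ds}\, c(X_t)\,dt\right] = \EE^x\left[\int_0^\infty -d\!\left(e^{-\int_0^t c(X_s)\,ds}\right)\right] = 1,
\end{equation*}
where the last equality uses $c \ge c_l > 0$ from \reff{eq.cgeqcl}, which forces $e^{-\int_0^t c(X_s)\,ds} \to 0$ as $t \to \infty$. Combining the two terms yields $f^a(x+1) = f^a(x)+1$, which is the claim.

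The only subtlety here is justifying the probabilistic representation \reff{eq.faprobrep} (already established in the excerpt) and the integrability needed for the Fubini-type manipulations; both are immediate from $c \ge c_l > 0$ together with the linear growth of $g^a$ inherited from the $C$-closeness to $x$ of $g^a/c$ (which is ensured by the admissibility of $p^a, p^b$ via Assumption \ref{ass.Foverfmono}). No genuine obstacle arises; the argument for $f^b$ proceeds verbatim with $g^b$ in place of $g^a$.
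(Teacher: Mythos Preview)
Your proof is correct and follows essentially the same approach as the paper. The only cosmetic difference is that the paper invokes the already-packaged representation \reff{eq.fadexprep} for $f^a(x)-x$, in which the integrand $(g^a/c)(X_t)-X_t$ is $1$-periodic by Lemma~\ref{lemma.shiftproperty}, so the shift identity is immediate; you instead start from \reff{eq.faprobrep} and compute the extra ``$+1$'' term by evaluating $\EE^x\int_0^\infty e^{-\int_0^t c}\,c\,dt=1$, which is just the same manipulation unpacked.
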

\begin{proof}
Follows from the representation (\ref{eq.fadexprep}) and Lemma \ref{lemma.shiftproperty}.\qed
\end{proof}

It turns out that the 1-shift property of $p^a$ and $p^b$ implies that $f^a$ and $f^b$ converge to $f_0:x\mapsto x$ in $C^1$-norm, as the volatility $\sigma$ of the underlying diffusion $X$ increases to infinity. In particular, we can establish the following two-sided estimate on the derivatives of $f^a$ and $f^b$.

\begin{proposition}\label{prop.fabderbounds}
Assume that $p^a\in\mathcal{A}^a$ and $p^b\in\mathcal{A}^b$ have 1-shift property, and are such that $g^{a/b}/c$ are $C$-close to $x$.
Then, there exists a function $(C,c_l,c_u,\sigma)\mapsto w(C,c_l,c_u,\sigma)$, such that
\begin{equation}
1-w\le (f^{a/b})'(x)\le 1+w,\quad \forall\, x\in\RR,
\end{equation}
and
\begin{equation}
w(\sigma)\to 0,\,\,\text{as}\,\,\sigma\to\infty.
\end{equation}
\end{proposition}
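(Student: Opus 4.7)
The strategy is to work with the 1-periodic remainder $g(x) := f^a(x) - x$ (and analogously $f^b - x$), derive a second-order ODE for $g$ whose source term is bounded independently of $\sigma$, and then exploit periodicity to bootstrap the $\sigma^{-2}$ smallness of $g''$ into a $\sigma^{-2}$ bound on $g'$. The desired two-sided estimate on $(f^{a/b})'$ then follows immediately from $(f^a)' = 1 + g'$.

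I would first observe that, by Lemma \ref{lemma.fafbshiftproperty}, $f^a(x+1) = f^a(x) + 1$, so $g(x+1) = g(x)$; and by Lemma \ref{lemma.fabclosetox}, $\|g\|_\infty \le C$. Since $f^a \in \WW^{2,loc}(\RR)$ and satisfies (\ref{eq.faode}), substituting $f^a = g + x$ yields, almost everywhere,
$$\frac{\sigma^2}{2}\,g'' - c\,g \;=\; c\,(x - h), \qquad h := g^a/c.$$
By hypothesis $h$ is $C$-close to $x$, so $\|c(x-h)\|_\infty \le c_u C$; combined with $\|g\|_\infty \le C$ and $c \le c_u$, this produces the global bound $\|g''\|_\infty \le 4 c_u C/\sigma^2$. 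This is the key place where 1-periodicity is used: it upgrades the local bound $\|g\|_\infty \le C$ into a uniform global $L^\infty$ estimate for both sides of the ODE, and hence for $g''$.

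To convert this into a bound on $g'$, I would use that $g$ is continuously differentiable and 1-periodic, so $\int_0^1 g'(y)\,dy = g(1) - g(0) = 0$, forcing $g'(x_0) = 0$ for some $x_0 \in [0,1]$. By periodicity of $g'$, such a zero exists within distance $1$ of every $x\in\RR$, and integrating the $L^\infty$ function $g''$ over an interval of length at most $1$ gives
$$|g'(x)| \;\le\; \|g''\|_\infty \;\le\; \frac{4 c_u C}{\sigma^2} \;=:\; w(C,c_l,c_u,\sigma),$$
which tends to $0$ as $\sigma \to \infty$. Since $(f^a)' = 1 + g'$, the claimed two-sided bound follows, and the argument for $f^b$ is verbatim (with $g^b$ in place of $g^a$).

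The argument is essentially routine, and I do not anticipate any substantive obstacle. The only mild bookkeeping is to check that the derivatives may legitimately be interpreted in the weak ($\WW^{2,loc}$) sense throughout, and that the estimates are uniform in $x$ rather than merely local; both are guaranteed by the 1-periodicity of $g$ and the uniform upper and lower bounds $c_l \le c \le c_u$ recorded in Section \ref{sec.individualvalfuncprops}.
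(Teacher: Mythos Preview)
Your argument is correct and considerably more direct than the paper's. The paper differentiates the Green's function representation \eqref{def.fa}, interprets $2\phi(y)c(y)/(\sigma^2|\phi'(x)|)$ and $2\psi(y)c(y)/(\sigma^2\psi'(x))$ as probability densities on half-lines (via \eqref{eq.psi'integralrep}--\eqref{eq.phi'integralrep}), and reduces the estimate to bounding $2C\,|\phi'(x)|\psi'(x)/W$; the latter is then controlled through the explicit asymptotics of $\phi,\psi$ in Lemma~\ref{lemma.phipsiasymptotics}, yielding a bound of order $\sigma^{-1}$. You instead work directly with the ODE \eqref{eq.faode} for the $1$-periodic remainder $g=f^a-\mathrm{id}$: the a~priori bounds $\|g\|_\infty\le C$ and $\|c(x-g^a/c)\|_\infty\le c_uC$ force $\|g''\|_\infty\le 4c_uC/\sigma^2$, and the periodic mean-value step ($\int_0^1 g'=0$, continuity of $g'$) converts this to $\|g'\|_\infty\le 4c_uC/\sigma^2$. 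This bypasses the $\phi,\psi$ machinery entirely, uses only Lemmas~\ref{lemma.fabclosetox} and~\ref{lemma.fafbshiftproperty} as inputs, and in fact yields the sharper rate $w=O(\sigma^{-2})$ rather than $O(\sigma^{-1})$; the price is that it does not isolate the quantity $|\phi'|\psi'/W$, which the paper's argument makes explicit but which is not needed elsewhere.
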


\begin{proof}
We only show the upper bound on the derivative of $f^a$, the proof of the other parts being analogous.
Differentiating the representation (\ref{def.fa}), we obtain
\begin{multline}
\left(f^a\right)'(x)=
\frac{|\phi'(x)|\psi'(x)}{W}
\Big(
\int_x^\infty \frac{g^a}{c}(y) \phi(y)c(y)\frac{2}{\sigma^2 |\phi'(x)|}\text{d}y\,-\,
\int_{-\infty}^x \frac{g^a}{c}(y) \psi(y)c(y) \frac{2}{\sigma^2\psi'(x)}\text{d}y
\Big).
\end{multline}
As our diffusion $X$, killed at the rate $c(x)-c_l$, has $\pm\infty$ as natural boundary points, the results of \cite{IMC} imply $\phi'(-\infty)=0$, $\psi'(\infty)=0$. Hence, passing to the appropriate limits in (\ref{eq.phipsiprimedifference}), we obtain
\begin{equation}\label{eq.psi'integralrep}
\psi'(x)=\frac{2}{\sigma^2}\int_{-\infty}^x \psi(y)c(y)\text{d}y
\end{equation}
and 
\begin{equation}\label{eq.phi'integralrep}
\phi'(x)=-\frac{2}{\sigma^2}\int_{x}^\infty \phi(y)c(y)\text{d}y.
\end{equation}
From the representations above, we see that the function $y\mapsto2\phi(y)c(y)/(\sigma^2 |\phi'(x)|)$ is a probability density on $[x,\infty)$, and that $y\mapsto 2\psi(y)c(y) /(\sigma^2\psi'(x))$ is a probability density on $(-\infty,x]$.
Using this and the inequality $g^a(x)c(x)\le x+C$, we obtain
\begin{equation*}
\left(f^a\right)'(x)\le 2C\frac{|\phi'(x)|\psi'(x)}{W}+\left(\tilde{f}^a\right)'(x).
\end{equation*}
where $\tilde{f}^a(x)$ is defined as $f^a(x)$, with $g^a(x)$ replaced by $\tilde{g}^a(x):=xc(x)$. In particular, $f=\tilde{f}^a$ is the unique solution of
\begin{equation*}
\frac{\sigma^2}{2}f''(x)-c(x)f(x)=-xc(x),
\end{equation*}
which is easily seen to be given by $\tilde{f}^a(x)=x$. Hence, $\left(\tilde{f}^a\right)'=1$, and we conclude:
$$
\left(f^a\right)'(x)\le 1+2C\frac{|\phi'(x)|\psi'(x)}{W}.
$$
Thus, to prove the claim, it only remains to establish the appropriate bound on the last summand in the above.
To this end, we, first, notice that $c(x+1)=c(x)$, due to Lemma \ref{lemma.shiftproperty}, which implies: $\phi(x+1)=\gamma \phi(x)$ and $\psi(x+1)=\frac{1}{\gamma}\psi(x)$, with $0<\gamma=\phi(1)<1$. Hence, $x\mapsto|\phi'(x)|\psi'(x)$ is $1$-periodic, and it suffices to estimate $2C|\phi'(x)|\psi'(x)/W$ for $x\in[0,1]$.
To this end, recall that 
\begin{equation*}
|\phi'(x)|=\frac{2}{\sigma^2}\int_x^\infty c(y)\phi(y)\text{d}y.
\end{equation*}
For $x\ge0$, using Lemma \ref{lemma.phipsiasymptotics} and the bounds on $c$, we conclude that the right hand side of the above expression is bounded from above by
$$
\frac{2c_u}{\sigma^2} \int_x^\infty \exp\left(-\sqrt{\frac{2c_l}{\sigma^2}}y\right)\text{d}y=\frac{2 c_u}{\sigma \sqrt{2c_l}}\exp\left(-\sqrt{\frac{2c_l}{\sigma^2}}x\right).
$$
Combined with a similar estimate from below, the above yields:
\begin{equation}
\frac{2 c_l}{\sigma \sqrt{2c_u}}\exp\left(-\sqrt{\frac{2c_u}{\sigma^2}}x\right)\le|\phi'(x)|\le \frac{2 c_u}{\sigma\sqrt{2c_l}}\exp\left(-\sqrt{\frac{2c_l}{\sigma^2}}x\right),
\end{equation}
for $x\ge0$. Similarly, for $x\le0$, we obtain
$$
\frac{2 c_l}{\sigma \sqrt{2c_u}}\exp\left(\sqrt{\frac{2c_u}{\sigma^2}}x\right)\le|\psi'(x)|\le \frac{2 c_u}{\sigma\sqrt{2c_l}}\exp\left(\sqrt{\frac{2c_l}{\sigma^2}}x\right),
$$
which, using $\psi'(x+1)=\frac{1}{\gamma}\psi'(x)$, gives
\begin{equation}
\frac{1}{\gamma}\frac{2 c_l}{\sigma \sqrt{2c_u}}\exp\left(\sqrt{\frac{2c_u}{\sigma^2}}(x-1)\right)\le|\psi'(x)|\le \frac{1}{\gamma}\frac{2 c_u}{\sigma\sqrt{2c_l}}\exp\left(\sqrt{\frac{2c_l}{\sigma^2}}(x-1)\right),
\end{equation}
for $x\in[0,1]$. Combining the above and replacing the exponential terms with their worst-case bounds, we conclude:
$$
|\phi'(x)|\psi'(x)\le\frac{1}{\gamma} \left(\frac{2c_u}{\sigma \sqrt{2c_l}}\right)^2,\quad x\in[0,1].
$$
Note that on $[0,1]$, we have $ \gamma\leq \phi \leq 1\leq \psi \leq 1/\gamma$, which, together with the above estimates of $|\phi'|$ and $\psi'$, gives, for $x\in[0,1]$,
\begin{multline}
W=\phi\psi'+\psi|\phi'|\ge
\gamma\cdot\left(\frac{1}{\gamma}\frac{2c_l}{\sigma\sqrt{2c_u}}
\exp\left(\sqrt{\frac{2c_u}{\sigma^2}}(x-1)\right)
\right)+
1\cdot \frac{2c_l}{\sigma \sqrt{2c_u}}\exp\left(-\sqrt{\frac{2c_u}{\sigma^2}}x\right)\ge\\
\frac{4c_l}{\sigma\sqrt{2c_u}}\exp\left(-\sqrt{\frac{2c_u}{\sigma^2}}\right),
\end{multline}
and, in turn,
$$
\frac{|\phi'(x)|\psi'(x)}{W}\le\frac{1}{\sigma}\frac{1}{\gamma}\left(\frac{2c_u}{\sqrt{2c_l}}\right)^2\frac{\sqrt{2c_u}}{4c_l}
\exp\left(\frac{\sqrt{2c_u}-\sqrt{2c_l}}{\sigma}\right),\quad x\in[0,1].
$$
Finally, we notice that, as $\psi(1)=1/\gamma$, the bound on $\psi$ in Lemma \ref{lemma.phipsiasymptotics} implies
$$
\frac{1}{\gamma}\le \exp\left(\sqrt{\frac{2c_u}{\sigma^2}}\right).
$$
This, together with the previous inequality, implies the desired upper bound on $2C |\phi'(x)|\psi'(x)/W$ and, in turn, on $\left(f^a\right)'(x)$. \qed
\end{proof}

Next, we establish the desired representation of the value function in terms of $f^a$ and $f^b$, using the results of \cite{Dayanik}.
To this end, we define 
\begin{equation}\label{def.F}
\Fr(x):=\Fr_p(x):=\frac{\psi(x)}{\phi(x)},\quad x\in\RR,
\end{equation}
and introduce the following transformation:
\begin{equation}\label{def.widehat}
\widehat{\cdot}\,\colon h\mapsto \widehat{h},\quad \widehat{h}(y)=\frac{h}{\phi}(\Fr^{-1}(y)),
\end{equation} 
mapping any function $h:\RR\rightarrow\RR$ into function $\widehat{h}:(0,\infty)\rightarrow\RR$. 
%Then, for given $p^a$, $p^b$, $v^a$, $v^b$ we have the following description of the value function of the individual stopping problem.
The following lemma, in particular, describes the shift-property of transformed functions. Its proof is trivial and, hence, is omitted.

\begin{lemma}\label{lemma.ydomainshiftprops}
For any $p^a\in\mathcal{A}^a$, $p^b\in\mathcal{A}^b$, satisfying the 1-shift property, we have
\begin{equation}
\phi(x+1)=\gamma\phi(x),\quad\psi(x+1)=\frac{1}{\gamma}\psi(x),\quad \Fr(x+1)=\frac{1}{\gamma^2}\Fr(x),\quad \forall\,x\in\RR,
\end{equation}
where $\phi$, $\psi$, $\Fr$ are given by (\ref{def.phiprob}), (\ref{def.psiprob}), (\ref{def.F}), and $\gamma=\phi(1)\in(0,1)$.
Therefore,
\begin{equation}
\hat{H}\left(\frac{y}{\gamma^2}\right)=\frac{1}{\gamma}\hat{H}(y),\quad \forall\,y>0,
\end{equation}
for any $1$-periodic function $H:\RR\rightarrow\RR$. In particular, by Lemma \ref{lemma.fafbshiftproperty}, the above property holds for $H=v-f^a, f^b-v$, with any $v:\RR\rightarrow\RR$ satisfying the 1-shift property.
\end{lemma}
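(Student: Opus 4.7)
The plan is to derive each of the four identities in the stated order, exploiting the 1-periodicity of the killing rate $c_p$ (which follows from Lemma \ref{lemma.shiftproperty}, since $p^a,p^b$ have the 1-shift property) together with the linear-ODE characterization of $\phi$ and $\psi$ recorded just before (\ref{eq.phipsiprimedifference}).

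First, I would observe that $\tilde\phi(x) := \phi(x+1)$ satisfies $\frac{\sigma^2}{2}\tilde\phi''(x)=c_p(x+1)\tilde\phi(x)=c_p(x)\tilde\phi(x)$ a.e., and inherits decay at $+\infty$. Since the $\WW^{2,loc}(\RR)$ solutions of this ODE that decay at $+\infty$ form a one-dimensional subspace spanned by $\phi$, we get $\phi(x+1)=\alpha\,\phi(x)$ for some constant $\alpha$; evaluating at $x=0$ identifies $\alpha=\phi(1)=\gamma$, with $\gamma\in(0,1)$ by strict monotonicity of $\phi$ and $\phi(0)=1$. An identical argument on the subspace of solutions decaying at $-\infty$ yields $\psi(x+1)=\beta\,\psi(x)$ for some $\beta>0$.

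To pin down $\beta=1/\gamma$, I would use constancy of the Wronskian $W=\psi'(x)\phi(x)-\phi'(x)\psi(x)$ (well-defined thanks to Lemma \ref{lemma.rightdertoC1}). Differentiating the two scalings gives $\phi'(x+1)=\gamma\phi'(x)$ and $\psi'(x+1)=\beta\psi'(x)$, so substituting into $W$ evaluated at $x+1$ produces $W=\gamma\beta\,W$, forcing $\beta=1/\gamma$ since $W>0$. The third identity $\Fr(x+1)=\Fr(x)/\gamma^2$ is then just the quotient of the two scalings. Finally, for a 1-periodic $H$, I would set $z:=\Fr^{-1}(y)+1$; the third identity yields $\Fr(z)=\Fr(z-1)/\gamma^2=y/\gamma^2$, so $\Fr^{-1}(y/\gamma^2)=\Fr^{-1}(y)+1$, and
\[
\hat H\!\left(\frac{y}{\gamma^2}\right)
=\frac{H(\Fr^{-1}(y)+1)}{\phi(\Fr^{-1}(y)+1)}
=\frac{H(\Fr^{-1}(y))}{\gamma\,\phi(\Fr^{-1}(y))}
=\frac{1}{\gamma}\,\hat H(y),
\]
where I used 1-periodicity of $H$ and the relation $\phi(x+1)=\gamma\phi(x)$ proved above.

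I expect no real obstacle: the lemma is pure algebra once the ODE structure and the 1-periodicity of $c_p$ are in hand, which is consistent with the author's remark that the proof is trivial. The only mild care required is invoking the one-dimensionality of the decay-subspace under the $\WW^{2,loc}(\RR)$ regularity available here, but this is a standard feature of the Sturm--Liouville theory already exploited in the paper.
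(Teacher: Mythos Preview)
Your argument is correct. The paper omits the proof entirely (declaring it ``trivial''), so there is no authorial proof to compare against; your ODE-plus-Wronskian route is a clean and standard way to fill the gap, and the final computation for $\hat H$ is exactly right.

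One small remark: an equally short alternative works directly from the probabilistic definitions (\ref{def.psiprob})--(\ref{def.phiprob}). Since $c_p$ is $1$-periodic and $X$ has stationary increments, the strong Markov property at $\tau_1$ gives, for $x>0$,
\[
\phi(x+1)=\EE^{x+1}\!\left[e^{-\int_0^{\tau_1} c_p(X_s)\,ds}\right]\phi(1)
=\EE^{x}\!\left[e^{-\int_0^{\tau_0} c_p(X_s)\,ds}\right]\gamma=\gamma\,\phi(x),
\]
and similarly $\psi(x+1)=\psi(1)\psi(x)$; your Wronskian identity (or a direct hitting-time symmetry) then forces $\psi(1)=1/\gamma$. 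This is likely the computation the authors had in mind, but your approach via the one-dimensionality of the decaying solution space is just as legitimate and arguably cleaner, since it avoids case-splitting on the sign of $x$ in the definitions of $\phi$ and $\psi$.
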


Using the transformation defined by (\ref{def.widehat}), we obtain the following ``geometric" description of the value function of an individual stopping problem.

\begin{proposition}\label{prop.vabmcmcharacterization}
For any $p^a\in\mathcal{A}^a$, $p^b\in\mathcal{A}^b$, and any admissible barrier $v$, the functions $x\mapsto V^a_1(x):=x+V^a_0(x,p^a,p^b,v)$ and $x\mapsto V^b_1(x):=x+V^b_0(x,p^a,p^b,v)$ are uniquely determined by
\begin{equation}
\begin{split}
\widehat{V}^a_1(y)=\operatorname{mcm}\widehat{\left(v-f^a\right)}(y)+\widehat{f}^a(y),\\
\widehat{V}^b_1(y)=-\operatorname{mcm}\widehat{\left(f^b-v\right)}(y)+\widehat{f}^b(y),
\end{split}
\end{equation}
where $\operatorname{mcm}(f)$ denotes the smallest nonnegative concave majorant of a function $f$ (equal to infinity, if no such majorants exist).
Similarly, the individual agents' value functions are uniquely determined by
\begin{equation}
\begin{split}
\widehat{V^a(\cdot,p^a,p^b,v)}(y)=\operatorname{mcm}\widehat{\left(\lfloor v\rfloor-f^a\right)}(y)+\widehat{f}^a(y)\\
\widehat{V^b(\cdot,p^a,p^b,v)}(y)=-\operatorname{mcm}\widehat{\left(f^b-\lceil v\rceil\right)}(y)+\widehat{f}^b(y)
\end{split}
\end{equation}
\end{proposition}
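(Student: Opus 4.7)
The plan is to reduce the statement to the geometric characterization of one-dimensional optimal stopping problems of Dayanik--Karatzas, after first absorbing the running cost into the terminal reward by means of the function $f^a$ (resp.\ $f^b$).

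The first step is to rewrite the objective in a form with no running cost. Applying Itô's formula to $t\mapsto\exp(-\int_0^t c(X_s)\,ds)\,(f^a(X_t)-X_t)$ -- which is legitimate because $f^a\in\WW^{2,loc}$ satisfies $\tfrac{\sigma^2}{2}f^a_{xx}-c f^a=-g^a$ a.e.\ (cf.\ (\ref{eq.faode})) and because $x\mapsto x$ trivially satisfies $\tfrac{\sigma^2}{2}\cdot 0-cx=-cx$ -- and localizing as in the derivation of (\ref{eq.faprobrep}), I obtain, for every stopping time $\tau$,
\begin{equation*}
J^a_0(x,\tau,p^a,p^b,v)\;=\;\bigl(f^a(x)-x\bigr)+\EE^x\!\left[\exp\!\left(-\int_0^\tau c(X_s)\,ds\right)\bigl(v(X_\tau)-f^a(X_\tau)\bigr)\right].
\end{equation*}
Taking the supremum over $\tau$ and adding $x$ gives $V^a_1(x)=f^a(x)+\sup_\tau\EE^x[e^{-\int_0^\tau c\,ds}(v-f^a)(X_\tau)]$. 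The analogous manipulation with $v(X_\tau)$ replaced by $\lfloor v(X_\tau)\rfloor$ yields the corresponding identity for $V^a$. The admissibility of $v$ and Lemma \ref{lemma.fabclosetox} ensure that $v-f^a$ (and $\lfloor v\rfloor-f^a$) are bounded, so all terms are well-defined.

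The second step applies the geometric characterization of \cite{Dayanik}, \cite{DayanikKaratzas}: for a one-dimensional regular linear diffusion with discount rate $c$, the value function $x\mapsto \sup_\tau\EE^x[e^{-\int_0^\tau c\,ds}h(X_\tau)]$ is, after the transformation $\widehat{\,\cdot\,}$ defined in (\ref{def.widehat}), the smallest nonnegative concave majorant of $\widehat{h}$ on $(0,\infty)$. The hypotheses needed -- regularity of $X$, $\pm\infty$ as natural boundary points of $X$ killed at rate $c$, and the $C^1$-regularity of $\psi,\phi$ with a.e.\ twice-differentiability -- were all verified in the preliminary constructions of the subsection. Linearity of the hat transform then gives
\begin{equation*}
\widehat{V^a_1}(y)=\widehat{f^a}(y)+\operatorname{mcm}\widehat{(v-f^a)}(y),\qquad \widehat{V^a(\cdot,p^a,p^b,v)}(y)=\widehat{f^a}(y)+\operatorname{mcm}\widehat{(\lfloor v\rfloor-f^a)}(y),
\end{equation*}
and uniqueness of $V^a_1$ (resp.\ $V^a$) follows because $\Fr$ is a strictly increasing bijection from $\RR$ onto $(0,\infty)$ and $\phi>0$, so $\widehat{(\cdot)}$ is injective on bounded measurable functions. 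For the buyer, I use $\inf_\tau \EE^x[\cdots(v-f^b)(X_\tau)]=-\sup_\tau\EE^x[\cdots(f^b-v)(X_\tau)]$, to which the same mcm-characterization applies, producing the minus sign in front of $\operatorname{mcm}\widehat{(f^b-v)}$ in the stated formulas.

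The only delicate point is the application of Itô's formula to $f^a\in\WW^{2,loc}$ (and the analogous application in the proof of the Dayanik--Karatzas characterization), since $c$ is merely bounded measurable and $f^a$ is not $C^2$. This is handled exactly as in the derivation of (\ref{eq.faprobrep}): the Itô--Tanaka / Krylov formula for Sobolev functions applied along localizing stopping times, together with the asymptotic bounds on $\phi,\psi$ from Lemma \ref{lemma.phipsiasymptotics} and the boundedness of $v-f^a$ (from $C$-closeness to $x$), justify passage to the limit and remove any boundary issues. Once this is in place, the rest of the argument is a direct translation between the probabilistic and geometric sides, and no further subtleties arise.
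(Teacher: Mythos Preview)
Your proposal is correct and follows essentially the same two-step strategy as the paper: (i) reduce the control-stopping objective to a pure optimal stopping problem with terminal reward $v-f^a$ (resp.\ $\lfloor v\rfloor-f^a$, etc.), then (ii) invoke Proposition~3.4 of \cite{Dayanik} for the $\operatorname{mcm}$-characterization.

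The only methodological difference is in step~(i). You derive the identity
\[
J^a_0(x,\tau,p^a,p^b,v)+x=f^a(x)+\EE^x\!\left[e^{-\int_0^\tau c(X_s)\,ds}(v-f^a)(X_\tau)\right]
\]
by applying It\^o/Krylov to $t\mapsto e^{-\int_0^t c\,ds}(f^a(X_t)-X_t)$, which requires the Sobolev It\^o formula you flag. The paper instead writes $\int_0^\tau=\int_0^\infty-\int_\tau^\infty$, applies the strong Markov property to the second integral, and invokes the already-established probabilistic representation (\ref{eq.faprobrep}) of $f^a$. This avoids re-justifying It\^o for $\WW^{2,loc}$ functions at this stage (the regularity work was already absorbed into the derivation of (\ref{eq.faprobrep})). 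Both routes are valid; the paper's is slightly more economical given what has been set up, while yours is more self-contained.
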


\begin{proof}
We only prove the claim for $V^a_1$. First, we recall (\ref{def.Ja0}), to obtain
\begin{multline*}
J^a_0(x,\tau,p^a,p^b,v)+x=
\EE^x\left[
\int_0^\tau \exp\left(-\int_0^t c(X_s)\text{d}s\right) g^a(X_t)\text{d}t+
\exp\left(-\int_0^\tau c(X_s)\text{d}s\right)v(X_{\tau})
\right]\\
=\EE^x 
\Big[
\int_0^\infty \exp\left(-\int_0^t c(X_s)\text{d}s\right) g^a(X_t)\text{d}t\\
- \exp\left(-\int_0^\tau c(X_s)\text{d}s\right)
\int_0^\infty \exp\left(-\int_0^t c(X_{\tau+s})\text{d}s\right) g^a(X_{\tau+t})\text{d}t
+\exp\left(-\int_0^\tau c(X_s)\text{d}s\right) v(X_{\tau})
\Big]\\
=f^a(x)+\EE^x \left[\exp\left(-\int_0^\tau c(X_s)\text{d}s\right)
\left(v(X_{\tau}) -f^a(X_\tau)\right)\right],
\end{multline*}
where the last equality follows from (\ref{eq.faprobrep}) and from the strong Markov property of $X$.
Hence,
$$
V^a_0(x,p^a,p^b,v)+x=f^a(x)+\sup_{\tau}\EE^x \left[\exp\left(-\int_0^\tau c(X_s)\text{d}s\right)
\left(v(X_{\tau}) -f^a(X_\tau)\right)\right].
$$
As $v-f^a$ is measurable and locally bounded, the last term above (i.e., the value function of a pure stopping problem (with discounting)) has the claimed $\operatorname{mcm}$-characterization by Proposition 3.4 from \cite{Dayanik}.\qed
\end{proof}

\subsection{Continuity via monotonicity}
\label{subse:mon.cont}

In this subsection, we establish the sufficiently strong monotonicity property of $V^{a/b}(\cdot,p^a,p^b,v)$, which, in turn, allows us to show the continuity of the mappings $v\mapsto V^{a/b}(x,p^a,p^b,v)$. Note that, a priori, the latter mappings are not continuous, as they involve the \emph{discontinuous rounding operators}. However, as discussed at the beginning of the paper, these mappings do become continuous if the process $v(X)$ is ``sufficiently noisy". The latter property, in particular, can be deduced from the strict monotonicity of $v(\cdot)$. The following proposition shows that the desired monotonicity property in $x$ is preserved by the mappings $v\mapsto V^{a/b}(x,p^a,p^b,v)$.
%The next proposition is the most important one in this subsection, and it is the key to our proof of existence of a Markovian equilibrium for sufficiently large $\sigma$. Without the monotonicity established therein, the fixed point problem we need to solve appears to be too discontinuous to be tractable.

\begin{proposition}\label{prop.Vmonotonicity}
Assume that $p^a\in\mathcal{A}^a$, $p^b\in\mathcal{A}^b$, and admissible barrier $v$, have 1-shift property and are such that $g^{a/b}/c$ are $C$-close to $x$. Then, $V(x)=V^{a/b}(x,p^a,p^b,v),\,x+V^{a/b}_0(x,p^a,p^b,v)$ is absolutely continuous, and its derivative satisfies:
\begin{equation}
|V'(x)-1|\le w,\,\,\text{a.e. }x\in\RR,
\end{equation}
with $w(\sigma)\to0$, as $\sigma\to\infty$, uniformly over all $(p^a,p^b,v)$ satisfying the above properties (assuming that the rest of the model parameters, $(\lambda,F,\alpha)$, are fixed). In particular, there exists $\epsilon>0$, s.t.
$$
V'(x)\ge\epsilon,\,\,\text{a.e. }x\in\RR,
$$
for all sufficiently large $\sigma>0$. 
%{\color{red}In particular, this applies to the agents' value functions: $V=V^{a/b}(x,p^a,p^b,v)$.}
\end{proposition}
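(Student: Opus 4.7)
The plan is to use the geometric representation from Proposition \ref{prop.vabmcmcharacterization}. Write
$$
V^a_1(x) = f^a(x) + h(x),\qquad h(x) := \phi(x)\,M(\Fr(x)),\qquad M := \operatorname{mcm}\widehat{(v-f^a)};
$$
analogous decompositions hold for $V^b_1$, and for the rounded versions $V^a, V^b$, after replacing $v$ by $\lfloor v\rfloor$ or $\lceil v\rceil$ and flipping signs appropriately. Since Proposition \ref{prop.fabderbounds} already provides $|(f^{a/b})'(x)-1|\le w(\sigma)$ with $w(\sigma)\to 0$, the task reduces to showing $|h'(x)| = O(1/\sigma)$ uniformly in $x\in\RR$ and in the admissible data $(p^a,p^b,v)$. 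Because $\phi,\Fr\in C^1$, $f^a\in\mathbb{W}^{2,loc}$, and the concave $M$ is locally Lipschitz on $(0,\infty)$, the product $h$ is locally absolutely continuous; so is $V^a_1$ (and each of its siblings).

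Differentiating, and using $\Fr' = W/\phi^2$, I obtain a.e.
$$
h'(x) = \phi'(x)\,M(\Fr(x)) + \frac{W}{\phi(x)}\,M'(\Fr(x)).
$$
The 1-shift property of $v$ and $f^a$ (Lemmas \ref{lemma.shiftproperty}, \ref{lemma.fafbshiftproperty}) makes $v-f^a$ 1-periodic, so by Lemma \ref{lemma.ydomainshiftprops} the transformed payoff satisfies $\widehat{(v-f^a)}(y/\gamma^2) = (1/\gamma)\widehat{(v-f^a)}(y)$, a scaling that is inherited by $M$. Combined with $\phi(x+1) = \gamma\phi(x)$ and $\Fr(x+1) = \Fr(x)/\gamma^2$, this makes $h$ itself 1-periodic, so it suffices to bound $h'$ on $[0,1]$. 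On this interval, Lemma \ref{lemma.fabclosetox} and $|v-x|\le C$ yield $|h|\le 2C$, hence $M(\Fr(x)) = h(x)/\phi(x)\le 2C/\phi(x)$ and
$$
|\phi'(x)\,M(\Fr(x))| \le 2C\,|\phi'(x)|/\phi(x);
$$
the explicit estimates derived in the proof of Proposition \ref{prop.fabderbounds} give $|\phi'|/\phi = O(1/\sigma)$ on $[0,1]$.

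For the second term I use the inequality $M'(y)\le M(y)/y$ on $(0,\infty)$: since $\widehat{(v-f^a)}(y)\to 0$ as $y\downarrow 0$ (the numerator stays bounded while $\phi(\Fr^{-1}(y))\to\infty$), the smallest nonnegative concave majorant $M$ also vanishes at $0$; concavity together with $M(0)=0$ forces $y\mapsto M(y)/y$ to be non-increasing, whence $M'(y)\le M(y)/y$. Substituting $y=\Fr(x)=\psi(x)/\phi(x)$ yields $M'(\Fr(x))\le h(x)/\psi(x)\le 2C/\psi(x)$, so
$$
\left|\frac{W}{\phi(x)}\,M'(\Fr(x))\right| \le \frac{2CW}{\phi(x)\psi(x)} \le \frac{2CW}{\gamma},\qquad x\in[0,1],
$$
using $\phi\ge\gamma$ and $\psi\ge 1$ on $[0,1]$. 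The proof of Proposition \ref{prop.fabderbounds} shows $W = \phi\,\psi' + |\phi'|\,\psi = O(1/\sigma)$ on $[0,1]$, while $\gamma = \phi(1)\to 1$ as $\sigma\to\infty$ by Lemma \ref{lemma.phipsiasymptotics}. Thus this summand is also $O(1/\sigma)$.

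Combining the two bounds gives $|h'(x)|=O(1/\sigma)$ uniformly, and in turn $|(V^a_1)'(x)-1|\le w(\sigma)$ with $w(\sigma)\to 0$. The argument for $V^b_1$ is symmetric, while the rounded cases $V^a, V^b$ require only the observation that $\lfloor v\rfloor$ and $\lceil v\rceil$ still satisfy the 1-shift property and remain $(C+1)$-close to $x$, so the preceding estimates carry through with $C$ enlarged by one. The strict lower bound $V'(x)\ge\epsilon > 0$ then follows by choosing $\sigma$ large enough that $w(\sigma)<1$. The main obstacle will be uniformity: every $\sigma$-estimate above must hold uniformly over the admissible class of $(p^a,p^b,v)$, but this is automatic once one notes that the relevant constants ($c_l$, $c_u$, $C$, $\gamma$) depend only on the fixed model parameters, and the bound $|v-f^a|\le 2C$ depends only on $C$.
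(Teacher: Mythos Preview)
Your proof is correct and takes a genuinely cleaner route than the paper's. Both arguments start from the representation $V^a_1 = f^a + h$ with $h(x)=\phi(x)\,M(\Fr(x))$, $M=\operatorname{mcm}\widehat{(v-f^a)}$, and reduce via Proposition~\ref{prop.fabderbounds} to showing $|h'|=o(1)$ as $\sigma\to\infty$. The paper then introduces $\bar\phi=\phi\circ\Fr^{-1}$ and the constant $\tilde c=\sup h_0\bar\phi$, treats the case $\tilde c\le 0$ separately (where $M\equiv 0$), and in the main case locates a touching point $y_*$ where $M$ meets the concave barrier $\tilde c/\bar\phi$; the one--sided derivative inequality at $y_*/\gamma^2$, together with concavity, is then leveraged to bound $M'$ from below on $[y_*,y_*/\gamma^2]$. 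Only the lower bound on $V'$ is written out. Your argument bypasses all of this by noting that a nonnegative concave function on $(0,\infty)$ is automatically nondecreasing, and that $M(0+)=0$ forces $0\le M'(y)\le M(y)/y$; this single estimate yields two--sided control of $h'$ in one stroke and handles the upper and lower bounds simultaneously.

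One small point to tighten: the implication ``$\widehat{(v-f^a)}(y)\to 0$ as $y\downarrow 0$, hence $M(0+)=0$'' is not valid as stated---a payoff that vanishes at the origin can still have a concave majorant that does not. What you need is either (i) the scaling $M(y/\gamma^2)=M(y)/\gamma$, which you already recorded and which, combined with the monotonicity of $M$, gives $M(0+)=0$ immediately; or (ii) the observation that $\widehat{(v-f^a)}\le 2C\,\widehat{1}=2C/\bar\phi$, with $2C/\bar\phi$ concave (Lemma~\ref{lemma.3.1}) and vanishing at $0$, so $M\le 2C/\bar\phi\to 0$. Either fix is one line. Otherwise your estimates on $|\phi'|/\phi$, $W$, and $\gamma$ follow from the bounds in the proof of Proposition~\ref{prop.fabderbounds} exactly as you indicate, and the uniformity in $(p^a,p^b,v)$ is clear since those bounds depend only on $c_l,c_u,C$.
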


\begin{proof}
We only prove the lower bound on the derivative of $V(x)=x+V^a_0(x,p^a,p^b,v)$, the other parts being similar.
Note that Proposition \ref{prop.vabmcmcharacterization} implies
$$
V(x)=f^a(x)+\phi(x)\operatorname{mcm}\left(\widehat{v -f^a}\right)(\Fr(x)),\quad\forall\, x\in\RR.
$$
Notice that $f^a,\phi,\Fr\in C^1(\mathbb{R})$, and the value of ``$\operatorname{mcm}$" above is absolutely continuous, as it is concave (note that it is also finite, as $v -f^a$ is absolutely bounded, due to Lemma \ref{lemma.fabclosetox}, which, in turn, implies that $\widehat{v -f^a}$ is bounded above by an affine function). Then $V$ is also absolutely continuous, and its (a.e. defined) derivative satisfies:
$$
V'(x)=(f^a)'(x)+\phi'(x)\operatorname{mcm}\left(\widehat{ v -f^a}\right)(\Fr(x))+\phi(x)\Fr'(x)\operatorname{mcm}\left(\widehat{ v -f^a}\right)'(\Fr(x)).
$$
From Proposition \ref{prop.fabderbounds}, we obtain $(f^a)'\ge 1-w$, with $w$ as in the statement of the proposition. Hence, we only need to show that, for a.e. $x\in\RR$,
\begin{equation}\label{eq.phimcmprime}
V'(x) - (f^a)'(x) = \phi'(x)\operatorname{mcm}\left(\widehat{ v -f^a}\right)(\Fr(x))+\phi(x)\Fr'(x)\operatorname{mcm}\left(\widehat{ v -f^a}\right)'(\Fr(x))\ge -\tilde{w}(\sigma),
\end{equation}
with $\tilde{w}$ having the appropriate asymptotic properties. Recall that, by Lemma \ref{lemma.shiftproperty}, $V(x)=V^a_0(x)+x$ is $1$-periodic, hence, it suffices to consider $x$ on any bounded interval of length at least $1$, as opposed to the entire real line. To simplify the notation, we denote
$$
h_0(y):=\widehat{ v -f^a}(y),\quad h(y):=\operatorname{mcm}\left(h_0\right)(y),\quad \forall\, y>0.
$$
Note that the assumed $1$-shift property of $v$, via Lemma \ref{lemma.fafbshiftproperty}, implies that $ v -f^a$ is $1$-periodic. The latter and Lemma \ref{lemma.ydomainshiftprops}, in turn, imply that
$$
h_0(y/\gamma^2)=h_0(y)/\gamma,\quad \forall\,y>0.
$$
It can be easily checked (using the scaling properties of $\operatorname{mcm}$) that the above property passes on to the minimal concave majorant $h(y)$.

Next, we define 
$$
\bar{\phi}(y):=\widehat{\phi^2}(y)=\phi(\Fr^{-1}(y))
$$
It is easy to check that $\left(\frac{\sigma^2}{2}\frac{\partial^2}{\partial x^2}-c\right)(\phi^2)>0$, and, hence, $\bar{\phi}$ is convex by the following lemma, which can be proven by a straightforward calculation.

\begin{lemma}\label{lemma.3.1}
Let $H\in \mathbb{W}^{2,loc}$ and $-\infty<x_1<x_2<\infty$ be such that $\frac{\sigma^2}{2}H_{xx}-cH>0,\,(\text{resp. }<0)$ a.e. on $(x_1,x_2)$.
Then, $\widehat{H}$ is convex (resp. concave) on $(y_1,y_2)$, with $y_i=\Fr(x_i)$. 
\end{lemma}

\noindent Furthermore, $\bar{\phi}$ is decreasing and satisfies $\bar{\phi}(y/\gamma^2)=\gamma \bar{\phi}(y)$. In the rest of the proof, we use the established properties of $\bar{\phi}$, $h_0$, and $h$, to prove the inequality in (\ref{eq.phimcmprime}).

Let us define
$$
\tilde{c}:=\sup_{y\in[1,1/\gamma^2]} h_0(y)\bar{\phi}(y).
$$
Note that, as $(h_0\bar{\phi})(y/\gamma^2)=(h_0\bar{\phi})(y)$, we obtain
$$
h_0(y)\le \tilde{c}/\bar{\phi}(y),
$$
for all $y>0$ (and not just for $y\in[1,1/\gamma^2]$), as follows from the definition of $\tilde{c}$.

First, let us assume that $\tilde{c}\le0$. We will show that, in this case, $h(y)\equiv0$, and $\tilde{w}=0$ gives the desired lower bound in (\ref{eq.phimcmprime}). Indeed, the constant function $0$ is a concave majorant of $h_0$ in this case. If there exists $y>0$ s.t. $h(y)=z<0$, then, as $h(y/\gamma^2)=h(y)/\gamma$, all points $(y/\gamma^{2k},z/\gamma^k)$, for integer $k$, lie on the graph of $h$. However, it is easy to see that the slope between two consecutive points in this family increases, if $z<0$, contradicting the concavity of $h$.

Having dealt with the simpler case of $\tilde{c}\le0$, we assume $\tilde{c}>0$ for the rest of the proof.
As 
$$
h_0(y)\bar{\phi}(y)=\left( v -f^a\right)(\Fr^{-1}(y)),
$$ 
and $ v$ and $f^a$ are $C$-close to $x$ (see Lemma \ref{lemma.fabclosetox}), we conclude that $\tilde{c}\le 2C$. 
Moreover, $1/\bar{\phi}(y)=\hat{1}$ (i.e., the ``$\,\widehat{\cdot}\,$" transform applied to a constant function $1$) is concave by the previous lemma, as 
$$
\left(\frac{\sigma^2}{2}\frac{d^2}{dx^2}-c\right)(1)<0.
$$ 
Hence, $\tilde{c}/\bar{\phi}$ is a concave majorant of $h_0$. It is also shown above that $h\geq0$. These observations imply
$$
0<h(y)\le \tilde{c}/\bar{\phi}(y),\quad\forall y\in(0,\infty).
$$
From the definition of $\tilde{c}$, we can find an infinite sequence of points $\{y_i\}$ in $[1,1/\gamma^2]$ s.t. $(h_0\bar{\phi})(y_i)\to\tilde{c}$. Let $y_*$ be any concentration point of this sequence. Then, from the continuity of the concave majorant $h$, and by $h\le \tilde{c}/\bar{\phi}$, we obtain
$$
h(y_*)=\tilde{c}/\bar{\phi}(y_*).
$$
Recall that we only need to establish (\ref{eq.phimcmprime}) on some $x$-interval of length $\ge1$. It is convenient to use the $x$-interval corresponding (via $\Fr^{-1}$) to the $y$-interval $[y_*,y_*/\gamma^2]$. Note that, as $y_*\in[1,1/\gamma^2]$, the resulting $x$-interval necessarily lies inside $[0,2]$.
Note also that
$$
\phi(x)=\bar{\phi}(\Fr(x)),\quad \phi'(x)=\bar{\phi}'(\Fr(x))\Fr'(x),
$$
and, hence, the left hand side of (\ref{eq.phimcmprime}) can be rewritten as 
\begin{equation}\label{eq.F.h.aux.est}
\Fr'(x)\left(
\bar{\phi}'(\Fr(x))h(\Fr(x))+\bar{\phi}(\Fr(x))h'(\Fr(x))
\right).
\end{equation}
As $\Fr'\geq0$, to estimate (\ref{eq.F.h.aux.est}) from below, for $x\in[0,2]$, we will derive an estimate for
\begin{equation}\label{eq.phi.h.aux.est}
\bar{\phi}'(y)h(y)+\bar{\phi}(y)h'(y),
\end{equation}
for $y\in[y_*,y_*/\gamma^2]\subset[1,1/\gamma^4]$.
%For $F'$, we have
%$$
%F'(x)=\frac{\psi'\phi-\phi'\psi}{\phi^2}(x)=\frac{W}{\phi^2(x)}.
%$$ 
%Hence, we need to estimate
%$$
%W=\psi'(0)\phi(0)+|\phi'(0)|\psi(0)=\psi'(0)+|\phi'(0)|.
%$$
%Each of the derivatives in the right hand side of the above can be estimated using their integral representation, as in the proof of Proposition \ref{prop.fabderbounds}, using the asymptotic properties of $\phi$ and $\psi$ from Lemma \ref{lemma.phipsiasymptotics}. As a result, we obtain:
%$$
%\frac{4c_l}{\sigma \sqrt{2c_u}} \leq W\leq \frac{4c_u}{\sigma \sqrt{2c_l}}.
%$$
%As $\phi(x)$ lies between $\gamma^2$ and $1$, for $x\in[0,2]$, we further get
%$$
%\frac{4c_l}{\sigma \sqrt{2c_u}} \leq F'(x) \leq \frac{1}{\gamma^2}\frac{4c_u}{\sigma \sqrt{2c_l}},
%$$
%for $x\in[0,2]$.
%Let us estimate (\ref{eq.phi.h.aux.est}) from below, for $y\in[y_*,y_*/\gamma^2]\subset[1,1/\gamma^4]$.
As $\bar{\phi}'\leq 0$, $h>0$, and, as shown below, $h'>0$, we will estimate $\bar{\phi}$, $h'$ from below, and $h$ from above.
Clearly, $\bar{\phi}(y)\ge \gamma^2$, for $y\in [1,1/\gamma^4]$. In addition, as $h\leq \tilde{c}/\bar{\phi}$, we obtain $h(y)\le \tilde{c}/\gamma^2$, in the $y$-range we consider. 

To estimate $h'$ on $[y_*,y_*/\gamma^2]$, note that $h(y)$ coincides with $\tilde{c}/\bar{\phi}(y)$ at the endpoints of this interval, and $h\le\tilde{c}/\bar{\phi}$ on the entire interval. Then, as $\tilde{c}/\bar{\phi}$ is differentiable, we must have 
$$
\left(\frac{\tilde{c}}{\bar{\phi}}\right)'\left(\frac{y_*}{\gamma^2}\right)\le h'\left(\frac{y_*}{\gamma^2}\right),
$$
as, otherwise, we get a contradiction with the domination relationship between the two functions in the left neighborhood of $y_*/\gamma^2$. In the above, and in the rest of the argument, $h'(y)$ is understood as the left derivative at $y=y_*/\gamma^2$, as the right derivative at $y=y_*$, and as any element in the superdifferential at $y\in(y_*,y_*/\gamma^2)$. %{\color{red}(and the $V'(x)$ in the conclusion of the proposition should also be understood as any element of the superdifferential) this should be mentioned in the statement!!}. Alternatively, we can treat all derivatives in this proposition in the weak sense throughout {\color{red}almost sure is not correct you may have a strictly increasing function whose derivative is zero almost everywhere}.
The last inequality, together with the concavity of $h$, implies that, for all $y\in[y_*,y_*/\gamma^2]$,
$$
h'(y)\ge\left(\frac{\tilde{c}}{\bar{\phi}}\right)'\left(\frac{y_*}{\gamma^2}\right)=-\frac{\tilde{c}}{\bar{\phi}^2(y_*/\gamma^2)}\bar{\phi}'(y_*/\gamma^2)\ge-\frac{\tilde{c}}{\gamma^2}\bar{\phi}'(y_*/\gamma^2).
$$
Note further that, as $y_*/\gamma^2\le y/\gamma^2$, for any $y\in[y_*,y_*/\gamma^2]$, and as $-\bar{\phi}'$ is nonnegative and decreasing (due to convexity and monotonicity of $\bar{\phi}$), we obtain
$$
-\frac{\tilde{c}}{\gamma^2}\bar{\phi}'(y_*/\gamma^2)\ge
-\frac{\tilde{c}}{\gamma^2}\bar{\phi}'(y/\gamma^2)=
-\tilde{c} \gamma \bar{\phi}'(y),\quad y\in[y_*,y_*/\gamma^2],
$$
where we also used $\bar{\phi}'(y/\gamma^2)=\gamma^3 \bar{\phi}'(y)$.
%Thus, to estimate  it suffices to estimate $\bar{\phi}'$.

The above inequalities imply a lower bound for (\ref{eq.phi.h.aux.est}) in terms of $\bar{\phi}'$. 
Passing on to (\ref{eq.F.h.aux.est}), we obtain, for $x\in[\Fr^{-1}(y_*),\Fr^{-1}(y_*)+1]$:
\begin{multline}
\Fr'(x)\left(
\bar{\phi}'(\Fr(x))h(\Fr(x))+\bar{\phi}(\Fr(x))h'(\Fr(x))
\right)\label{eq.barphi.h.est.aux}\\
\ge
\Fr'(x)\left(\bar{\phi}'(\Fr(x))\frac{\tilde{c}}{\gamma^2}-\tilde{c}\gamma\bar{\phi}'(\Fr(x)) \right)=
\Fr'(x)\bar{\phi}'(\Fr(x))\tilde{c}\left(\frac{1}{\gamma^2}-\gamma^3\right).
\end{multline}
As $\Fr'(x)\bar{\phi}'(\Fr(x))=\phi'(x)\ge\phi'(0)$ and $\tilde{c}\le 2C$, the right hand side of the above is
$$
\ge \phi'(0)2C\left(\frac{1}{\gamma^2}-\gamma^3\right).
$$
The value of $\phi'(0)$ can be estimated using its integral representation, as in the proof of Proposition \ref{prop.fabderbounds}, using the asymptotic properties of $\phi$ from Lemma \ref{lemma.phipsiasymptotics}. As a result, we obtain:
$$
\phi'(0)\ge -\frac{2c_u}{\sigma\sqrt{2c_l}}.
$$ 
In addition, Lemma \ref{lemma.phipsiasymptotics} implies
$$ 
1\leq\exp\left(\sqrt{\frac{2c_l}{\sigma^2}}\right)\le\frac{1}{\gamma}\le\exp\left(\sqrt{\frac{2c_u}{\sigma^2}}\right).
$$  
Collecting the above, we conclude that the right hand side of (\ref{eq.barphi.h.est.aux}) vanishes, as $\sigma\to\infty$, at a rate depending only on $c_l$, $c_u$. This, in turn, yields (\ref{eq.phimcmprime}) and completes the proof of the proposition.
\qed
\end{proof}
 
Recall that our ultimate goal is to prove existence of a fixed point for a mapping which involves $v\mapsto V^{a/b}(x,p^a,p^b,v)$. As announced at the beginning of this subsection, in order to establish the continuity of the latter, we need to restrict $v$ to a set of sufficiently monotone functions. Proposition \ref{prop.Vmonotonicity} shows that the desired monotonicity property in $x$ is preserved by $v\mapsto V^{a/b}(x,p^a,p^b,v)$, and the following proposition, in turn, shows that the latter mapping is continuous on the set of such $v$.
%If we assume the existence of a positive lower bound on the derivative of a barrier $v$, as suggested by the last proposition, it becomes relatively easy to prove the continuity of the value function with respect to the barrier.

\begin{proposition}\label{prop.VcontinuousinJ}
Assume that $p^a\in\mathcal{A}^a$, $p^b\in\mathcal{A}^b$, and admissible barriers $v^1$, $v^2$, have 1-shift property and are such that $g^{a/b}/c$ are $C$-close to $x$. Assume also that $(v^1)'(x),(v^2)'(x)\ge\epsilon>0$, for all $x\in\RR$. Then, there exists a function $\varepsilon:\RR_+\rightarrow\RR_+$ (depending only on $(\epsilon,\sigma,\lambda,F,\alpha)$, but independent of $(p^a,p^b,v^1,v^2)$, satisfying the above assumptions), such that $\varepsilon(\delta)\to0$, as $\delta\to0$, and
\begin{equation}
\sup_{x\in\RR} \left\vert V^{a/b}(x,p^a,p^b,v^1)-V^{a/b}(x,p^a,p^b,v^2)\right\vert\le \varepsilon\left(\sup_{x\in\RR}|v^1(x)-v^2(x)|\right).
\end{equation}
%for some $\varepsilon(\delta)\to0$, as $\delta\to0$.
%Analogous statement holds for $V^b$.
\end{proposition}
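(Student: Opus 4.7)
The plan is to realize the ``smoothing by noise'' heuristic outlined after Theorem~\ref{thm:main}: since the barriers have derivatives uniformly bounded below by $\epsilon$, a small displacement of $X$ translates into a controlled upward shift of $v^{1/2}(X)$, allowing one to ``jump over'' the discontinuities of the floor/ceiling at a cost proportional to $\delta:=\sup_x|v^1(x)-v^2(x)|$. I treat $V^a$ (floor); the case of $V^b$ is symmetric, with $X$ shifted downward and the ceiling operator in place of the floor.

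Let $\tau^1$ be an optimal stopping time for $V^a(x,p^a,p^b,v^1)$ (existence follows from Proposition~\ref{prop.vabmcmcharacterization}), and set $\eta:=\delta/\epsilon$. Define
\begin{equation*}
\tau^2:=\inf\{t\geq\tau^1\,:\,X_t\geq X_{\tau^1}+\eta\},
\end{equation*}
which is a.s. finite since $X$ is a scaled Brownian motion. The hypothesis $(v^2)'\geq\epsilon$ yields $v^2(X_{\tau^2})\geq v^2(X_{\tau^1})+\delta\geq v^1(X_{\tau^1})$, and hence $\lfloor v^2(X_{\tau^2})\rfloor\geq\lfloor v^1(X_{\tau^1})\rfloor$. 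This is the decisive step: the diffusive noise absorbs the discontinuity of the floor. Writing $M_t:=\exp(-\int_0^t c_p(X_s)\,ds)$ and applying the It\^o product rule $d(XM)=X\,dM+M\,dX$ (the bracket vanishes since $M$ has bounded variation), together with the $C$-closeness of $g^a/c$ to $x$ and the floor comparison above, one obtains
\begin{equation*}
V^a(x,v^1)-V^a(x,v^2)\leq J^a(x,\tau^1,v^1)-J^a(x,\tau^2,v^2)\leq\EE^x\bigl[(2C+1)(M_{\tau^1}-M_{\tau^2})+\eta\,M_{\tau^2}\bigr],
\end{equation*}
after dropping a stochastic integral of mean zero and using $|\lfloor v^1(X_{\tau^1})\rfloor-X_{\tau^1}|\leq C+1$ from Lemma~\ref{lemma.Cfromx}. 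The strong Markov property at $\tau^1$ and the Laplace transform of the Brownian hitting time give $\EE^x[M_{\tau^1}-M_{\tau^2}]\leq 1-e^{-\eta\sqrt{2c_u}/\sigma}=O(\delta/(\epsilon\sigma))$ and $\eta\,\EE^x[M_{\tau^2}]\leq\delta/\epsilon$; interchanging $v^1$ and $v^2$ produces the reverse inequality, so one gets $\varepsilon(\delta)=O(\delta)$ with constants depending only on $(C,c_l,c_u,\sigma,\epsilon)$, and these depend only on $(\epsilon,\sigma,\lambda,F,\alpha)$.

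The main subtlety I anticipate is the mean-zero property of $\sigma\int_{\tau^1}^{\tau^2}\!M_s\,dB_s$: since $\EE[\tau^2]=\infty$ for the Brownian hitting time involved, one cannot simply invoke boundedness of $\tau^2$, and must instead rely on the exponential decay $M_s\leq e^{-c_l s}$ to obtain $\EE^x\!\int_0^{\tau^2}\!M_s^2\,ds\leq 1/(2c_l)$ and apply It\^o's isometry. The other manipulations are routine, and notably no reduction to a bounded $x$-interval via the $1$-shift property is required, since all constants above are already uniform in $x$.
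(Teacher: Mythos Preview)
Your argument is correct and shares the paper's central idea---after the first stopping time, wait for $X$ to climb by $\delta/\epsilon$ so that the strict monotonicity of the barrier absorbs the floor discontinuity---but your implementation is tighter than the paper's. The paper takes a $\delta$-optimal $\tau_2$ for $v^2$, splits into the events $\Omega_1=\{\lfloor v^1(X_{\tau_2})\rfloor\ge\lfloor v^2(X_{\tau_2})\rfloor\}$ and its complement, and on $\Omega_2$ uses the \emph{two-sided} exit time $\tau_1=\tau_{10}\wedge\tau_{11}$ from the interval $(X_{\tau_2}-1,X_{\tau_2}+\delta/\epsilon)$; the lower barrier is introduced precisely so that $\EE[\tau_1-\tau_2]<\infty$, which is what the paper needs to bound the running-cost integral by $c_uC\,\EE[\tau_1-\tau_2]$, and the price is an extra term controlled by $\PP(\tau_1=\tau_{11})=O(\delta)$. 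You bypass all of this by (i) not splitting into events, (ii) using the one-sided hitting time, and (iii) bounding the running-cost term via the exact identity $\int_{\tau^1}^{\tau^2}c(X_t)M_t\,dt=M_{\tau^1}-M_{\tau^2}$ together with $|g^a/c-x|\le C$, which reduces everything to the Laplace transform $1-e^{-\eta\sqrt{2c_u}/\sigma}$ of the Brownian passage time. The martingale step you flag is handled exactly as you say, since $\int_0^\infty M_s^2\,ds\le 1/(2c_l)$ makes the stochastic integral a true $L^2$-martingale. One small point: Proposition~\ref{prop.vabmcmcharacterization} does not by itself assert existence of an optimal stopping time; this follows from the upper semicontinuity of $x\mapsto\lfloor v(x)\rfloor$ (for increasing continuous $v$) and the results of \cite{Dayanik}, or one can simply work with a $\delta$-optimal $\tau^1$ as the paper does, at the cost of an extra $\delta$ in the final bound.
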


\begin{proof}
We will prove the statement for $V^a$, the one for $V^b$ being similar.
We will show that, whenever $\sup_{x\in\RR}|v^1(x)-v^2(x)|\leq \delta$, we have $V^a(x,p^a,p^b,v^1)\ge V^a(x,p^a,p^b,v^2)-\varepsilon(\delta)$, with $\varepsilon$ vanishing at zero. This, together with the symmetric inequality (proved analogously), yields the statement of the proposition.

For a given $\delta>0$, consider a $\delta$-optimal $\tau_2$, such that 
$$
J^a(\tau_2,x,p^a,p^b,v^2)\ge V^a(x,p^a,p^b,v^2)-\delta.
$$ 
Notice that it suffices to find $\tau_1$, such that 
$$
J^a(\tau_1,x,p^a,p^b,v^1)\ge J^a(\tau_2,x,p^a,p^b,v^2)-\varepsilon(\delta).
$$
Throughout this proof, $\varepsilon$ may change from line to line, but it always satisfies the properties stated in the proposition.
We construct $\tau_1\ge\tau_2$, separately, on two different $\mathcal{F}_{\tau_2}$-measurable sets.
On the event 
$$
\Omega_1:=\left\{\omega\colon\lfloor v^1(X_{\tau_2})\rfloor\ge \lfloor v^2(X_{\tau_2})\rfloor\right\},
$$ 
we set $\tau_1=\tau_2$.
If $\lfloor v^1(X_{\tau_2})\rfloor<\lfloor v^2(X_{\tau_2})\rfloor$, we, nevertheless, have 
$$
v^1(X_{\tau_2})\ge v^2(X_{\tau_2})-\delta,
$$
and, hence, by the assumption on $(v^1)'$,
$$
v^1\left(X_{\tau_2}+\frac{\delta}{\epsilon}\right)\ge v^2(X_{\tau_2}).
$$
The above implies
\begin{equation}\label{eq.J1attau10beatsJ2tau2}
\lfloor v^1\left(X_{\tau_2}+\delta/\epsilon\right) \rfloor\ge \lfloor v^2(X_{\tau_2})\rfloor.
\end{equation}
Then, on the event
$$
\Omega_2:=\Omega_1^c=\left\{\omega\colon \lfloor v^1(X_{\tau_2})\rfloor<\lfloor v^2(X_{\tau_2})\rfloor\right\},
$$ 
we define
$$
\tau_{10}:=\inf\left\{t\ge\tau_2\colon X_t\ge X_{\tau_2}+\frac{\delta}{\epsilon}\right\},
\quad \tau_{11}:=\inf\left\{t\ge\tau_2\colon X_t\le X_{\tau_2}-1\right\},
\quad \tau_1:=\tau_{10}\wedge\tau_{11}.
$$ 
In the subsequent derivations, we express various quantities in terms of the following expression, which can be interpreted as the ``relative to $x$" objective value, and which is more convenient than its ``absolute" version.
\begin{multline}\label{eq.Jreltoxformula}
J^a(\tau,x,p^a,p^b,v^i)-x=
\EE^x\left[
\int_0^\tau \exp\left(-\int_0^t c(X_s)\text{d}s\right)\left(g^a(X_t)-c(X_t)X_t\right)\text{d}t\right.\\
%\right.
%\left.\phantom{???????????????????????????????????????}
\left.+\exp\left(-\int_0^\tau c(X_s)\text{d}s\right)\left(\lfloor v^i\left(X_\tau\right)\rfloor -X_\tau\right)
\right],
\end{multline}
where $\left\vert \lfloor v^i(x)\rfloor -x\right\vert\le C+1$ and $|g^a(x)-c(x)x|\le c_u C$, by the assumption of the proposition.
Using the above expression, we obtain
\begin{multline*}
J^a(\tau_1,x,p^a,p^b,v^1)-J^a(\tau_2,x,p^a,p^b,v^2)=
\EE^x\Big[
\bone_{\Omega_1}\exp\left(-\int_0^{\tau_2}c(X_s)\text{d}s\right)\left(\lfloor v^1(X_{\tau_2})\rfloor-\lfloor v^2(X_{\tau_2})\rfloor\right)\\
+\bone_{\Omega_2}\int_{\tau_2}^{\tau_1} \exp\left(-\int_0^t c(X_s)\text{d}s\right)\left(g^a(X_t)-c(X_t)X_t\right)\text{d}t\\
+\bone_{\Omega_2}\left(
\exp\left(-\int_0^{\tau_1} c(X_s)\text{d}s\right)\left(\lfloor v^1\left(X_{\tau_1}\right)\rfloor -X_{\tau_1}\right)-
\exp\left(-\int_0^{\tau_2} c(X_s)\text{d}s\right)\left(\lfloor v^2\left(X_{\tau_2}\right)\rfloor -X_{\tau_2}\right)
\right)
\Big].
\end{multline*}
Note that the first one of the three summands, inside the above expectation, is nonnegative for every $\omega$, by the definition of $\Omega_1$.
Note also that, as $|g^a(x)-c(x)x|\le c_u C$, we have the following bound for the second summand:
\begin{multline*}
\left\vert\EE^x\left[
\bone_{\Omega_2}\int_{\tau_2}^{\tau_1} \exp\left(-\int_0^t c(X_s)\text{d}s\right)\left(g^a(X_t)-c(X_t)X_t\right)\text{d}t
\right]\right\vert\le
c_u C \EE^x |\tau_1-\tau_2|=c_u C \EE^0 \tau' =: \varepsilon(\delta),
\end{multline*}
where 
$$
\tau':=\inf\left\{t\ge0\colon X_t\notin(-1, \delta/\epsilon)\right\},
$$ 
and $\EE^0 \tau'$ is easily seen to go to zero as $\underline{\underline{O}}(\delta)$, as $\delta\to0$.
It only remains to estimate the expectation of the last summand,which can be decomposed as
\begin{multline*}
%\EE^x\Big[\bone_{\Omega_2}\left(
%\exp\left(-\int_0^{\tau_1} c(X_s)\text{d}s\right)\left(\lfloor v^1\left(X_{\tau_1}\right)\rfloor -X_{\tau_1}\right)-
%\exp\left(-\int_0^{\tau_2} c(X_s)\text{d}s\right)\left(\lfloor v^2\left(X_{\tau_2}\right)\rfloor -X_{\tau_2}\right)
%\right)
%\Big]=\\
\EE^x\Big[\bone_{\Omega_2}\bone_{\{\tau_1=\tau_{10}\}}\left(
\exp\left(-\int_0^{\tau_{10}} c(X_s)\text{d}s\right)\left(\lfloor v^1\left(X_{\tau_{10}}\right)\rfloor -X_{\tau_{10}}\right)\right.\\
\left.\phantom{?????????????????????????}
-\exp\left(-\int_0^{\tau_2} c(X_s)\text{d}s\right)\left(\lfloor v^2\left(X_{\tau_2}\right)\rfloor -X_{\tau_2}\right)
\right)\\
+\bone_{\Omega_2}\bone_{\{\tau_1=\tau_{11}\}}\left(
\exp\left(-\int_0^{\tau_{11}} c(X_s)\text{d}s\right)\left(\lfloor v^1\left(X_{\tau_{11}}\right)\rfloor -X_{\tau_{11}}\right)-
\exp\left(-\int_0^{\tau_2} c(X_s)\text{d}s\right)\left(\lfloor v^2\left(X_{\tau_2}\right)\rfloor -X_{\tau_2}\right)
\right)
\Big].
\end{multline*}
As $|v^i(x)-x|\le C$, for all $x\in\RR$, and as 
$$
\PP^x\left(\tau_1=\tau_{11}\right)=\PP^0\left(\inf\left\{t\ge0\,:\, X_t=\frac{\delta}{\epsilon}\right\}>\inf\left\{t\ge0 \,:\,X_t=-1\right\}\right)=\frac{\delta/\epsilon}{1+\delta/\epsilon}=\underline{\underline{O}}(\delta),
$$ 
for $\delta\to0$, we obtain:
\begin{multline*}
\left\vert\EE^x\Big[\bone_{\Omega_2}\bone_{\{\tau_1=\tau_{11}\}}\left(
\exp\left(-\int_0^{\tau_{11}} c(X_s)\text{d}s\right)\left(\lfloor v^1\left(X_{\tau_{11}}\right)\rfloor -X_{\tau_{11}}\right)\right.\right.\\
\left.\left.-\exp\left(-\int_0^{\tau_2} c(X_s)\text{d}s\right)\left(\lfloor v^2\left(X_{\tau_2}\right)\rfloor -X_{\tau_2}\right)
\right)
\Big]\right\vert
\le 2(C+1)\PP^x(\tau_1=\tau_{11})=:\varepsilon(\delta).
\end{multline*}
Finally, we estimate the remaining term from below:
\begin{multline*}
\EE^x\Big[\bone_{\Omega_2}\bone_{\{\tau_1=\tau_{10}\}}\left(
\exp\left(-\int_0^{\tau_{10}} c(X_s)\text{d}s\right)\left(\lfloor v^1\left(X_{\tau_{10}}\right)\rfloor -X_{\tau_{10}}\right)\right.\\
\left.\phantom{?????????????????????????}
-\exp\left(-\int_0^{\tau_2} c(X_s)\text{d}s\right)\left(\lfloor v^2\left(X_{\tau_2}\right)\rfloor -X_{\tau_2}\right)
\right) \\
%\geq \EE^x\Big[
%\bone_{\Omega_2}\bone_{\{\tau_1=\tau_{10}\}}\left(
%\exp\left(-\int_0^{\tau_{10}} c(X_s)\text{d}s\right)\left(\lfloor  J^2\left(X_{\tau_2}\right)\rfloor -X_{\tau_{10}}\right)-
%\exp\left(-\int_0^{\tau_2} c(X_s)\text{d}s\right)\left(\lfloor  J^2\left(X_{\tau_2}\right)\rfloor -X_{\tau_2}\right)
%\right)
%\Big]=\\
\geq\EE^x\Big[
\bone_{\Omega_2}\bone_{\{\tau_1=\tau_{10}\}}\Big(
\left(\lfloor v^2(X_{\tau_2})\rfloor-X_{\tau_2}\right)\left(\exp\left(-\int_0^{\tau_{10}} c(X_s)\text{d}s\right)-\exp\left(-\int_0^{\tau_{2}} c(X_s)\text{d}s\right)\right)\\
\phantom{???????????????}+\left(X_{\tau_2}-X_{\tau_{10}}\right)\exp\left(-\int_0^{\tau_{10}} c(X_s)\text{d}s\right)
\Big)
\Big]\\
\geq -(C+1)\EE^x\left[\exp\left(-\int_0^{\tau_{2}} c(X_s)\text{d}s\right)-\exp\left(-\int_0^{\tau_{1}} c(X_s)\text{d}s\right)\right]-\frac{\delta}{\epsilon},
\end{multline*}
where the first inequality follows from $\lfloor v^1(X_{\tau_{10}})\rfloor=\lfloor v^1(X_{\tau_2}+\delta/\epsilon)\rfloor\ge \lfloor v^2(X_{\tau_2})\rfloor$, by (\ref{eq.J1attau10beatsJ2tau2}), and the second inequality follows from $X_{\tau_2}-X_{\tau_{10}}=-\delta/\epsilon$ and $|\lfloor v^2(x)\rfloor -x|\le C+1$, together with $\tau_1\geq\tau_2$.
It only remains to notice that 
\begin{multline*}
\left\vert\EE^x\left[\exp\left(-\int_0^{\tau_{2}} c(X_s)\text{d}s\right)-\exp\left(-\int_0^{\tau_{1}} c(X_s)\text{d}s\right)\right]\right\vert\le\\
\EE^x \left\vert \int_{\tau_2}^{\tau_1} c(X_s)\text{d}s\right\vert\le
c_u \EE^x |\tau_1-\tau_2| = \underline{\underline{O}}(\delta),
\end{multline*}
which concludes the proof. \qed
\end{proof}

%%%%%%%%%%%%%%%%%%%%%%%%%%%%%%%%%%%%%%%%%

\section{Optimization over continous controls and existence of equilibrium}
\label{sec.responsepricesandequilibrium}

In this section, we, first, address the continuous control part of each agent's optimization problem. Namely, we introduce the feedback control operators and show that the controls they produce are indeed optimal. Our situation is somewhat less regular than the one treated in standard references, hence, we need to exploit the special structure of the problem and develop additional tricks to show this optimality. We, then, prove that these response control operators are continuous in the appropriate topology and show how the system of coupled optimization problems (\ref{eq.2agentjointproblem}) reduces to a fixed point problem of a certain mapping. Finally, we show the continuity of this mapping and the existence of its fixed point, satisfying the desired properties, which completes the proof of Theorem \ref{thm:main}.

For any measurable $v:\RR\rightarrow\RR$, we define the following feedback control operators:
\begin{equation}\label{def.responsecontrols}
\begin{split}
P^a(v)(x):=\min\operatorname{argmax}_{p\in\mathcal{A}^a(x)} \left(p-v(x)\right)F^+\left(p-x\right),\quad x\in\RR,\\
P^b(v)(x):=\max\operatorname{argmax}_{p\in\mathcal{A}^b(x)} \left(v(x)-p\right)F\left(p-x\right),\quad x\in\RR,
\end{split}
\end{equation}
where, for $x\in\RR$, we denote
$$
\mathcal{A}^a(x) := \{p\in\mathbb{Z}\,:\,1-F(p-x) \geq \frac{c_l}{2\lambda}\},
\quad \mathcal{A}^b(x) = \{p\in\mathbb{Z}\,:\,F(p-x) \geq \frac{c_l}{2\lambda}\},
\quad F^+(x):=1-F(x),
$$
with the c.d.f. $F$ (cf. (\ref{eq.Aa.Ab.def})). It is clear that, for a fixed $x\in\RR$, the set $\mathcal{A}^a(x)$ represents the possible values of a continuous control $p^a(x)$, and similarly for $\mathcal{A}^b(x)$.
It is also easy to see that, for any measurable $v$, the functions $P^a(v)$ and $P^b(v)$ are measurable, hence, they belong to $\mathcal{A}^a$ and $\mathcal{A}^b$, respectively (i.e. $P^a(v)$ and $P^b(v)$ are admissible continuous controls).
%To see the measurability of $P^a$, notice that $\sup_{p\in(-\infty,y]} G(p,v,x)$ and $\sup_{p\in\RR} G(p,v,x)$ are continuous in $(v,x)$, where
%$$
%G(p,v,x)=\left(p-v\right)F^+\left(p-x\right),
%$$
%and, hence,
%$$
%\{(v,x): P^a(v,x)\leq y\} = \{(v,x): \sup_{p\in(-\infty,y]} G(p,v,x) = \sup_{p\in\RR} G(p,v,x)\}
%$$
%is Borel.
The following proposition, whose proof is given in the appendix, allows us to reduce the control-stopping problem of an agent to a fixed point problem, associated with optimal stopping and feedback control.

\begin{proposition}\label{prop.Va0=Vatrue}
Let $\sigma>0$ be sufficiently large, so that Proposition \ref{prop.fabderbounds} holds with $w<1$.
Consider any $p^b\in\mathcal{A}^b$ and any admissible barrier $v$, both satisfying the 1-shift property and such that $v'(x)\ge1-w>0$, for all $x\in\RR$.
Assume that there exists a measurable $\overline{V}^{a}:\RR\rightarrow\RR$, such that
\begin{equation}\label{eq.prop.Va0=Vatrue.eq1}
\overline{V}^{a}(x)=\sup_{\tau} J^{a}(x,\tau,P^{a}(\overline{V}^a),p^b,v),\quad\forall\,x\in\RR,
\end{equation}
and $\frac{g^a}{c}(P^{a}(\overline{V}^a)(\cdot),p^b(\cdot),\cdot)$ is $C$-close to $x$. Then,
\begin{equation}\label{eq.prop.Va0=Vatrue.eq2}
\overline{V}^a(x)=\sup_{p^a\in\mathcal{A}^a,\,\tau} J^a(x,\tau,p^a,p^b,v),\quad\forall\,x\in\RR.
\end{equation}
Analogous statement holds for $(\overline{V}^b, J^b,P^b(\overline{V}^b))$.
\end{proposition}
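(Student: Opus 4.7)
The inequality $\overline{V}^a(x) \leq \sup_{p^a \in \mathcal{A}^a,\tau} J^a(x, \tau, p^a, p^b, v)$ is trivial, since $P^a(\overline{V}^a) \in \mathcal{A}^a$, so the substance of the claim is the reverse inequality. The plan is a classical verification argument: combine the HJB variational inequality satisfied by $\overline{V}^a$ (as the value function of an optimal stopping problem with a fixed continuous control) with the pointwise maximality of $P^a(\overline{V}^a)$ to obtain a differential inequality valid for \emph{every} admissible continuous control, and then convert it via It\^o's formula into the required upper bound on $J^a$.

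First, record regularity and the variational inequality. Under hypothesis (\ref{eq.prop.Va0=Vatrue.eq1}), $\overline{V}^a$ is the value function of a linear-diffusion optimal stopping problem with discount $c_{P^a(\overline{V}^a)}$, running reward $g^a_{P^a(\overline{V}^a)}$, and obstacle $\lfloor v\rfloor$. Proposition \ref{prop.vabmcmcharacterization} provides the representation
\[
\overline{V}^a(x)=\phi(x)\operatorname{mcm}\widehat{\left(\lfloor v\rfloor-f^a\right)}(\Fr(x))+f^a(x),
\]
from which $\overline{V}^a$ is absolutely continuous with derivative of bounded variation on compact sets, and hence lies in $\mathbb{W}^{2,loc}(\RR)$ in the sense required for the It\^o--Krylov formula. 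Classical linear-diffusion optimal-stopping theory (or direct analysis of the $\operatorname{mcm}$ representation together with (\ref{eq.faode})) yields
\[
\overline{V}^a(x)\geq \lfloor v(x)\rfloor\quad\forall\,x\in\RR,\qquad \frac{\sigma^2}{2}(\overline{V}^a)''-c_{P^a(\overline{V}^a)}\,\overline{V}^a+g^a_{P^a(\overline{V}^a)}\leq 0\quad\text{a.e.}
\]
To upgrade the second inequality to arbitrary $p^a\in\mathcal{A}^a$, a direct computation from (\ref{def.cpapbx})--(\ref{def.gapapbx}) gives, for every $x$,
\[
g^a_{p^a}(x)-c_{p^a}(x)\overline{V}^a(x)=\lambda\bigl(p^a(x)-\overline{V}^a(x)\bigr)F^+(p^a(x)-x)+R(x),
\]
where the residual $R$ is independent of $p^a$. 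By the definition of $P^a(\overline{V}^a)$, the first summand is pointwise maximized over $p\in\mathcal{A}^a(x)$ at $p=P^a(\overline{V}^a)(x)$, so, for every $p^a\in\mathcal{A}^a$,
\[
\frac{\sigma^2}{2}(\overline{V}^a)''-c_{p^a}\overline{V}^a+g^a_{p^a}\leq 0\quad\text{a.e.}
\]

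Now fix any $p^a\in\mathcal{A}^a$ and $\FF^X$-stopping time $\tau$, and apply the generalized It\^o formula to $t\mapsto \exp\bigl(-\int_0^t c_{p^a}(X_s)\,ds\bigr)\overline{V}^a(X_t)$ on $[0,\tau\wedge\tau_n]$ for a localizing sequence $\tau_n\uparrow\infty$. The inequality above bounds the finite-variation part of the differential from above by $-\exp(-\int_0^t c_{p^a})\,g^a_{p^a}(X_t)\,dt$; taking expectations and using $\overline{V}^a\geq \lfloor v\rfloor$ at the exit time yields
\[
\overline{V}^a(x)\geq\EE^x\Bigl[\int_0^{\tau\wedge\tau_n}e^{-\int_0^t c_{p^a}}g^a_{p^a}(X_t)\,dt+e^{-\int_0^{\tau\wedge\tau_n} c_{p^a}}\lfloor v(X_{\tau\wedge\tau_n})\rfloor\Bigr].
\]
Passing $n\to\infty$ by dominated convergence (justified by $c_{p^a}\geq c_l>0$, by the $C$-closeness to $x$ of $g^a/c$ and of $\overline{V}^a$, the latter inherited from (\ref{eq.prop.Va0=Vatrue.eq1}) via Lemma \ref{lemma.Cfromx}, and by the Gaussian moments of $X$) gives $\overline{V}^a(x)\geq J^a(x,\tau,p^a,p^b,v)$; taking the supremum over $(p^a,\tau)$ completes the argument. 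The proof for $(\overline{V}^b,J^b,P^b(\overline{V}^b))$ is entirely analogous, exchanging $\sup$ for $\inf$ and $\lfloor\cdot\rfloor$ for $\lceil\cdot\rceil$.

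The main obstacle I anticipate is the regularity step: because the obstacle $\lfloor v\rfloor$ is discontinuous, $\overline{V}^a$ can exhibit kinks both at the integer jumps of $\lfloor v\rfloor$ and at the boundary of the continuation region, so the applicability of the generalized It\^o formula must be justified with some care. A robust fallback, should the direct argument prove fragile, is to approximate $\lfloor v\rfloor$ from below by continuous $h_n\uparrow \lfloor v\rfloor$, run the verification with the correspondingly smoother value functions $\overline{V}^a_n$, and pass to the limit using uniform continuity of the $\operatorname{mcm}$ representation in Proposition \ref{prop.vabmcmcharacterization}.
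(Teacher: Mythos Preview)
Your plan shares the paper's core idea: exploit the pointwise identity
\[
g^a_{p^a}(x)-c_{p^a}(x)\overline{V}^a(x)\;\le\;g^a_{P^a(\overline{V}^a)}(x)-c_{P^a(\overline{V}^a)}(x)\overline{V}^a(x),
\]
which follows directly from the definition of $P^a$, and turn it into a comparison between $\overline{V}^a$ and the value function corresponding to an arbitrary $p_0\in\mathcal{A}^a$. The difference is in how that comparison is carried out and, in particular, how the discontinuous obstacle $\lfloor v\rfloor$ is dealt with. You attempt a direct probabilistic verification via It\^o; the paper works analytically. It first proves a geometric lemma showing that $\lfloor v\rfloor$ can be replaced by a \emph{continuous} piecewise-linear majorant $s_\epsilon$ without changing $V^a_0(\cdot,p,p^b,\cdot)$ for any admissible $p$ (this uses the $\operatorname{mcm}$ representation and the bound $(f^a)'\ge 1-w$). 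With a continuous obstacle, standard weighted-Sobolev VI theory (Bensoussan) identifies each value function as the unique solution of a variational inequality, and the VI comparison principle for ordered right-hand sides $f_{p_0}\le f_{p_0}+q$, $q\ge 0$, gives $\overline{V}^a_0\ge V^a_0(\cdot,p_0,p^b,\lfloor v\rfloor)$.

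Your route is a legitimate alternative, but two points need tightening. First, the claim $\overline{V}^a\in\mathbb{W}^{2,loc}$ is not correct: a concave $\operatorname{mcm}$ only has a BV first derivative, and its second derivative is a nonpositive \emph{measure} that may carry Diracs at kinks; you need It\^o--Tanaka--Meyer, not It\^o--Krylov. The sign of those singular contributions is favorable (concave kinks feed a nonpositive local-time term into the discounted value process), so the verification survives once this is said explicitly. Second, deriving the a.e.\ HJB inequality for $\overline{V}^a$ directly from the $\operatorname{mcm}$ representation with a discontinuous obstacle is exactly the step the paper circumvents with its $s_\epsilon$ lemma; the paper notes that off-the-shelf results connecting optimal stopping and VI do not cover discontinuous obstacles. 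Your proposed fallback (continuous $h_n\uparrow\lfloor v\rfloor$ from below) is in the opposite direction from the paper's approximation from above and would still require a limit passage for the value functions. A cleaner shortcut, available here, is to observe that the strict monotonicity of $\overline{V}^a$ (Proposition~\ref{prop.Vmonotonicity}) forces the contact set $\{\overline{V}^a=\lfloor v\rfloor\}$ to lie in the discrete set of jump points of $\lfloor v\rfloor$; hence the $\operatorname{mcm}$ is piecewise affine, $\overline{V}^a$ is piecewise $\mathbb{W}^{2,loc}$ with at most one kink per period, and the It\^o--Tanaka verification goes through without any approximation.
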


\begin{remark}
Note that it is easier to prove the converse statement: i.e., $\overline{V}^a$, defined by (\ref{eq.prop.Va0=Vatrue.eq2}), satisfies (\ref{eq.prop.Va0=Vatrue.eq1}). However, it does not imply the statement of the proposition, as there may exist multiple solutions to (\ref{eq.prop.Va0=Vatrue.eq1}). For the subsequent results, it is important to show that any solution to (\ref{eq.prop.Va0=Vatrue.eq1}) satisfies (\ref{eq.prop.Va0=Vatrue.eq2}).
\end{remark}

Proposition \ref{prop.Va0=Vatrue} allows us to sidestep the optimization over $p^a$ or $p^b$, in the definitions of $V^a$ and $V^b$, respectively, by using the feedback controls $P^a$ and $P^b$ throughout. 

Next, we notice that Assumption \ref{ass.Foverfmono}, in particular, implies that the optimal feedback prices are always $C'_0=C_0+1$-close to $x$, and also inherit the 1-shift property from the barriers they correspond to. This observation is formalized in the following lemma.

\begin{lemma}\label{lemma.responsecontrolprops}
Let
$$
p^a(x):=P^a(v)(x),\,\,p^b(x):=P^b(v)(x),\quad\forall\,x\in\RR,
$$
for some admissible barrier $v$. Then, for all $x\in\RR$,
$$
|p^a(x)-x|\le C'_0,\quad |p^b(x)-x|\le C'_0,
\quad \left\vert g^{a/b}(x)/c(x)\right\vert\le C'_0.
$$
If, in addition, $v$ has 1-shift property, then so do $p^a$ and $p^b$.
\end{lemma}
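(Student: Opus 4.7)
The lemma contains three assertions: the feedback prices $p^a,p^b$ are $C'_0$-close to $x$, the induced ratios $g^{a/b}/c$ are $C'_0$-close to $x$, and $p^a,p^b$ inherit the 1-shift property from $v$. The plan is to address them in this order; only the first bound carries any real content, the other two being a direct computation and a change of variables, respectively.

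For $|p^a(x)-x|\le C'_0$ the upper bound is immediate from admissibility: since $f$ is supported in $[-C_0,C_0]$ we have $F(y)=1$ for $y\ge C_0$, so $p\in\mathcal{A}^a(x)$ forces $p-x<C_0\le C'_0$. For the lower bound I would argue by contradiction. Suppose $p^a(x)-x\le -C_0-1$. Then both $p^a(x)-x$ and $p^a(x)+1-x$ lie at or below $-C_0$, so $F^+$ equals $1$ at both points, and in particular $p^a(x)+1\in\mathcal{A}^a(x)$. A one-line computation then yields
\begin{equation*}
(p^a(x)+1-v(x))F^+(p^a(x)+1-x)-(p^a(x)-v(x))F^+(p^a(x)-x)=1>0,
\end{equation*}
contradicting the definition of $p^a(x)$ as an argmax. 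Hence $p^a(x)-x>-C_0-1\ge -C'_0$. The bound for $p^b$ is symmetric: admissibility gives the lower bound $p^b(x)-x>-C_0$, while an identical one-step perturbation (comparing $p^b(x)$ with $p^b(x)-1$, using $F=1$ past $C_0$) rules out $p^b(x)-x\ge C_0+1$.

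For the $C'_0$-closeness of $g^{a/b}/c$, I would substitute (\ref{def.cpapbx})--(\ref{def.Fbpbx}) and rewrite
\begin{equation*}
\frac{g^a}{c}(x)-x=\frac{(p^a(x)-x)(1-F(p^a(x)-x))+\int_{-\infty}^{p^b(x)-x}(\lfloor x+\alpha y\rfloor-x)\,dF(y)}{(1-F(p^a(x)-x))+F(p^b(x)-x)},
\end{equation*}
identifying the denominator with $c(x)/\lambda$. The first numerator term is bounded in absolute value by $C'_0(1-F(p^a(x)-x))$ thanks to the price bound just proved, while on $\text{supp}(F)$ the integrand satisfies $|\lfloor x+\alpha y\rfloor-x|\le \alpha C_0+1\le C'_0$ (since $|\alpha y|\le C_0$ and the floor adds at most $1$), so the second numerator term is bounded by $C'_0 F(p^b(x)-x)$. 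Dividing yields $|g^a/c-x|\le C'_0$, and the argument for $g^b/c$ is entirely analogous.

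For the 1-shift property, I would perform the substitution $p\mapsto p+1$ in the definition of $P^a(v)(x+1)$. The constraint defining $\mathcal{A}^a(\cdot)$ depends only on $p-x$, so $\mathcal{A}^a(x+1)=\mathcal{A}^a(x)+1$, and the feedback objective transforms under $(p,x,v(x))\mapsto(p+1,x+1,v(x)+1)$ into itself thanks to the 1-shift of $v$. Taking the minimum of the argmax then gives $P^a(v)(x+1)=P^a(v)(x)+1$; the $P^b$ case is identical. The only mild obstacle in the whole proof is ensuring that the perturbation in the second paragraph stays in $\mathcal{A}^a(x)$ so the argmax comparison is legal, which is immediate from $F^+\equiv 1$ on $(-\infty,-C_0]$.
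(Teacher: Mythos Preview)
Your proof is correct and follows essentially the same route as the paper. The paper asserts the bounds on $p^{a/b}(x)-x$ as ``easy to see'' from the support of $\xi$, while you spell out the one-step perturbation ($p^a\mapsto p^a+1$ in the region where $F^+\equiv 1$) that underlies it; the treatment of $g^{a/b}/c-x$ and of the 1-shift property matches the paper's almost line for line.
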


\begin{proof}
From the definition (\ref{def.responsecontrols}) and the fact that $\supp\,\xi\subset[-C_0,C_0]$, it is easy to see that $p^a(x)-x$ must be no smaller than the largest integer $\le -C_0$ and no larger than the smallest integer $\ge C_0$. Hence, 
$$
p^a(x)\ge x-C'_0,\quad p^a(x)\le x+C'_0.
$$
Similar conclusion holds for $p^b(x)$.
From (\ref{def.cpapbx}), (\ref{def.gapapbx}) we obtain
$$
\left\vert\frac{g^a}{c}(x)-x\right\vert=
\left\vert
\frac{(p^a(x)-x)\left(1-F(p^a(x)-x)\right)+\mathcal{F}^b(p^b(x),x)-xF(p^b(x)-x)}{\left(1-F(p^a(x)-x)\right)+F(p^b(x)-x)}
\right\vert.
$$
An analogous statement holds for $g^b$.
Thus, to prove the claim, it suffices to show
$$
|p^a(x)-x|\le C'_0,\quad |\mathcal{F}^b(p^b(x),x)-x|\le C'_0\, F(p^b(x)-x).
$$
The first inequality has already been established. For the second one, we have
$$
\mathcal{F}^b(p^b(x),x)-x=
\int_{-\infty}^{p^b(x)-x}\lfloor x+\alpha y\rfloor -x\, \text{d}F(y)
$$
To finish the proof, we notice that
$$
\left\vert \lfloor x+\alpha y\rfloor -x \right\vert\le C'_0,
$$
when $y\in\text{supp}\,\xi$ (as $\text{d}F(y)=0$ otherwise). The claim for $g^b/c$ can be proven analogously.
The 1-shift property of $p^a$ and $p^b$, given that $v$ satisfies it, is immediate from (\ref{def.responsecontrols}).
\qed
\end{proof}

In view of the above lemma, it is natural to choose $C'_0$ as the constant $C$, appearing in ``$C$-close to $x$" property. Note that $C'_0$ satisfies (\ref{eq.C.def}).

Next, for any admissible barriers $(v^a,v^b)$, we define
\begin{equation}\label{def.overlinePhi}
\overline{\Phi}(v^a,v^b)=\left(V^a\left(\cdot,P^a(v^a),P^b(v^b),v^b\right),\,
V^b(\cdot,P^a(v^a),P^b(v^b),v^a)\right).
\end{equation}
Lemmas \ref{lemma.Cfromx} and \ref{lemma.responsecontrolprops} imply that the components of $\overline{\Phi}$ are admissible barriers, and we can iterate this mapping. In fact, we are only interested in the restriction of $\overline{\Phi}$ to the sets $A_0$ and $A_0(w)$, defined below.

\begin{definition}\label{def.A0}
The set $A_0$ consists of all continuous real-valued functions on $\RR$, which are $C'_0$-close to $x$ and satisfy the 1-shift property. For any $w\geq0$, we say that $v\in A_0(w)$, if $v\in A_0$, $v$ is absolutely continuous, and $1-w\le v'\le 1+w$ a.e..
We equip $A_0$ and $A_0(w)$ with the topology of uniform convergence on all compacts.
\end{definition}

Lemmas \ref{lemma.shiftproperty}, \ref{lemma.Cfromx}, and \ref{lemma.responsecontrolprops}, show that $\overline{\Phi}$ maps $A_0\times A_0$ into itself. In addition, Proposition \ref{prop.Vmonotonicity} shows that $\overline{\Phi}$ maps $A_0\times A_0$ into $A_0(w)\times A_0(w)$, where $w$ can be chosen to be arbitrarily small for sufficiently large $\sigma>0$.
Using Proposition \ref{prop.Va0=Vatrue}, we show, below, that a fixed point of this mapping in the appropriate subset gives a solution to the system (\ref{eq.2agentjointproblem}). Hence, our next goal is to establish the existence of such a fixed point.
The first step is to show that $\overline{\Phi}$ is continuous on $A_0(w)\times A_0(w)$, for $w<1$. To this end, we, first, choose the appropriate space and topology for the feedback price controls $P^a(v)$ and $P^b(v)$, and show that they are continuous in $v\in A_0(w)$. Then, we show that $V^a\left(\cdot,p^a,p^b,v\right)$ and $V^b\left(\cdot,p^a,p^b,v\right)$ are continuous as operators acting on functions $(p^a,p^b)$, with respect to the chosen topology, uniformly in $v\in A_0$. This, together with the continuity of $V^a\left(\cdot,p^a,p^b,v\right)$ and $V^b\left(\cdot,p^a,p^b,v\right)$ in $v$, established in Proposition \ref{prop.VcontinuousinJ}, yields the continuity of $\overline{\Phi}$.

Let us define the space for the feedback price controls.
\begin{definition}
Denote by $B^a_0$ and $B^b_0$ the subspaces of $\mathcal{A}^a$ and $\mathcal{A}^b$, respectively, consisting of all functions that are $C'_0$-close to $x$ and satisfy the 1-shift property.
We equip $B^{a/b}_0$ with the topology induced by their natural restriction to $\mathbb{L}^1([0,1])$ (in view of the 1-shift property).
\end{definition}
%The above definition also implies the functions in $B_0$ are $C_0$-close to $x$, by the definition of admissibility \ref{def.admissibleprices} and assumption \ref{ass.Foverfmono}. 
Note that $P^{a}(v)\in B^a_0$ and $P^b(v)\in B^b_0$, for any $v\in A_0$.
%{\color{red}As we saw in the previous subsection, for the purposes of finding an equilibrium/fixed point (we are not constructing an actual equilibrium, so, you need to rephrase)} it's sufficient to only consider $v^a,v^b,p^a,p^b$ satisfying the 1-shift property $f(x+1)=f(x)+1$. 
%{\color{red}this discussion of ``sufficiency" is confusing and completely unnecessary; you just need to state that these are the properties we expect from the solution and this is the space in which we construct a solution; this should be stated even earlier; this is also why it is important to introduce (and motivate) the space of candidate value functions from the very beginning}
%If $v^a$, $v^b$ satisfy this property, then so do $P^a(v^a)$ and $P^b(v^b)$, by Lemma \ref{lemma.responsecontrolprops}. {\color{red}One can easily check that the space of measurable $p^a$, $p^b$, satisfying the 1-shift condition, can be identified with a subspace of $\mathbb{L}^1\left([0,1]\right)$ (restriction mapping being an isomorphism) (is this what you were trying to say?).}
%We will use the topology inherited from {\color{red} that space ($L^1$??)} when establishing continuity {\color{red}of (do you mean ``w.r.t." instead of ``of"??)} $p^a$, $p^b$ throughout the rest of this section.
The following, somewhat tricky, lemma is the first one of the two remaining results we need to establish the continuity of $\overline{\Phi}$. 

\begin{lemma}\label{lemma.PofVcontinuous}
For any $w\in[0,1)$, the mappings
$$
v\mapsto P^a(v),\quad v\mapsto P^b(v),
$$
from $A_0(w)$ into $B^a_0$ and $B^b_0$, respectively, are continuous.
\end{lemma}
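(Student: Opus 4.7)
The plan is to prove $L^1([0,1])$ convergence of $P^a(v_n) \to P^a(v)$ via pointwise a.e.\ convergence, exploiting the discreteness of the codomain. By Lemma \ref{lemma.responsecontrolprops}, the functions $P^a(v_n)$ and $P^a(v)$ are uniformly bounded integer-valued on $[0,1]$, so dominated convergence reduces the claim to pointwise a.e.\ convergence.

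For the pointwise step, fix $x\in[0,1]$. The argmax in (\ref{def.responsecontrols}) is taken over the finite set $\mathcal{A}^a(x)$, and the objective $(p-v(x))F^+(p-x)$ depends on $v$ only through the scalar $v(x)$, continuously. If $p^* := P^a(v)(x)$ is the \emph{strictly unique} maximizer on $\mathcal{A}^a(x)$, there is a positive gap between the objective at $p^*$ and at every other competitor; since $v_n(x)\to v(x)$ by uniform convergence on the compact $[0,1]$, this gap survives for $n$ large, forcing $P^a(v_n)(x)=p^*$.

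It remains to show that the set $E\subset[0,1]$ where the argmax fails to be strictly unique has Lebesgue measure zero. Since $P^a(v)(x)\in[x-C'_0,x+C'_0]\cap\ZZ$, only finitely many pairs $(p_1,p_2)$ of competing integers arise as $x$ ranges over $[0,1]$, so it suffices to show, for each fixed pair with $p_1\ne p_2$, that
\[
\phi(x) := (p_1-v(x))F^+(p_1-x) - (p_2-v(x))F^+(p_2-x)
\]
has a zero set of measure zero. The function $\phi$ is Lipschitz, since $v$ is Lipschitz ($v\in A_0(w)$) and $F^+$ is Lipschitz with constant $C_f$ by Assumption \ref{ass.Foverfmono}. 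By Stampacchia's gradient-on-level-set lemma, $\phi'=0$ a.e.\ on $\{\phi=0\}$, so if $|\{\phi=0\}|>0$ there would exist $x_0$ with $\phi(x_0)=\phi'(x_0)=0$.

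The main obstacle is ruling this out. Assume WLOG $p_1>p_2$, and set $A := (p_i-v(x_0))F^+(p_i-x_0)$, which is common to $i=1,2$ by $\phi(x_0)=0$; a short sign comparison in $\phi(x_0)=0$ (using $F^+$ non-increasing and the admissibility bound $F^+\geq c_l/(2\lambda)>0$ to rule out $A\leq 0$) first forces $A>0$. Substituting into $\phi'(x_0)=0$ and writing $\rho_i := F^+(p_i-x_0)/f(p_i-x_0)$ yields
\[
v'(x_0)\bigl[F^+(p_1-x_0) - F^+(p_2-x_0)\bigr] \;=\; A\bigl(1/\rho_1 - 1/\rho_2\bigr).
\]
The monotone hazard condition $F^+/f$ non-increasing (Assumption \ref{ass.Foverfmono}) combined with $p_1>p_2$ gives $\rho_1\leq\rho_2$, so the right side is $\geq 0$, while the left side is $\leq 0$ (as $F^+$ is non-increasing and $v'(x_0)\geq 1-w>0$). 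Hence both sides vanish, forcing $F^+(p_1-x_0)=F^+(p_2-x_0)$; combined with $\phi(x_0)=0$ and $F^+>0$ from (\ref{eq.Aa.Ab.def}), this collapses to $p_1=p_2$, a contradiction. Therefore $|E|=0$, yielding the a.e.\ pointwise (and hence $L^1$) convergence of $P^a(v_n)$ to $P^a(v)$. The argument for $P^b$ is entirely symmetric, using the non-decreasing hazard $F/f$ from Assumption \ref{ass.Foverfmono}.
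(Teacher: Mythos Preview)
Your proof is correct and follows a genuinely different route from the paper's.

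The paper proceeds in two steps: first it shows that $x\mapsto P^a(v)(x)$ is nondecreasing (for any fixed $v\in A_0(w)$), via a ratio comparison that exploits the monotone hazard condition; then, combining this with the $1$-shift property, it concludes that $P^a(v)$ coincides (up to a null set) with $x\mapsto\lfloor x-\alpha\rfloor$ for some shift $\alpha=\alpha(v)$, and shows that $\alpha$ depends continuously on $v$ by a direct comparison of shifts.

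Your approach bypasses the monotonicity of $P^a(v)(\cdot)$ entirely and instead establishes almost-everywhere uniqueness of the argmax through the level-set/derivative argument. The key ingredient is the same (the monotone hazard condition of Assumption~\ref{ass.Foverfmono}, together with $v'\ge 1-w>0$), but you use it differently: rather than comparing objective ratios at two $x$-values, you differentiate the tie function $\phi$ at a single point and obtain a sign contradiction. This yields a shorter and arguably more transparent proof of the lemma itself, at the cost of invoking the (standard but slightly less elementary) fact that a Lipschitz function has vanishing derivative a.e.\ on its level sets.

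Two minor points of phrasing. First, your sentence ``it suffices to show \ldots\ that $\phi$ has a zero set of measure zero'' is slightly stronger than what is needed and than what your argument actually delivers: on the region where both $p_1,p_2$ fail to be admissible (i.e., $F^+=0$), $\phi$ can vanish identically, so $\{\phi=0\}$ need not be null. What you in fact prove (and what suffices) is that $\{\phi=0\}\cap\{p_1,p_2\in\mathcal{A}^a(x)\}$ is null, since your $A>0$ step uses $F^+(p_i-x_0)>0$. Second, the paper's route yields, as a byproduct, the structural fact that $P^a(v)$ is monotone in $x$; your argument does not recover this, though it is not needed for the lemma.
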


\begin{proof}
We only show the $P^a$ version, the $P^b$ one being analogous. The proof consists of two steps. First, we show that, given $v$, with the properties described in the statement of the lemma (in particular, $v$ is increasing), $P^a(v)(x)$ is also an increasing function of $x$. Then, we use this monotonicity property to show the desired continuity of $P^a$.

{\bf Step 1.}  For a fixed $v$, we denote $p_x = P^a(v)(x)$. Assume, to the contrary, that for some $x_1>x_2$ we have $p_{x_1}<p_{x_2}$. Note that the set of admissible control values, $\mathcal{A}^a(x)$, shifts upward when $x$ increases. Therefore, if $p_{x_2}$ is an admissible control value at $x=x_2<x_1$, and $p_{x_2}>p_{x_1}$, with $p_{x_1}$ being an admissible control value at $x=x_1$, then $p_{x_2}$ is an admissible control value at $x=x_1$. Similarly, if $p_{x_1}<p_{x_2}$, then $p_{x_1}$ is admissible at $x=x_2$. Thus, to obtain a contradiction, it suffices to show that $p_{x_2}$ yields a better local objective value than $p_{x_1}$, at $x=x_1$: i.e.
\begin{equation}\label{eq.tmp1}
(p_{x_2}-v(x_1))F^+(p_{x_2}-x_1)>(p_{x_1}-v(x_1))F^+(p_{x_1}-x_1).
\end{equation}
The above inequality, clearly, holds if $p_{x_1}\le v(x_1)$. Hence, without loss of generality, we assume $p_{x_1}>v(x_1)$.
Then, (\ref{eq.tmp1}) is equivalent to:
\begin{equation}\label{eq.sect4.pZmonoratiodesired}
\frac{p_{x_2}-v(x_1)}{p_{x_1}-v(x_1)}>\frac{F^+(p_{x_1}-x_1)}{F^+(p_{x_2}-x_1)}
\end{equation}
Note that
\begin{equation}\label{eq.tmp2}
(p_{x_2}-v(x_2))F^+(p_{x_2}-x_2)\ge(p_{x_1}-v(x_2))F^+(p_{x_1}-x_2),
\end{equation}
due to the fact that $p_{x_2}$ is the optimal price at $x=x_2$ and, hence, is not worse than $p_{x_1}$ (which is admissible at $x=x_2$). The assumption $p_{x_1}>v(x_1)$ also implies that $p_{x_2}>v(x_2)$, as $v(x_2)<v(x_1)$. Hence, the inequality (\ref{eq.tmp2}) is equivalent to
\begin{equation}\label{eq.sect4.pZmonoratiogiven}
\frac{p_{x_2}-v(x_2)}{p_{x_1}-v(x_2)}\ge \frac{F^+(p_{x_1}-x_2)}{F^+(p_{x_2}-x_2)}
\end{equation}
To get the desired contradiction, it suffices to notice that 
$$
\frac{p_{x_2}-v}{p_{x_1}-v}=1+\frac{p_{x_2}-p_{x_1}}{p_{x_1}-v}
$$
is strictly increasing in $v\in\RR$, for $v<p_{x_1}$, and that
$$
\frac{F^+(p_{x_1}-x)}{F^+(p_{x_2}-x)}
$$
is decreasing in $x$. 
%{\color{red}($x\in\RR$?)} 
The former is obvious, while the latter follows from
\begin{multline*}
\frac{\partial}{\partial x}\left(\frac{F^+(p_{x_1}-x)}{F^+(p_{x_2}-x)}\right)=
\frac{f(p_{x_1}-x)F^+(p_{x_2}-x)-f(p_{x_2}-x)F^+(p_{x_1}-x)}{F^+\left(p_{x_2}-x\right)^2}=\\\frac{f(p_{x_1}-x)f(p_{x_2}-x)}{F^+\left(p_{x_2}-x\right)^2}
\left(
\frac{F^+}{f}(p_{x_2}-x)-\frac{F^+}{f}(p_{x_1}-x)
\right)\le 0,
\end{multline*}
which, in turn, follows from the fact that $F^+/f$ is decreasing, by Assumption \ref{ass.Foverfmono}.
Given the above monotonicity properties of the terms in (\ref{eq.sect4.pZmonoratiogiven}), we deduce (\ref{eq.sect4.pZmonoratiodesired}), thus, obtaining the desired contradiction and proving the monotonicity of $P^a(v)(\cdot)$.

{\bf Step 2.} One can easily check that $P^a(v_1)\ge P^a(v_2)$, if $v_1(x)\ge v_2(x)$ for all $x$. To show that $P^a(v_1)$ and $P^a(v_2)$ are close in the topology of $B^a_0$, it suffices to show that 
$$
\int_0^1 \left\vert P^a(v_1)-P^a(v_2) \right\vert \text{d}x
$$ 
is small.
Note that the monotonicity and the 1-shift property of the integer-valued function $P^a(v_2)$ (cf. Lemma \ref{lemma.responsecontrolprops}) imply that it coincides with $\lfloor x-\alpha_2 \rfloor$, for $x\in[0,1]$ (except, possibly, the jump points of the latter function), for some $\alpha_2$.
Similar conclusion holds for $P^a(v_1)$, with some $\alpha_1$. W.l.o.g. we assume that $\alpha_1\ge\alpha_2$, and, hence, $v_2\ge v_1$.
%Notice that
%$$
%\int_0^1 \left\vert P^a(v_1) - P^a(v_2) \right\vert \text{d}x=
%\int_0^1 \left\vert P^a(v_1\wedge v_2) - P^a(v_1\vee v_2) \right\vert \text{d}x
%$$
%as $P^a(v)(x)$ is uniquely determined by the value of $v$ at $x$, and $\{v_1(x),v_2(x)\}=\{v_1(x)\vee v_2(x),v_1(x)\wedge v_2(x)\}$ for all $x$.
%Thus, without loss of generality, we can assume that $v_2\ge v_1$. 
Assume that $v_1$ and $v_2$ are also $\delta$-close in $\sup$-norm, so that we have $v_2\le v_1+\delta$.
%Then, we need to show that 
%$$
%\int_0^1 \left\vert P^a(v_2)-P^a(v_1) \right\vert \text{d}x\to 0
%$$ 
%as $\delta\to0$.
%Note by 1-shift property of $V^a$ inherited by $P^a(V^a)$ the interval of integration $[0,1]$ can be replaced by any other interval of unit length.
If we can, moreover, show that 
\begin{equation}\label{eq.sect4.alphasdifway}
\alpha_1\le\alpha_2+\frac{\delta}{1-w},
\end{equation}
then, a straightforward calculation would yield 
$$
\int_0^1 \left\vert P^a(v_1)-P^a(v_2) \right\vert \text{d}x = O(\delta).
$$
Thus, it remains to show (\ref{eq.sect4.alphasdifway}). To this end, we note that, due to the definition of $A_0(w)$, for every $x\in\RR$, there exists $x_*\in[x,x+\delta/(1-w)]$, such that $v_1(x_*)=v_2(x)$.
Assuming that 
$$
P^a(v_1)(x_*)<P^a(v_2)(x),
$$ 
and recalling that $x_*\ge x$ and $v_1(x_*)=v_2(x)$, we follow the arguments in \emph{Step 1} to obtain a contradiction.
Thus, 
$$
P^a(v_1)(x_*)\ge P^a(v_2)(x),
$$ 
which implies (\ref{eq.sect4.alphasdifway}).
\qed
\end{proof}

The following lemma, whose proof is given in the appendix, provides the last result we need in order to show the continuity of $\overline{\Phi}$. Recall the definition of $(J^a_0,J^b_0)$ and $(V^a_0,V^b_0)$, given in (\ref{def.Ja0})--(\ref{def.Vb0}).

\begin{lemma}\label{lemma.Vofpcontinuous}
The operators $(p^a,p^b)\mapsto \cdot + V^a_0(\cdot,p^a,p^b,v),\,\cdot + V^b_0(\cdot,p^a,p^b,v)$, from  $B^a_0\times B^b_0$ into $A_0$, are continuous, uniformly over $v\in A_0$.
\end{lemma}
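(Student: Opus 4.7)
Since $V^a_0(\cdot,p^a,p^b,v)$ is $1$-periodic by Lemma \ref{lemma.shiftproperty}, the $A_0$-topology of uniform convergence on compacts reduces to genuine uniform convergence on $\RR$. Thus the goal is to establish an estimate of the form
\[
\sup_{x\in\RR}|V^a_0(x,p^a,p^b,v)-V^a_0(x,\tilde p^a,\tilde p^b,v)|\leq K(\|p^a-\tilde p^a\|_{L^1([0,1])}+\|p^b-\tilde p^b\|_{L^1([0,1])}),
\]
with $K$ depending only on $(\sigma,\lambda,F,\alpha,c_l)$ and, in particular, independent of $v\in A_0$. Writing $V^a_0=\sup_\tau J^a_0$, it is enough to produce the analogous bound for $|J^a_0(x,\tau,p^a,p^b,v)-J^a_0(x,\tau,\tilde p^a,\tilde p^b,v)|$ uniformly in the stopping time $\tau$. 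The continuity in $x$ of the resulting value function, which is needed for membership in $A_0$, is standard for optimal stopping problems with a continuous barrier $v$ and Brownian underlying.

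\textbf{Reduction to a resolvent estimate.} Since $p^a,\tilde p^a$ are $\ZZ$-valued and share the $1$-shift property, the disagreement set $A:=\{p^a\neq\tilde p^a\}\cup\{p^b\neq\tilde p^b\}$ is $1$-periodic with
\[
|A\cap[0,1]|\leq\|p^a-\tilde p^a\|_{L^1([0,1])}+\|p^b-\tilde p^b\|_{L^1([0,1])}=:\delta,
\]
and off $A$ the coefficients $c(p^a(\cdot),p^b(\cdot),\cdot)$ and $g^a(p^a(\cdot),p^b(\cdot),\cdot)-c(p^a(\cdot),p^b(\cdot),\cdot)\,x$ agree with their tilded counterparts. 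Both adjusted running rewards are uniformly bounded by a constant $M$ depending only on $(C,\lambda,F)$, and the terminal payoff $v(X_\tau)-X_\tau$ is bounded by $C$ by admissibility of $v\in A_0$. Decomposing $J^a_0-\tilde J^a_0$ by telescoping into (a) a difference of running-reward integrands with common discount, (b) a difference of discount factors multiplying the tilded running reward, and (c) a difference of terminal discount factors multiplying the tilded terminal payoff, and applying $|e^{-a}-e^{-b}|\leq e^{-\min(a,b)}|a-b|$ together with $c\geq c_l$, each term is majorized, after Fubini, by a fixed constant multiple of
\[
\EE^x\Big[\int_0^\infty e^{-c_l s}\,\bone_A(X_s)\,ds\Big].
\]
The crucial point is that $c\geq c_l>0$ eliminates any dependence on the (possibly unbounded) stopping time $\tau$ or on $v$.

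\textbf{Occupation-time estimate and conclusion.} The $c_l$-resolvent kernel of $X$ is $G(z)=(\sigma\sqrt{2c_l})^{-1}\exp(-|z|\sqrt{2c_l}/\sigma)$, which is bounded and exponentially decaying. Writing
\[
\EE^x\Big[\int_0^\infty e^{-c_l s}\,\bone_A(X_s)\,ds\Big]=\int_A G(y-x)\,dy
\]
and splitting the right-hand side into integrals over the unit intervals $[n,n+1]$, the $1$-periodicity of $A$ and the geometric decay of $G$ yield $\int_A G(y-x)\,dy\leq K_{\sigma,c_l}\cdot|A\cap[0,1]|\leq K_{\sigma,c_l}\,\delta$, uniformly in $x\in\RR$. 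Combined with the decomposition above, this gives $|J^a_0(x,\tau,p^a,p^b,v)-J^a_0(x,\tau,\tilde p^a,\tilde p^b,v)|\leq K\delta$ uniformly in $(x,\tau,v)$, whence the desired bound on $|V^a_0-\tilde V^a_0|$ upon taking suprema. The proof for $V^b_0$ is identical after swapping $g^a$ for $g^b$. The main technical obstacle is term (b): naively, the iterated integral in $\tau$ and in the discount-factor difference could grow with $\tau$, but this is tamed by interchanging the order of integration and using the uniform exponential discount $c\geq c_l$, which converts the inner $\bone_A(X_s)$ integral into a resolvent.
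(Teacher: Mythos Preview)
Your proof is correct and follows essentially the same route as the paper's: both bound $|J^a_0-\tilde J^a_0|$ uniformly in $\tau$ via the same three-term telescoping, use $|e^{-a}-e^{-b}|\le e^{-\min(a,b)}|a-b|$ together with $c\ge c_l$, interchange the order of integration in the iterated term, and finish with a resolvent/Gaussian-kernel estimate reducing everything to the $L^1([0,1])$ distance of the price controls.

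The only noteworthy difference is in how the coefficient differences are controlled. The paper invokes the Lipschitz property of $(p^a,p^b)\mapsto c(p^a,p^b,x)$ and $(p^a,p^b)\mapsto g^a(p^a,p^b,x)-c(p^a,p^b,x)x$ (via the bounded density $f$ from Assumption~\ref{ass.Foverfmono}) to arrive at $\EE^x\int_0^\infty e^{-c_l t}(|p^a_1-p^a_2|+|p^b_1-p^b_2|)(X_t)\,dt$, whereas you exploit the integer-valuedness of the controls to pass directly to $\EE^x\int_0^\infty e^{-c_l t}\bone_A(X_t)\,dt$ with $|A\cap[0,1]|\le\delta$. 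Since $\bone_A\le |p^a-\tilde p^a|+|p^b-\tilde p^b|\le 2C'_0\,\bone_A$, the two bounds are equivalent; your version is slightly more elementary in that it does not require the Lipschitz estimate on $c,g^a$. You also spell out the resolvent computation explicitly, which the paper leaves as ``standard estimates of a Gaussian kernel.''
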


Finally, we can state the main result of this section.

\begin{theorem}\label{thm.fixedpointexistence}
Let $\sigma$ be sufficiently large so that $w$, defined in Proposition \ref{prop.Vmonotonicity}, is strictly less than one.
Then, the set $A:=A_0(w)\times A_0(w)$ is a compact closed convex subset of $C(\mathbb{R})\times C(\mathbb{R})$, with the topology of uniform convergence on all compacts. Moreover, the mapping $\overline{\Phi}$, defined in (\ref{def.overlinePhi}), is a continuous mapping of $A$ into itself, and it has a fixed point.
\end{theorem}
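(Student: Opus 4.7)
The plan is to verify the three properties of $A$ separately, then assemble the already-proven lemmas into a continuity statement for $\overline{\Phi}$, and conclude by the Schauder--Tychonoff fixed point theorem.

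First I would establish the topological properties of $A$. Convexity of $A_0(w)$ is immediate: the $C'_0$-close-to-$x$ property, the 1-shift property, and the a.e.\ bound $1-w\le v'\le 1+w$ are all preserved under convex combinations; the product $A$ is therefore convex as well. For closedness, if $v_n\to v$ uniformly on compacts with $v_n\in A_0(w)$, the pointwise bound $|v(x)-x|\le C'_0$ and the identity $v(x+1)=v(x)+1$ pass to the limit, while the two-sided Lipschitz estimates $(1-w)(y-x)\le v_n(y)-v_n(x)\le (1+w)(y-x)$ for $y>x$ transfer to $v$, yielding $v\in A_0(w)$. For compactness, since $w<1$ each $v\in A_0(w)$ is $(1+w)$-Lipschitz and satisfies $|v(0)|\le C'_0$, so the family is uniformly equicontinuous and pointwise bounded on each compact subset of $\mathbb{R}$; Arzel\`a--Ascoli together with closedness gives compactness of $A_0(w)$, and hence of $A$, in the metrizable locally convex space $C(\mathbb{R})\times C(\mathbb{R})$.

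Next I would check that $\overline{\Phi}(A)\subset A$. Given $(v^a,v^b)\in A$, Lemma \ref{lemma.responsecontrolprops} shows that $P^a(v^a)\in B^a_0$ and $P^b(v^b)\in B^b_0$ have the 1-shift property and that the associated $g^{a/b}/c$ are $C'_0$-close to $x$. Lemma \ref{lemma.shiftproperty} then transfers the 1-shift property to the two components of $\overline{\Phi}(v^a,v^b)$, Lemma \ref{lemma.Cfromx} gives the $C'_0$-closeness to $x$, and Proposition \ref{prop.Vmonotonicity} provides the absolute continuity together with the a.e.\ derivative bound $|V'-1|\le w$. Thus both components lie in $A_0(w)$ and $\overline{\Phi}:A\to A$.

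The continuity of $\overline{\Phi}$ on $A$ is obtained by decomposing the map as
$$
(v^a,v^b)\mapsto (P^a(v^a),P^b(v^b),v^a,v^b)\mapsto (V^a(\cdot,P^a(v^a),P^b(v^b),v^b),V^b(\cdot,P^a(v^a),P^b(v^b),v^a)).
$$
The first arrow is continuous from $A_0(w)\times A_0(w)$ into $B^a_0\times B^b_0$ by Lemma \ref{lemma.PofVcontinuous}. For the second arrow, I would use the triangle inequality
$$
V^a(\cdot,p^a_n,p^b_n,v_n)-V^a(\cdot,p^a,p^b,v)=\bigl[V^a(\cdot,p^a_n,p^b_n,v_n)-V^a(\cdot,p^a_n,p^b_n,v)\bigr]+\bigl[V^a(\cdot,p^a_n,p^b_n,v)-V^a(\cdot,p^a,p^b,v)\bigr],
$$
and control the first bracket by Proposition \ref{prop.VcontinuousinJ}, whose modulus $\varepsilon$ is independent of $(p^a,p^b)$ and applies because $v,v_n\in A_0(w)$ have derivatives bounded below by $1-w>0$; the uniform convergence on compacts in $A_0$ is equivalent to sup-norm convergence by the 1-shift property, which is what Proposition \ref{prop.VcontinuousinJ} needs. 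The second bracket vanishes by Lemma \ref{lemma.Vofpcontinuous}, after using the identity $V^a(\cdot,p^a,p^b,v)=\cdot+V^a_0(\cdot,p^a,p^b,\lfloor v\rfloor)$, valid by a direct integration-by-parts computation similar to \eqref{eq.dexpformofobjective}; the analogous argument handles $V^b$.

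Having shown that $\overline{\Phi}$ is a continuous self-map of the nonempty compact convex set $A$ in the locally convex Hausdorff space $C(\mathbb{R})\times C(\mathbb{R})$, I would invoke the Schauder--Tychonoff fixed point theorem to conclude the existence of a fixed point. The main obstacle is the verification that the continuity inputs from Section \ref{sec.individualvalfuncprops} and Lemmas \ref{lemma.PofVcontinuous}--\ref{lemma.Vofpcontinuous} can indeed be combined without circularity: the key is that Proposition \ref{prop.VcontinuousinJ}'s modulus depends only on the lower bound $\epsilon=1-w$ on $v'$ and on the fixed parameters, while Lemma \ref{lemma.Vofpcontinuous}'s modulus is uniform in $v\in A_0$, which is exactly what is needed for joint continuity on the restricted domain $A$.
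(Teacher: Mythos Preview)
Your proposal is correct and follows essentially the same route as the paper: the paper also decomposes $\overline{\Phi}$ as the composition of the feedback map $(v^a,v^b)\mapsto(v^a,v^b,P^a(v^a),P^b(v^b))$ with the value map $(v^a,v^b,p^a,p^b)\mapsto(V^a,V^b)$, invokes Lemma \ref{lemma.PofVcontinuous} for the first factor and the combination of Lemma \ref{lemma.Vofpcontinuous} with Proposition \ref{prop.VcontinuousinJ} for the second, and concludes via Schauder. Your write-up is in fact slightly more explicit than the paper's, in that you spell out the Arzel\`a--Ascoli argument, make the triangle-inequality splitting in the barrier and price variables visible, and note the identity $V^a(\cdot,p^a,p^b,v)=\cdot+V^a_0(\cdot,p^a,p^b,\lfloor v\rfloor)$ needed to connect $V^a$ to Lemma \ref{lemma.Vofpcontinuous}; the paper leaves these implicit.
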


\begin{proof}
The fact that $\overline{\Phi}$ maps $A$ into itself, for sufficiently large $\sigma$, is discussed in the paragraph following Definition \ref{def.A0}.
The closeness and convexity of $A$ are also clear.
%Note also that $V^a(\cdot,P^a(v^a),P^b(v^b),v^b)$ and $V^b(\cdot,P^a(v^a),P^b(v^b),v^a)$ are continuous as combinations of continuously differentiable $f^a$, $f^b$ (see (\ref{def.fa},\ref{def.fb})) and transforms of continuous concave majorants, as in Proposition \ref{prop.vabmcmcharacterization} {\color{red}(remind where precisely $f^a$, $f^b$ are defined, and where is the representation of $V^a$, $V^b$ via $f^a$, $f^b$)}. Thus, $\overline{\Phi}$ maps $A$ into itself.
The compactness of $A$ follows from the compactness of $A_0(w)$. In turn, $A_0(w)$ is compact because of the uniform Lipschitz property of its elements and their uniform closeness to $x$.
%(by 1-shift property) the restriction mapping $C(\mathbb{R})\to C([0,1])$ induces an isomorphism between $A_0(w)$ and a closed subset of the compact set of functions in $C[0,1]$ which are bounded by $C'_0+1$ and are Lipschitz with constant $1+w$.
Finally, $\overline{\Phi}$ is continuous because it can be written as a composition of
\begin{equation*}
\mathbf{E}\colon (v^a,v^b)\mapsto \left(v^a,v^b,P^a(v^a),P^b(v^b)\right)
\end{equation*}
and
$$
\mathbf{V}\colon (v^a,v^b,p^a,p^b)\mapsto \left(V^a(\cdot,p^a,p^b,v^b),V^b(\cdot,p^a,p^b,v^a)\right).
$$
In the above, $\mathbf{E}\colon A\to A\times B^a_0\times B^b_0$ is continuous by Lemma \ref{lemma.PofVcontinuous}. The operator
$\mathbf{V}\colon A\times B^a_0\times B^b_0\to A$ is continuous, as it is continuous in $(p^a,p^b)\in B^a_0\times B^b_0$, uniformly over $v^a,v^b\in A_0(w)$, by Lemma \ref{lemma.Vofpcontinuous}, and it is continuous in $v^a, v^b\in A_0(w)$, due to Proposition \ref{prop.VcontinuousinJ} and the 1-shift property of the elements of $A_0(w)$.
The existence of a fixed point of $\overline{\Phi}$ follows from the Schauder fixed point theorem. \qed
\end{proof}

Combining the above theorem with Proposition \ref{prop.Va0=Vatrue}, we obtain the proof of Theorem \ref{thm:main} (compare (\ref{def.overlinePhi}), (\ref{eq.prop.Va0=Vatrue.eq1}), (\ref{eq.prop.Va0=Vatrue.eq2}) and (\ref{eq.2agentjointproblem})).
%\begin{cor}
%	There exists a solution $(V^a,V^b,P^a(V^a),P^b(V^b))$ of the problem (\ref{eq.2agentjointproblem}).
%\end{cor}
%\begin{remark}
%Note for any solution $(V^a,V^b,P^a(V^a),P^b(V^b))$ of (\ref{eq.2agentjointproblem}) we have
%\begin{equation}\label{eq.paabovepb}
%P^a(V^a)(x)=p^a(x)\ge p^b(x)= P^b(V^b)(x),\quad\forall x\in\mathbb{R}
%\end{equation}
%Indeed, let $p^a_{max}(x)$ ($p^b_{min}(x)$) be the largest (smallest) integer in $\mathcal{A}^a(x)$ ($\mathcal{A}^b(x)$). By definition \ref{def.admissibleprices} of admissibility, admissible $p^a(x)\ge p^b_{min}(x)$, admissible $p^b(x)\le p^a_{max}(x)$ and $p^a_{max}(x)\ge p^b_{min}(x)$. By definition (\ref{def.responsecontrols}), $p^a(x)=P^a(V^a)(x)\ge V^a(x)$ if $V^a(x)\le p^a_{max}(x)$, $p^a(x)=p^a_{max}(x)$ otherwise, and similarly for $p^b(x)$. Thus (\ref{eq.paabovepb}) is clear if $V^a(x)>p^a_{max}(x)$ or $V^b(x)<p^b_{min}(x)$. If neither of these hold, then
%$$
%p^a(x)\ge \lceil V^a(x)\rceil \ge \lfloor V^b(x)\rfloor\ge p^b(x)
%$$
%\end{remark}

%%%%%%%%%%%%%%%%%%%%%%%%%%%%%

\section{Numerical example and applications}
\label{se:example}

In this section, we consider a numerical example which illustrates the properties of equilibria constructed in the preceding sections and shows their potential applications. As described in the proof of Theorem \ref{thm.fixedpointexistence}, to compute equilibrium value functions $(\bar{V}^a,\bar{V}^b)$, we need to find a fixed point of a mapping $\mathbf{V}\circ \mathbf{E}$. Notice that $\mathcal{A}^a(x)/\mathcal{A}^b(x)$ are finite sets, which become sufficiently small under realistic assumptions on model parameters $(\sigma,\lambda,F,\alpha)$. Hence, $\mathbf{E}$ can be easily computed by a simple grid search, and we only need to figure out how to compute $\mathbf{V}$. The latter is equivalent to computing a value function of a stationary optimal stopping problem with running costs and discounting. The algorithm we use to compute this value function is described below, and it constitutes a simple application of the Markov chain approximation methods developed in \cite{Kushner}.

Let $N$ be a positive integer which controls the space and time discretization, and define
\begin{equation*}
h:=\frac{1}{N},\quad \Delta t:=\frac{h^2}{\sigma^2}. 
\end{equation*}
Let $\xi_n$ be a symmetric random walk on $\left\{nh\right\}_{n\in\mathbb{Z}}$. It is easy to check that the conditional first and second moments of $\xi_{n+1}-\xi_n$ approximate $X_{(n+1)\Delta t}-X_{n\Delta t}$, and, hence, $\xi$ can be thought of as an approximation of $X$, where each step of $\xi$ is considered to take time $\Delta t$. 
We, then, consider the associated discretization of the optimal stopping problems (\ref{def.VagivenJb}), (\ref{def.VbgivenJa}), for the approximating Markov chain $\xi$, and denote the corresponding value functions by $V^a_N(x,p^a,p^b,v)$ and $V^b_N(x,p^a,p^b,v)$. It is known (cf. \cite{Kushner}) that $V^{a/b}_N(x,p^a,p^b,v)\to V^{a/b}(x,p^a,p^b,v)$, as $N\to\infty$, and that $V^a_N$ satisfies the following dynamic programming equation:
\begin{equation}\label{eq.markchaindynprog}
V^a_N(x,p^a,p^b,v)=\max\left[\lfloor v(x)\rfloor, 
\frac{1}{1+c_p(x)\Delta t}\left(\frac{V^a_N(x-h)+V^a_N(x+h)}{2}\right)+\frac{c_p(x)\Delta t}{1+c_p(x)\Delta t}g^a_p(x),
\right]
\end{equation}
and similarly for $V^b_N$. The solution to this equation can be found via the usual iteration in value space. That is, if we replace $V^a_N$ with the $n$-th step approximation $V^a_{N,n}$, in the right hand side of (\ref{eq.markchaindynprog}), and with $V^a_{N,n+1}$, in its left hand side, then $V^a_{N,n}\to V^a_N$ as $n\to\infty$. 
Note that, to describe an equilibrium (more precisely, its approximation), we need to find $\bar{V}^a_N = V^a_N(\cdot,P^a(V^a_N),P^b(V^b_N),V^b_N)$ and $\bar{V}^b_N = V^b_N(\cdot,P^a(V^a_N),P^b(V^b_N),V^a_N)$, where each $V^{a/b}_N$ solves the associated dynamic programming equation  (\ref{eq.markchaindynprog}). To solve the resulting coupled system, we start with initial $(V^a_{N,0}, V^b_{N,0}, p^a_{N,0}, p^b_{N,0})$, use it to obtain $V^a_{N,1}$, as described above (i.e. by computing the right hand side of (\ref{eq.markchaindynprog}), with $(V^a_{N,0},V^b_{N,0}, p^a_{N,0}, p^b_{N,0})$), compute $p^a_{N,1}=P^a(V^a_{N,1})$ via (\ref{def.responsecontrols}), then, compute $V^b_{N,1}$ given $(V^a_{N,1}, V^b_{N,0}, p^a_{N,1}, p^b_{N,0})$, and so on. If the resulting sequence converges, the limiting function satisfies the definition of $(\bar{V}^a_N,\bar{V}^b_N)$.

Although the theoretical convergence results of \cite{Kushner} do not quite apply for the coupled system at hand, numerically, we do observe convergence of the proposed scheme for the values of $\sigma$ that are not too small, as suggested by Theorem \ref{thm:main}.
The left hand side of Figure \ref{fig:1} shows a typical graph of the (approximated) equilibrium value functions $(\bar{V}^a_N,\bar{V}^b_N)$. Its right hand side contains the associated bid and ask prices $(p^a=P^a(\bar{V}^a_N),p^b=P^b(\bar{V}^b_N))$.
The parameter values are as follows: $f(x)=\bone_{[-\gamma,\gamma]}(x)/(2\gamma)$, $\gamma=1.2$, $\alpha=0.9$, $\lambda=1$, $\sigma=1$, $N=100$. Due to the $1$-shift property of both values and prices, one can easily extend these graph beyond the range $x\in[0,1]$: the value of these functions at $x+n$ is obtained by shifting their values at $x$ up by $n$.
We can observe several properties of the equilibrium value functions and quotes that can be deduced from the construction of equilibrium in the preceding sections.
\begin{itemize}
\item First, the value functions $\bar{V}^{a/b}$ are monotone and Lipschitz in $x$. This follows from the fact that they belong to $A_0(w)$ (cf. Definition \ref{def.A0}), by Theorem \ref{thm.fixedpointexistence}.
\item At the points where the two value functions coincide (and, hence, the game stops), they take integer values. This follows from the arguments presented in the proof of Proposition \ref{prop:equivEquil} and in the paragraph preceding equation (\ref{eq.2agentjointproblem}).
\item The two-dimensional stochastic process $(\bar{V}^a(X_t),\bar{V}^b(X_t))$ evolves on a one-dimensional manifold (as $X$ is one-dimensional), which is shown in red on Figure \ref{fig:3}.
\item The quotes, $p^{a/b}$, take only integer values by construction, and they are non-decreasing as shown in the proof of Lemma \ref{lemma.PofVcontinuous}. In addition, $p^{a/b}(x)$ can change at most once in the interval $x\in(0,1)$, as follows from the $1$-shift property.
\item The ask quote, $p^a(x)$, never falls below $\bar{V}^{a}(x)$, and the bid quote, $p^b(x)$, never rises above $\bar{V}^{b}(x)$, as follows from the definition of the feedback functionals in (\ref{def.responsecontrols}). Since the open interval $(\bar{V}^a(x),\bar{V}^b(x))$ does not contain any integers (see the paragraph preceding (\ref{eq.2agentjointproblem})), we have $p^a(x)\geq p^b(x)$.
\item Finally, if $p^a(x)=\lceil x \rceil$ and $p^b(x)=\lfloor x \rfloor$ (which is expected for markets with a single-tick spread), the value functions always cross: $\bar{V}^a(x)\leq \bar{V}^b(x)$. To see this, notice that, in the equilibrium we construct, the two agents stop at the same time $\tau$, hence, their payoffs are given by 
$$
U^{a/b}_{\tau}\left(p^a(X_{\tau}),\lceil \bar{V}^a(X_{\tau})\rceil,p^b(X_{\tau}),\lfloor\bar{V}^b(X_{\tau})\rfloor\right)
$$ 
(see (\ref{eq.game.main.def.2}) and (\ref{eq.UL.def})). As $ \bar{V}^a(X_{\tau})= \bar{V}^b(X_{\tau})$, the desired inequality follows from $g^a(p^a(x),p^b(x),x)\leq g^b(p^a(x),p^b(x),x)$ (see (\ref{def.gapapbx}), (\ref{def.gbpapbx})), which, in turn, follows from
$$
\int^{\infty}_{\lceil x \rceil-x} \lceil x+\alpha y\rceil\text{d}F(y) \geq \lceil x \rceil \left( 1-F\left(\lceil x \rceil-x\right)\right),
\quad \int_{-\infty}^{\lfloor x \rfloor-x} \lfloor x+\alpha y\rfloor\text{d}F(y) \leq \lfloor x \rfloor F\left(\lfloor x \rfloor-x\right).
$$
\end{itemize}

It is worth mentioning that, in the equilibrium shown in Figure \ref{fig:1}, the value functions coincide, and the game stops, when $X$ hits zero or one. In general, this does not have to be the case, and numerical experiments reveal the existence of equilibria with different stopping boundaries (even though the stopping region is always $1$-periodic). 

Let us now discuss the implications of the last property in the above list, which illustrate the potential applications of the proposed model. As mentioned in the introduction, the fact that the value functions of strategic agents cross, i.e. $\bar{V}^a(X_t)\leq \bar{V}^b(X_t)$, before the game ends is due to the presence of a positive tick size. Without the friction caused by the latter, the two strategic agents would trade with each other at any price level in $[\bar{V}^a(X_t),\,\bar{V}^b(X_t)]$, and the game would end. However, in the present model, the agents may not be able to trade because there may not exist an admissible price level (i.e. an integer) lying between their continuation values. The size of the value crossing, i.e., $\bar{V}^b-\bar{V}^a$, measures the \emph{inefficiency} created by the positive tick size. Namely, if the two players were offered to trade with each other in a ``shadow market", without a tick size, the maximum fee they would be willing to pay for such an opportunity is precisely equal to the size of the crossing. The proposed model allows us to compute the maximum value of such inefficiency, throughout the game, which corresponds to the maximum difference between the red and blue curves in Figure \ref{fig:1}, and to the maximum deviation of the red curve from the diagonal in Figure \ref{fig:3}.
This measure of inefficiency is plotted on Figure \ref{fig:4} as a function of tick size. Remarkably, this function appears to be super-linear, which, in particular, means that the size of the inefficiency is not simply proportional to the tick size.
Notice that any change in the tick size has a dual effect on the market. On the one hand, under the natural assumption that the game always stops when $X$ hits a multiple of the tick size,\footnote{Numerical experiments indicate that this is the most likely equilibrium.} as in Figure \ref{fig:1}, we conclude that the duration of the game is increasing with the tick size. Thus, an increase in the tick size has a positive effect on the market liquidity: the strategic agents withdraw liquidity less frequently. On the other hand, as the inefficiency caused by a positive tick size increases, the strategic agents are more tempted to leave the game altogether and trade at an alternative market with a smaller tick size, even if they pay a fee for such a transition. The equilibrium model developed herein allows one to quantify these two effects, which, for example, can be used by policy makers when making decisions on regulating the tick size in financial exchanges.

\begin{figure}
\begin{center}
  \begin{tabular} {cc}
    {
    \includegraphics[width = 0.45\textwidth]{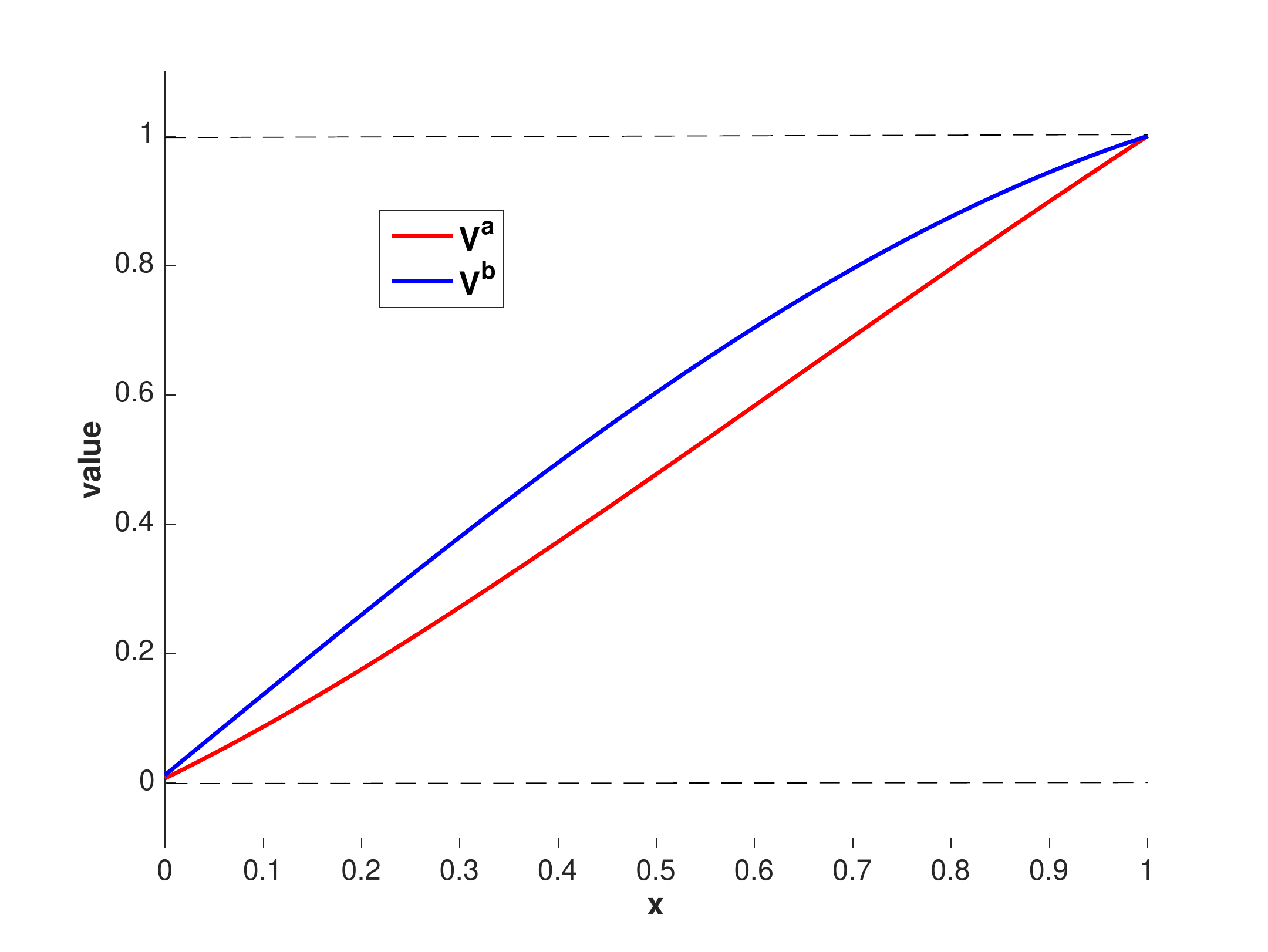}
    } & {
    \includegraphics[width = 0.45\textwidth]{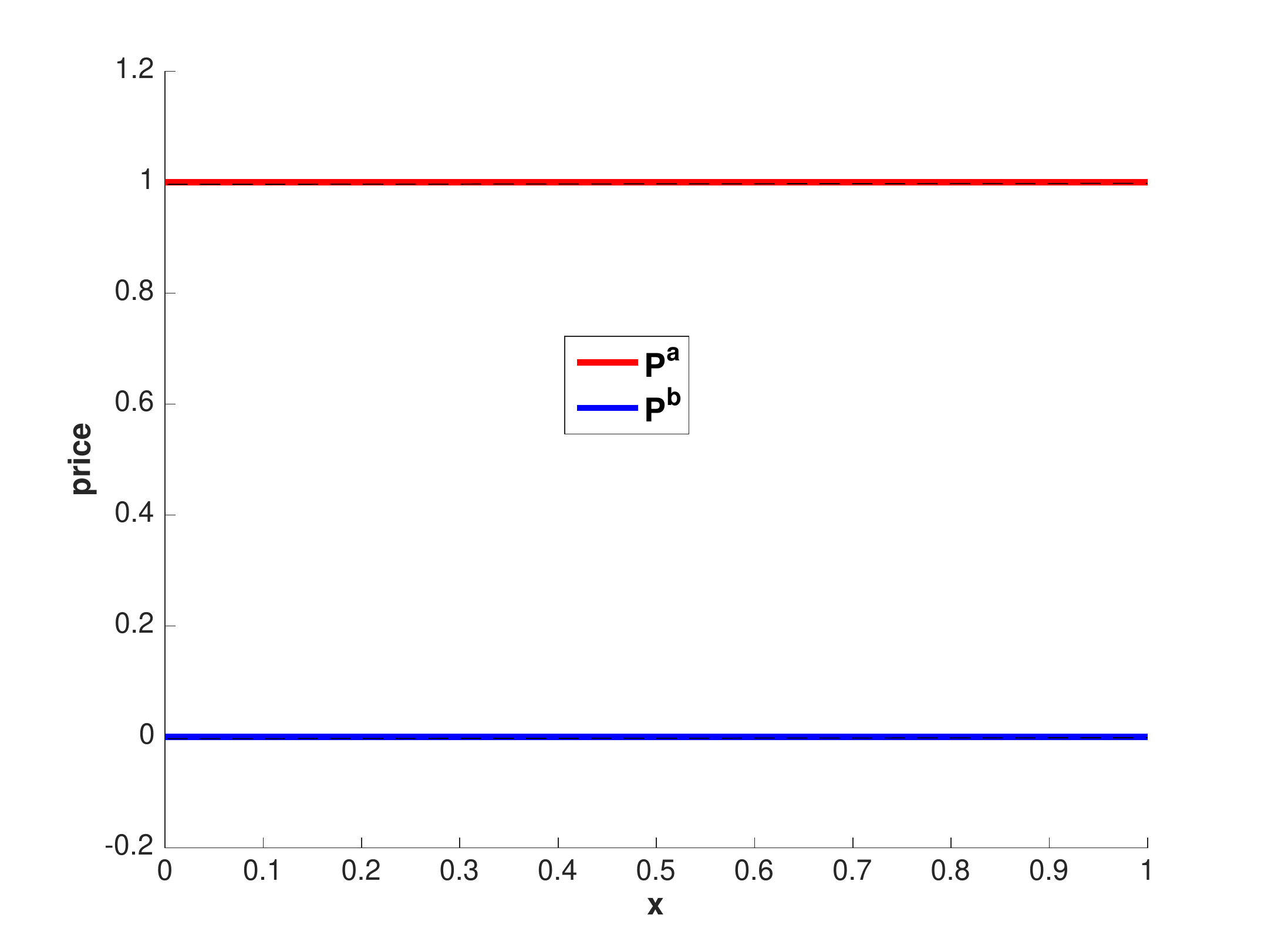}
    }\\
  \end{tabular}
  \caption{On the left: value functions $(\bar{V}^b_N,\bar{V}^a_N)$ as functions of $X$. On the right: posted prices $(p^a,p^b)$, as functions of $X$.}
    \label{fig:1}
\end{center}
\end{figure}

%\begin{figure}
%\begin{center}
%  \begin{tabular} {cc}
%    {
%    \includegraphics[width = 0.45\textwidth]{graphs/shiftedEquil_vals.eps}
%    } & {
%    \includegraphics[width = 0.45\textwidth]{graphs/shiftedEquil_papb.eps}
%    }\\
%  \end{tabular}
%  \caption{On the left: value functions $(\bar{V}^b_N,\bar{V}^a_N)$ as functions of $X$. On the right: posted prices $(p^a,p^b)$, as functions of $X$.}
%    \label{fig:2}
%\end{center}
%\end{figure}

\begin{figure}
\begin{center}
  \begin{tabular} {cc}
    {
    \includegraphics[width = 0.45\textwidth]{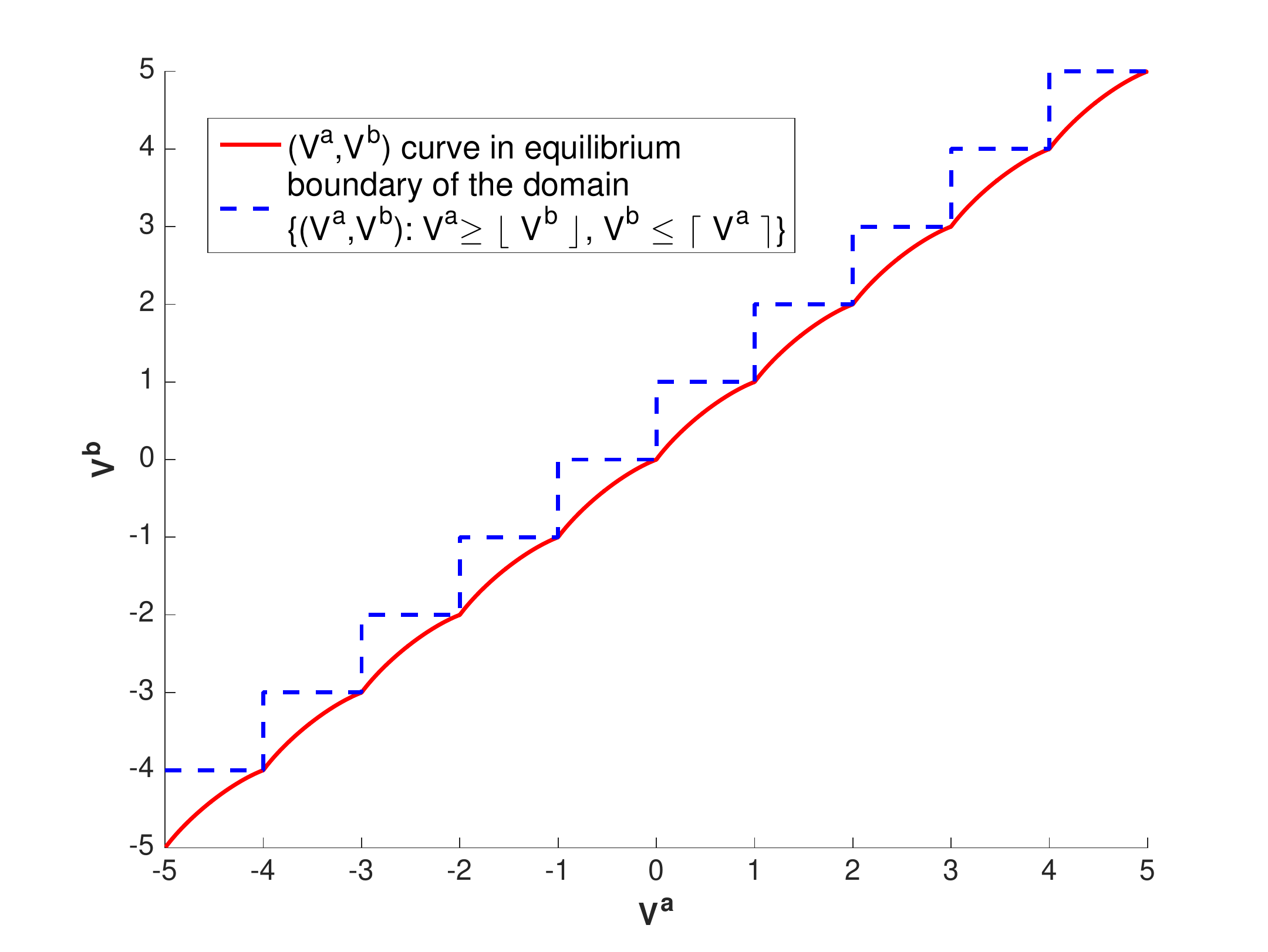}
    } & {
    \includegraphics[width = 0.45\textwidth]{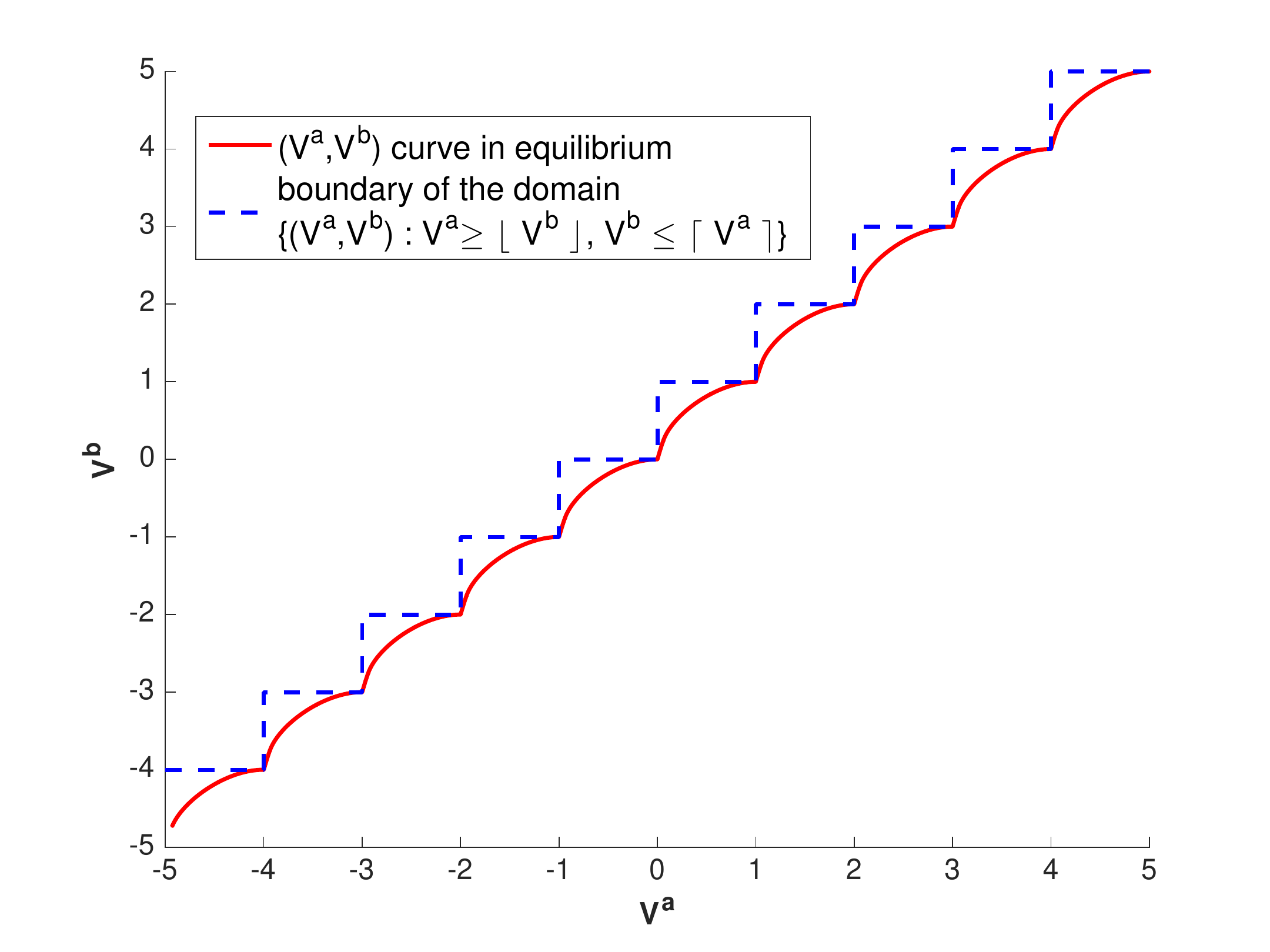}
    }\\
  \end{tabular}
  \caption{In red: the curve $\left\{(\bar{V}^a(x),\bar{V}^b(x))\,:\,x\in\RR\right\}$. In blue: the boundary of the set $\{(x,y)\,:\,x\geq \lfloor y \rfloor,\,y\leq \lceil y \rceil\}$. The left graph uses the same parameter values as in Figure {fig:1}.}
    \label{fig:3}
\end{center}
\end{figure}

\begin{figure}
\begin{center}
 % \begin{tabular} {cc}
  %  {
  %  \includegraphics[width = 0.45\textwidth]{graphs/symmetricEquil_3_VaVbDomain.eps}
  %  } & {
    \includegraphics[width = 0.45\textwidth]{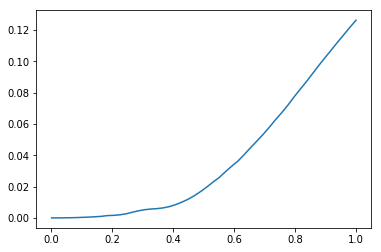}
  %  }\\
  %\end{tabular}
  \caption{The size of inefficiency caused by a positive tick size, as a function of tick size.}
    \label{fig:4}
\end{center}
\end{figure}

\section{Appendix.}

\subsection{Proof of Proposition \ref{prop.Va0=Vatrue}}

Subtracting $x$ from $\overline{V}^a$ and $J^a$, as in (\ref{def.Va0}) and (\ref{def.Ja0}), we reformulate the claim for the relative versions, $\overline{V}^a_0=\overline{V}^a - x$ and $\overline{J}^a_0=J^a-x$ in lieu of $\overline{V}^a$ and $J^a$.
To prove the claim, we need to verify that the feedback control $p^a=P^a(\overline{V}^a)$ is optimal. The latter, in turn, requires a differential characterization of the value function 
$$
V^a_0(x,p^a,p^b,\lfloor v\rfloor) = \sup_{\tau} J^a_0(x,\tau,p^a,p^b,\lfloor v\rfloor),
$$ 
for any admissible control $p^a$ (cf. (\ref{def.Va0}), (\ref{def.Ja0})), along with a comparison principle allowing us to compare the candidate optimal control to an arbitrary one. We use the theory of variational inequalities (VIs) to implement this program. Unfortunately, we could not locate any VI results that would be directly applicable in our setting, due to the presence of unbounded domain, $\mathbb{L}^\infty$ discount factor and running costs, and discontinuous obstacle. Namely, to the best of our knowledge, there exist no results that would connect the solution to an optimal stopping problem and the VI solution in such a setting. Thus, we need to introduce additional approximation steps.
More specifically, in \emph{Step 1}, we show that the discontinuous barrier $\lfloor v \rfloor$ can be replaced by its continuous majorating approximation $s_{\epsilon}$, without affecting the value of the associated optimal stopping problem, no matter which admissible $(p^a,p^b)$ are chosen. In \emph{Step 2}, we choose a sequence of smooth approximating functions $v_n\downarrow s_\epsilon$ and use the continuity of VI solution w.r.t. the obstacle (cf. \cite{Bens}), to show that the value function corresponding to $s_\epsilon$ satisfies an appropriate VI. Finally, in \emph{Step 3} we use the comparison results for the associated VI, to show that $P^a(\overline{V}^a)$ is indeed the optimal control.

\vskip 3pt

{\bf Step 1.} This step is taken care of by the following lemma, whose (geometric) proof is given below.
\begin{lemma}\label{lemma.vbfloortosepsilon}
Let $(p^a,p^b)$ be admissible controls, and $v$ an admissible barrier, such that $(p^a,p^b,v)$ have 1-shift property, $g^{a/b}/c$ are $C$-close to $x$, and $v'\ge 1-w$, with some fixed constant $w<1$, and such that Proposition \ref{prop.fabderbounds} holds with such $w$ (the latter can be guaranteed by choosing a sufficiently large $\sigma$, due to Proposition \ref{prop.fabderbounds}). Then, there exists a continuous piecewise linear function $s_\epsilon\ge \lfloor v \rfloor$, independent of $(p^a,p^b)$, which satisfies the 1-shift property, is $C$-close to $x$, and is such that
$$
V^a_0(\cdot,p^a,p^b,\lfloor v \rfloor)=V^a_0(\cdot,p^a,p^b,s_{\epsilon}),
$$
for any $(p^a,p^b)$ satisfying the above properties.
\end{lemma}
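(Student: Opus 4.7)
The plan is to build $s_\epsilon$ explicitly as a continuous piecewise-linear ``resolution'' of the jumps of $\lfloor v\rfloor$, and then to use the minimal-concave-majorant representation of Proposition \ref{prop.vabmcmcharacterization} to argue that the resolution does not actually change the value of the individual optimal stopping problem. Since $v$ is absolutely continuous and strictly increasing (as $v'\ge 1-w>0$), its level sets $x_k:=v^{-1}(k)$ are well-defined singletons, and the 1-shift property of $v$ forces $x_{k+1}=x_k+1$, so the jumps of $\lfloor v\rfloor$ are spaced exactly one unit apart. I would fix $\epsilon\in(0,1/(1+w))$ and define $s_\epsilon$ to equal $\lfloor v\rfloor$ on $\bigcup_k [x_k, x_{k+1}-\epsilon]$, interpolating linearly from $k$ to $k+1$ on each $(x_{k+1}-\epsilon, x_{k+1})$. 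Continuity, piecewise linearity, the inequality $s_\epsilon\ge\lfloor v\rfloor$, the 1-shift property, and the ``$C$-close to $x$'' property (up to a harmless adjustment of the constant) are then all routine consequences of the construction, which visibly depends only on $v$.

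Next, I would invoke the MCM characterization of Proposition \ref{prop.vabmcmcharacterization} to recast the desired identity $V^a_0(\cdot,p^a,p^b,\lfloor v\rfloor)=V^a_0(\cdot,p^a,p^b,s_\epsilon)$ as the equality $\operatorname{mcm}\widehat{(\lfloor v\rfloor-f^a)}=\operatorname{mcm}\widehat{(s_\epsilon-f^a)}$. The ``$\le$'' direction is immediate from $\lfloor v\rfloor\le s_\epsilon$ and the order-preservation of ``$\,\widehat{\cdot}\,$''. For the ``$\ge$'' direction, writing $V^a_1(x):=x+V^a_0(x,p^a,p^b,\lfloor v\rfloor)$, the MCM representation gives $\operatorname{mcm}\widehat{(\lfloor v\rfloor-f^a)}=\widehat{V^a_1-f^a}$, so it is enough to show that $\widehat{(s_\epsilon-f^a)}$ lies below this concave majorant, and this is equivalent to the pointwise sandwich $s_\epsilon\le V^a_1$. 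To prove this sandwich I would first identify $V^a_1$ with $V^a(\cdot,p^a,p^b,v)$ via the standard It\^o computation applied to $X_t\exp(-\int_0^t c(X_s)\,ds)$, so that Proposition \ref{prop.Vmonotonicity} yields the a.e.\ derivative bound $(V^a_1)'\le 1+w$ uniformly over admissible $(p^a,p^b)$. Combined with the obstacle inequality $V^a_1(x_k)\ge \lfloor v(x_k)\rfloor=k$, a one-line estimate gives, for $x\in[x_k-\epsilon,x_k]$,
\begin{equation*}
V^a_1(x)-s_\epsilon(x)\;\ge\;\bigl(k-(1+w)(x_k-x)\bigr)-\bigl(k-(x_k-x)/\epsilon\bigr)\;=\;(x_k-x)\bigl(1/\epsilon-(1+w)\bigr)\;\ge\;0,
\end{equation*}
while outside the interpolation intervals $s_\epsilon=\lfloor v\rfloor\le V^a_1$ trivially. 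The constraint $\epsilon\le 1/(1+w)$ here is precisely why that bound was built into the choice of $\epsilon$.

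The main conceptual step will be the MCM reduction in the second paragraph: recognizing that the geometric representation converts a ``global'' equality between two value functions into a ``local'' pointwise comparison between $s_\epsilon$ and $V^a_1$. Once this reduction is in place, the actual verification becomes a two-line consequence of the slope bound of Proposition \ref{prop.Vmonotonicity}. Crucially, $s_\epsilon$ depends only on $v$ and the bound used depends only on $w$, both of which are uniform over all admissible $(p^a,p^b)$ satisfying the hypotheses, so the single function $s_\epsilon$ works for every such choice of continuous controls.
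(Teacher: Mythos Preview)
Your argument is correct, and it is genuinely different from the paper's route. Both proofs construct the same $s_\epsilon$ (linear ramps of slope $1/\epsilon$ replacing the jumps of $\lfloor v\rfloor$) and both reduce the claim, via the $\operatorname{mcm}$-representation of Proposition~\ref{prop.vabmcmcharacterization}, to the inequality $\operatorname{mcm}\widehat{(s_\epsilon-f^a)}\le \operatorname{mcm}\widehat{(\lfloor v\rfloor-f^a)}$. The divergence is in how this last inequality is obtained.

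The paper stays in the transformed $y$-coordinate and builds, by hand, a concave majorant of $\widehat{\lfloor v\rfloor-f^a}$ that also dominates $\widehat{s_\epsilon-f^a}$: on each period it finds a chord of $\widehat{\lfloor v\rfloor-f^a}$, pulls it back to $x$-coordinates as $h=a\phi+b\psi$, and then uses the ODE $\frac{\sigma^2}{2}h''=ch$ together with a maximum-principle argument to bound $h'$ uniformly, so that the steep ramp of $s_\epsilon$ stays below $h$ for $\epsilon$ small. Your approach instead observes that the sought majorant is already available: it is $\widehat{V^a_1-f^a}$ itself, where $V^a_1=V^a(\cdot,p^a,p^b,v)=x+V^a_0(\cdot,p^a,p^b,\lfloor v\rfloor)$, and all you need is $s_\epsilon\le V^a_1$ pointwise. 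This you get in one line from Proposition~\ref{prop.Vmonotonicity}, which gives $(V^a_1)'\le 1+w$ uniformly in $(p^a,p^b)$, combined with the obstacle inequality $V^a_1(x_k)\ge k$. The paper's argument is more self-contained (it uses only Proposition~\ref{prop.fabderbounds}), while yours is shorter because it reuses the heavier machinery of Proposition~\ref{prop.Vmonotonicity}; since that proposition is proved earlier and has strictly weaker hypotheses, there is no circularity. One small remark: the $w$ you need is the one from Proposition~\ref{prop.Vmonotonicity}, which in the proof is of the form $w_{\ref{prop.fabderbounds}}+\tilde w(\sigma)$ and may differ from the $w$ appearing in the lemma's hypothesis; this is harmless, since it still depends only on $(C,c_l,c_u,\sigma)$ and hence your choice of $\epsilon$ remains independent of $(p^a,p^b)$, as required.
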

\begin{proof}

The main idea of the proof is to modify a step function around its jump points, by replacing jumps with steep line segments, so that we do not affect its `$\operatorname{mcm}$'.
For notational convenience, let us introduce
\begin{multline}\label{def.fa0}
f^a_0(x,p^a,p^b):=f^a(x,p^a,p^b)-x=\\
\EE^x\left[
\int_0^\infty \exp\left(-\int_0^t c(X_s,p^a(X_s),p^b(X_s))\text{d}s\right)
\left(g^a(p^a(X_t),p^b(X_t),X_t)-c(p^a(X_t),p^b(X_t),X_t)X_t
\right)
\text{d}t
\right].
\end{multline}
Then, by Proposition \ref{prop.vabmcmcharacterization}, we have:
$$
V^a_0(x,p^a,p^b,\tilde{v})=f^a_0(x)+\phi(x)\operatorname{mcm}\left(\widehat{\tilde{v}-f^a_0-x}\right)\left(\Fr(x)\right),
$$
where the only dependence on the obstacle $\tilde{v}$ is inside the `$\operatorname{mcm}$', and $V^a_0$ is defined in (\ref{def.Va0}).
Thus, it suffices to show that $\operatorname{mcm}\left(\widehat{\tilde{v}-f^a_0-x}\right)\left(y\right)$ does not change if we replace $\tilde{v}=\lfloor v \rfloor$ by $\tilde{v}=s_{\epsilon}$.
	
Let us fix any $\epsilon>0$ and define $s_{\epsilon}$. We know that $\lfloor v \rfloor$ has the $1$-shift property and changes only by jumps of size $1$, at a sequence of points $\left\{x_0+n\right\}_{n\in\mathbb{Z}}$.
We define $s_{\epsilon}(x)$ to be equal to $\lfloor v(x) \rfloor$ outside the intervals $(x_0+n-\epsilon,x_0+n]$, and to coincide with the line segment connecting $(x_0+n-\epsilon,\lfloor v\left(x_0+n-\epsilon\right)\rfloor)$ and $(x_0+n,\lfloor v\left(x_0+n\right)\rfloor)$ on these intervals.
Notice that, in the left $\epsilon$-neighborhood of every jump point of $\lfloor v \rfloor$, $s_\epsilon$ is a line segment with slope $1/\epsilon$, and it coincides with $\lfloor v \rfloor$ (and is locally constant) elsewhere.
Notice also that $s_\epsilon\ge \lfloor v \rfloor$ by construction, hence, $\operatorname{mcm}\left(\widehat{s_\epsilon-f^a_0-x}\right)\ge \operatorname{mcm}\left(\widehat{\lfloor v \rfloor-f^a_0-x}\right)$, and it only remains to prove the opposite inequality, for some $\epsilon>0$.

To this end, we notice that, under the assumptions of the lemma, $f^a_0+x$ is strictly increasing, and, hence, the function $x\mapsto\lfloor v(x) \rfloor-f^a_0(x)-x$ achieves its maximum value exactly at the points $\left\{x_0+n\right\}$. If this maximum is non-positive, we deduce from the proof of Proposition \ref{prop.Vmonotonicity} that $\operatorname{mcm}\left(\widehat{\lfloor v \rfloor-f^a_0-x}\right)=0$. Then, the desired inequality is clear, as we can ensure that $s_\epsilon-f^a_0-x$ has the same supremum as $\lfloor v \rfloor-f^a_0-x$, by choosing sufficiently small $\epsilon>0$ (depending only on $w$, due to Proposition \ref{prop.fabderbounds}). Thus, for the remainder of the proof, we assume that the supremum of $\lfloor v \rfloor-f^a_0-x$ is strictly positive, which implies that its `$\operatorname{mcm}$" is strictly positive everywhere.

Denote $y_0:=\mathbf{F}(x_0)$, $y_1:=\mathbf{F}(x_1)$, where $x_1=x_0+1$ and $\mathbf{F}$ is given by (\ref{def.F}). Notice that $(f^a_0)'\geq 1-w$, with some fixed $w<1$. Then, using the estimates of $(\psi,\phi)$ given in Lemma \ref{lemma.phipsiasymptotics}, it is easy to deduce that there exists $\delta\in(0,1]$, independent of $(p^a,p^b,v)$, and $y_0'\in[y_0,\mathbf{F}(x_1-\delta)]$, such that $\widehat{(\lfloor v \rfloor-f^a_0-x)}(y_0')>0$ and the linear interpolation of the points $(y_0',\widehat{(\lfloor v \rfloor-f^a_0-x)}(y_0'))$ and $(y_1,\widehat{(\lfloor v \rfloor-f^a_0-x)}(y_1))$ lies above the graph of $\widehat{\lfloor v \rfloor-f^a_0-x}$ on the interval $[y_0',y_1]$.
Our goal is to show that this interpolation also dominates $\widehat{s_\epsilon-f^a_0-x}$ on this interval, for all small enough $\epsilon>0$. Clearly, if $\epsilon<\delta$, then, the function $\widehat{\lfloor v \rfloor-f^a_0-x}$, modified by the aforementioned interpolation on $[y_0',\mathbf{F}(x_0+1)]$, dominates $\widehat{s_\epsilon-f^a_0-x}$ on $[\mathbf{F}(x_0),\mathbf{F}(x_0+1)]$. Then, due to periodicity, analogous modification can be constructed on any $[\mathbf{F}(x_0+n),\mathbf{F}(x_0+n+1)]$, to dominate $\widehat{s_\epsilon-f^a_0-x}$ on this interval, which yields the desired inequality: $\operatorname{mcm}\left(\widehat{s_\epsilon-f^a_0-x}\right)\le \operatorname{mcm}\left(\widehat{\lfloor v \rfloor-f^a_0-x}\right)$.

Notice that the `$\,\widehat{\cdot}\,$' operator (defined in (\ref{def.widehat})) transforms any linear combination of $\phi$ and $\psi$ into a straight line.
Therefore, it suffices to prove that $s_\epsilon-f^a_0-x$ is dominated on $[x_0':=\mathbf{F}^{-1}(y_0'),x_1]$ by `$a\phi(x)+b\psi(x)$' interpolation between $(x_0',z_0')$ and $(x_1,z_1)$ (with the constants $a$ and $b$ chosen to match the boundary values $z_0'$ and $z_1$, at $x_0'$ and $x_1$, respectively), where 
$$
z_0' = \lfloor v(x_0') \rfloor - f^a_0(x_0') - x_0', \quad
z_1=\lfloor v(x_1) \rfloor - f^a_0(x_1) - x_1.
$$ 
Denote $h = a\phi + b\psi$, and observe: $h(x_0')=z_0'\in[(z_1-1)^+,z_1]$, $h(x_1)=z_1>0$, $x_1\geq x_0'+\delta$, and $h\geq0$, $h\geq \lfloor v \rfloor-f^a_0-x$ on $[x_0',x_1]$.
%Similarly to the mcm of the function dominating its modification by replacing it by line segment connecting any two points of its graph over appropriate interval, $\phi(x)\operatorname{mcm}\left(\widehat{f}\right)(\Fr(x))$ dominates the function obtained from $f$ by replacing its values on some $[x_0,x_1]$ by the values of $a\phi(x)+b\psi(x)$ function coinciding with $f$ at $x_0$,$x_1$ (as $\widehat{ }$ operation converts $a\phi+b\psi$ into a line segment). 
Next, as $h$ satisfies
$$
\frac{\sigma^2}{2}h_{xx}-ch=0,\,\,\text{a.e.},
$$
and is continuously differentiable, we apply the maximum principle to conclude that the maximum value of $h$ on $[x_0',x_1]$ cannot be achieved in the interior of this interval (otherwise, at the maximum point, we would have $h>0$ and $h_{xx}<0$, which is a contradiction with the above equation).
Thus, $h(x)\le z_1$ for $x\in[x_0',x_1]$. Then, the above equation implies
$$
0\leq h_{xx}(x)\le \frac{2c}{\sigma^2}z_1\le \frac{2c_u}{\sigma^2}z_1,\quad x\in[x_0',x_1],
$$
and, as the integral of $h_x$ over the interval $[x_0',x_1]$ (of length at least $\delta\in(0,1]$ and at most $1$) does not exceed $1$, we conclude:
$$
h_x(x)\le \frac{2c_u}{\sigma^2}z_1 + \frac{1}{\delta},\quad x\in[x_0',x_1].
$$
The assumptions of the lemma imply that $\lfloor v \rfloor$ is $(C+1)$-close to $x$ and that $f^a_0$ is absolutely bounded by $C$. Therefore, $z_1= \lfloor v(x_1) \rfloor - x_1 - f^a_0(x_1)\le 2C+1$, and, in turn, $h_x$ is bounded on $[x_0',x_1]$ by a constant $C_2$, independent of $(p^a,p^b,v)$, as long as the latter satisfy the properties stated in the lemma.
Finally, notice that $h(x_1)=z_1=s_\epsilon(x_1)-x_1-f^a_0(x_1)$, and that the slope of the function $s_\epsilon-f^a_0-x$ is at least as large as $1/\epsilon-1-w$ on $[x_1-\epsilon,x_1]$ (due to Proposition \ref{prop.fabderbounds}). Then, we can choose $\epsilon$ to be small enough, so that $1/\epsilon-1-w\geq C_2\geq h_x(x)$, for $x\in[x_1-\epsilon,x_1]$.
Thus, we obtain
$$
s_\epsilon(x)-f^a_0(x)-x\le h(x),\,\,x\in[x_1-\epsilon,x_1].
$$
As $h\geq \lfloor v \rfloor-f^a_0-x$ on $[x_0',x_1]$, by construction, we conclude that the above inequality holds for all $x\in[x_0',x_1]$.
Thus, we conclude that the function $\widehat{\lfloor v \rfloor-f^a_0-x}$, modified by its linear interpolation on $[y_0',y_1]$, dominates $\widehat{s_\epsilon-f^a_0-x}$ on $[y_0=\mathbf{F}(x_0),y_1=\mathbf{F}(x_0+1)]$.
The other intervals $[\mathbf{F}(x_0+n),\mathbf{F}(x_0+n+1)]$ are handled by the same argument.
As a result, we obtain
$$
\operatorname{mcm}\left(\widehat{s_\epsilon-f^a_0-x}\right)\le 
\operatorname{mcm}\left(\widehat{\lfloor v \rfloor-f^a_0-x}\right),
$$
which yields the desired equality between the left and the right hand sides of the above.
%and so also
%$$
%\operatorname{mcm}\left(\widehat{s_\epsilon-f^a_0-x}\right)=
%\operatorname{mcm}\left(\widehat{\lfloor V^b \rfloor-f^a_0-x}\right)
%$$
%as claimed.
\qed
\end{proof}

{\bf Step 2.} Let us recall some notation from \cite{Bens}. Let $\mu>0$ and consider the weight function
$$
m_\mu(x)=\exp(-\mu |x|).
$$ 
Denote by $H_\mu=\mathbb{W}^{0,2,\mu}$, $V_\mu=\mathbb{W}^{1,2,\mu}$ the appropriate $m_\mu$-weighted Sobolev spaces on $\mathbb{R}$ (we need weighted spaces because our coefficients are bounded and periodic, while we need them to be integrable over the entire unbounded domain).
For any $u,r\in V_\mu$ and any $p\in\mathcal{A}^a$, we define
$$
a_p(u,r)=\int_{\RR} \frac{\sigma^2}{2}m^2_\mu u' r' 
- 2\mu\operatorname{sign}(x)\frac{\sigma^2}{2}m_\mu v' m_\mu r 
+ c_p m^2_\mu u r\,dx.
$$
%where the integral is over $\mathbb{R}$, 
Let $f_p\in H_\mu$, given by
$$
f_p(x)=g^a_p(x)-c_p(x)x,
$$
be the running cots of our relative-to-$x$ stopping problem, and
$$
K_\mu(v)=\left\{u\in V_\mu | u(x)\ge v(x)-x \text{ a.e. }x\right\}
$$
be the appropriate set of test functions.

We denote by $\text{VI}(p,v)$ the following VI (understood in the weak sense)
\begin{equation}\label{eq.VIpJbrel}
a_p(u,r-u)\le (f_p,r-u),\,\,\forall r\in K_\mu(v)
\end{equation}
where 
$$
(u,r)=\left(u,r\right)_\mu=\int_{\RR} m^2_\mu ur dx.
$$
We say that $u$ is a solution of the above VI if $u\in K_\mu(v)$ and $u$ satisfies (\ref{eq.VIpJbrel}).
As all the coefficients, $c_p$, $\sigma^2/2$, $f_p$, and $(v-x)$, are in $\mathbb{L}^\infty(\mathbb{R})$, and as the form $a_p(\cdot,\cdot)$ is coercive when $\mu$ is sufficiently small, we conclude that (for such $\mu$) the VI (\ref{eq.VIpJbrel}) has a unique solution in $K_\mu(v)$, for any $p\in\mathcal{A}^a$ and any $v$ that is $C$-close to $x$, by Theorem 1.13, \cite{Bens} p. 217. 

Let $v_n$ be a $C^\infty$-approximation from above of $s_\epsilon$, associated with $v$ by Lemma \ref{lemma.vbfloortosepsilon}, which is at most $1/n$ away from $s_\epsilon$ in $\sup$-norm. Then, by Theorem 3.19, \cite{Bens} p. 387, for sufficiently small $\mu$, $u_n=V^a_0(\cdot,p,p^b,v_n)$ is the unique solution of $\text{VI}(p,v_n)$. Denote also by $u_0$ the unique solution of $\text{VI}(p,s_\epsilon)$. Rewriting these VIs as unweighted VIs for $m_\mu u$ and restricting to a bounded domain, one can generalize Theorem 1.10, \cite{Bens} p. 207, to obtain $u_n\to u_0$ in $\mathbb{L}^\infty(\mathbb{R})$. The latter fact, together with the easy to check convergence of value functions,
$$
V^a_0(\cdot,p,p^b,v_n)\to V^a_0(\cdot,p,p^b,s_\epsilon)=V^a_0(\cdot,p,p^b,\lfloor v\rfloor),
$$ 
implies that the latter value function is the unique solution of $\text{VI}(p,s_\epsilon)$.

{\bf Step 3.} By Theorem 1.4, \cite{Bens} p. 198, extended to unbounded domain as in Remark 1.21, p. 219, the unique solutions $u,\tilde{u}\in K_\mu(v)$ of VIs
$$
a_p(u,r-u)\le (h,r-u),\,\,\forall r\in K_\mu(v)
$$
$$
\text{resp. } a_p(\tilde{u},r-\tilde{u})\le (\tilde{h},r-\tilde{u}),\,\,\forall r\in K_\mu(v)
$$
sharing the obstacle $v$ and the form $a_p$, but with different right-hand sides $h$ and $\tilde{h}$, satisfy $\tilde{u}\ge u$ if $\tilde{h}\ge h$.
Recall that $\overline{V}^a_0(x)=\overline{V}^a(x)-x$ and denote $\bar{p}:=P^a(\overline{V}^a)$.
We need to show that the following inequality holds:
\begin{multline*}
\bar{u}:=\overline{V}^a_0=V^a_0(\cdot,\bar{p},p^b,\lfloor v\rfloor)=
V^a_0(\cdot,\bar{p},p^b,s_\epsilon)
\ge V^a_0(\cdot,p_0,p^b,s_\epsilon)=
V^a_0(\cdot,p_0,p^b,\lfloor v\rfloor)=:\bar{u}_0=\bar{u}_0(p_0)
\end{multline*}
for any $p_0\in\mathcal{A}^a$. It is shown in Step 2 that $\bar{u}$ satisfies a version of (\ref{eq.VIpJbrel}) with the running costs $f_{\bar{p}}$ and the quadratic form $a_{\bar{p}}$, which, after routine transformations, turns out to be equivalent to
\begin{equation}
a_{p_0}(\bar{u},r-\bar{u})\le (\tilde{f},r-\bar{u}),\,\,\forall r\in K_\mu(\lfloor v \rfloor),
\end{equation}
where $\tilde{f}=f_{p_0}+q$,
$$
q:=g_{\bar{p}}-c_{\bar{p}}(\bar{u}+x)-(g_{p_0}-c_{p_0}(\bar{u}+x))\ge0,
$$
which follows from $\bar{p}=P^a(\overline{V}^a)$ and $\bar{u}(x) + x = \overline{V}^a(x)$. Recall that $\bar{u}_0$ satisfies the above equation with the running cost function $f_{p_0}$ in place of $\tilde{f}$, and that $f_{p_0}\le\tilde{f}$. Hence, we can apply the comparison principle, stated at the beginning of this step, to complete the proof.
\qed

\subsection{Proof of Lemma \ref{lemma.Vofpcontinuous}}

We only show the continuity of $(p^a,p^b)\mapsto x + V^a_0(x,p^a,p^b,v)$, the other part being analogous. Recall that 
$$
V^a_0(x,p^a,p^b,v)=\sup_{\tau}J^a_0(x,\tau,p^a,p^b,v).
$$ 
Thus, it suffices to show that $J^a_0$-s corresponding to two close pairs of prices, $(p^a_1,p^b_1)$ and $(p^a_2,p^b_2)$, with the same $\tau$, are also close, uniformly in $\tau$. To this end, we recall (\ref{def.Ja0}) and obtain
\begin{multline}\label{eq.sect4.threepartobjdiffpabcont}
J^a_0(x,\tau,p^a_1,p^b_1,v)-J^a_0(x,\tau,p^a_2,p^b_2,v)\\
=\EE^x\Big[\int_0^\tau \left(\exp\left(-\int_0^t c_1(X_s)\text{d}s\right)-\exp\left(-\int_0^t c_2(X_s)\text{d}s\right)\right)\left(g^a_1(X_t)-c_1(X_t)X_t\right)\text{d}t\\
+\int_0^\tau \exp\left(-\int_0^t c_2(X_s)\text{d}s\right)\left(g^a_1(X_t)-c_1(X_t)X_t-(g^a_2(X_t)-c_2(X_t)X_t)\right)\text{d}t\\
+\left(\exp\left(-\int_0^\tau c_1(X_s)\text{d}s\right)-\exp\left(-\int_0^\tau c_2(X_s)\text{d}s\right)\right)\left(v(X_\tau)-X_\tau\right)
\Big],
\end{multline}
where we denote
$$
c_1(x)=c(p^a_1(x),p^b_1(x),x),\,\,c_2(x)=c(p^a_2(x),p^b_2(x),x),
$$
$$
g^a_1(x)=g^a(p^a_1(x),p^b_1(x),x),\,\,g^a_2(x)=g^a(p^a_2(x),p^b_2(x),x).
$$
%The right hand side of (\ref{eq.edit.lastpf.1}) is furthermore equal to
%\begin{multline}\label{eq.sect4.threepartobjdiffpabcont}
%\EE^x\Big[\int_0^\tau \left(\exp\left(-\int_0^t c_1(X_s)\text{d}s\right)-\exp\left(-\int_0^t c_2(X_s)\text{d}s\right)\right)\left(g^a_1(X_t)-c_1(X_t)X_t\right)\text{d}t+\\
%\int_0^\tau \exp\left(-\int_0^t c_2(X_s)\text{d}s\right)\left(g^a_1(X_t)-c_1(X_t)X_t-(g^a_2(X_t)-c_2(X_t)X_t)\right)\text{d}t+\\
%\left(\exp\left(-\int_0^\tau c_1(X_s)\text{d}s\right)-\exp\left(-\int_0^\tau c_2(X_s)\text{d}s\right)\right)\left(v(X_\tau)-X_\tau\right)
%\Big]
%\end{multline}
To complete the proof, it suffices to show that the expectations of the absolute values of each of the three terms on different lines in (\ref{eq.sect4.threepartobjdiffpabcont}) are small when $(p^{a}_1,p^b_1)$ and $(p^{a}_2,p^b_2)$ are close in their topology.

For the third term, note that, as $|\exp(-x)-\exp(-y)|\le \max\left(\exp(-x),\exp(-y)\right)|x-y|$, and as $c_i(x)\ge c_l>0$, for all $x$ and $i=1,2$, we have
\begin{multline*}
\left\vert \left(\exp\left(-\int_0^\tau c_1(X_s)\text{d}s\right)-\exp\left(-\int_0^\tau c_2(X_s)\text{d}s\right)\right)\left(v(X_\tau)-X_\tau\right) \right\vert\le\\
\exp(-c_l\tau)\left(\int_0^\tau \left\vert c_1(X_s)-c_2(X_s)\right\vert \text{d}s\right) \left\vert v(X_\tau)-X_\tau \right\vert\le
C_1 \int_0^\tau \exp(-c_l s)\left\vert c_1(X_s)-c_2(X_s)\right\vert \text{d}s\le\\
C_1 \int_0^\tau \exp(-c_l s)\left(\left\vert p^a_1(X_s)-p^a_2(X_s)\right\vert+\left\vert p^b_1(X_s)-p^b_2(X_s)\right\vert\right)\text{d}s,
\end{multline*}
where a positive constant $C_1$ may differ between the lines (here and throughout the proof). The second inequality in the above follows from the closeness to $x$ of $v$, and the last one follows from the fact that
$$
c(p^a,p^b,x)=\lambda\left( F^+(p^a-x)+F(p^b-x)\right)
$$
is Lipschitz in $(p^a,p^b)$, as the density of $\xi$ is bounded by Assumption \ref{ass.Foverfmono}.

Similarly, it is easy to show that, for all $x\in\RR$,
$$
\left\vert (g^a_1(x)-c_1(x)x-(g^a_2(x)-c_2(x)x)\right\vert\le
C_1\left(
\left\vert p^a_1(x)-p^a_2(x) \right\vert
+
\left\vert p^b_1(x)-p^b_2(x) \right\vert
\right),
$$
using the boundedness of the density of $\xi$ and the uniform closeness of all admissible $p^a$, $p^b$ to $x$.
This allows us to estimate the second term in (\ref{eq.sect4.threepartobjdiffpabcont}):
\begin{multline*}
\left\vert
\int_0^\tau \exp\left(-\int_0^t c_2(X_s)\text{d}s\right)\left(g^a_1(X_t)-c_1(X_t)X_t-(g^a_2(X_t)-c_2(X_t)X_t)\right)\text{d}t
\right\vert\le\\
C_1\int_0^\tau \exp(-c_l t) \left(
\left\vert p^a_1(X_t)-p^a_2(X_t) \right\vert
+
\left\vert p^b_1(X_t)-p^b_2(X_t) \right\vert
\right) \text{d}t.
\end{multline*}

Finally, we notice that $\left\vert g^a_1(X_t)-c_1(X_t)X_t \right\vert\le C_1$, which follows from the fact that $g^a_1/c_1$ is $C'_0$-close to $x$ (cf. Lemma \ref{lemma.responsecontrolprops}) and that $c_1\le c_u$. This allows us to estimate the first term in (\ref{eq.sect4.threepartobjdiffpabcont}):
\begin{multline*}
\left \vert
\int_0^\tau\left(\exp\left(-\int_0^t c_1(X_s)\text{d}s\right)-\exp\left(-\int_0^t c_2(X_s)\text{d}s\right)\right)\left(g^a_1(X_t)-c_1(X_t)X_t\right)\text{d}t
\right\vert\le\\
C_1 \int_0^\tau \exp(-c_l t)\left( \int_0^t\left(
\left\vert p^a_1(X_s)-p^a_2(X_s) \right\vert
+
\left\vert p^b_1(X_s)-p^b_2(X_s) \right\vert
\right) \text{d}s\right)\text{d}t\le\\
C_1\int_0^\tau \exp(-c_l t) \left(
\left\vert p^a_1(X_t)-p^a_2(X_t) \right\vert
+
\left\vert p^b_1(X_t)-p^b_2(X_t) \right\vert
\right)\text{d}t,
\end{multline*}
where the second inequality follows from integration by parts, after discarding some negative terms.

Thus, the absolute values of all terms in (\ref{eq.sect4.threepartobjdiffpabcont}) are estimated from above via
\begin{multline*}
\int_0^\tau \exp(-c_l t) \left(
\left\vert p^a_1(X_t)-p^a_2(X_t) \right\vert
+
\left\vert p^b_1(X_t)-p^b_2(X_t) \right\vert
\right) \text{d}t\\
\leq \int_0^\infty \exp(-c_l t) \left(
\left\vert p^a_1(X_t)-p^a_2(X_t) \right\vert
+
\left\vert p^b_1(X_t)-p^b_2(X_t) \right\vert
\right) \text{d}t,
\end{multline*}
which implies
\begin{multline*}
\left\vert V^a_0(x,p^a_1,p^b_1,v)-V^a_0(x,p^a_2,p^b_2,v) \right \vert\\
\leq C_1 \EE^x\left[ \int_0^\infty \exp(-c_l t) \left(
\left\vert p^a_1(X_t)-p^a_2(X_t) \right\vert
+
\left\vert p^b_1(X_t)-p^b_2(X_t) \right\vert
\right) \text{d}t\right].
\end{multline*}
It only remains to estimate the latter expectation in terms of $\mathbb{L}^1\left([0,1]\right)$ norms of $p^a_1-p^a_2$ and $p^b_1-p^b_2$. The latter is achieved easily by interchanging the expectation and integration and using the standard estimates of a Gaussian kernel.
\qed

\bibliographystyle{abbrv}
\bibliography{MFGLOB_impact_refs}

\begin{thebibliography}{10}

\bibitem{AngelisFerrariMoriarty}
T.~D. Angelis, G.~Ferrari, and J.~Moriarty.
\newblock Nash equilibria of threshold type for two-player nonzero-sum games of
  stopping.
\newblock {\em arXiv:1508.03989}, preprint, 2015.

\bibitem{BensoussanFriedman}
A.~Bensoussan and A.~Friedman.
\newblock Nonzero-sum stochastic differential games with stopping times and
  free boundary problems.
\newblock {\em Transactions of the American Mathematical Society},
  231(2):275--327, 1977.

\bibitem{Bens}
A.~Bensoussan and J.~L. Lions.
\newblock {\em Applications of variational inequalities in stochastic control.
  Studies in Mathematics and its Applications, Volume 12}.
\newblock North Holland, 1982.

\bibitem{Dyn3}
B.~J. Cvitani{\'c} and I.~Karatzas.
\newblock Backward stochastic differential equations with reflection and dynkin
  games.
\newblock {\em Annals of Probability}, 24(4):2024--2056, 1996.

\bibitem{Dayanik}
S.~Dayanik.
\newblock Optimal stopping of linear diffusions with random discounting.
\newblock {\em Mathematics of Operations Research}, 33(3):645--661, 2008.

\bibitem{DayanikKaratzas}
S.~Dayanik and I.~Karatzas.
\newblock On the optimal stopping problem for one-dimensional diffusions.
\newblock {\em Stochastic Processes and their Applications}, 107(2):173--212,
  2003.

\bibitem{Delarue2}
F.~Delarue, J.~Inglis, S.~Rubenthaler, and E.~Tanr{\'e}.
\newblock Particle systems with a singular mean-field self-excitation.
  {A}pplication to neuronal networks.
\newblock {\em Stochastic Process. Appl.}, 125(6):2451--2492, 2015.

\bibitem{Dyn1}
E.~B. Dynkin.
\newblock Game variant of a problem on optimal stopping.
\newblock {\em Soviet Math. Dokl.}, 10:270--274, 1969.

\bibitem{GaydukNadtochiy2}
R.~Gayduk and S.~Nadtochiy.
\newblock Endogenous formation of {Limit Order Books}: dynamics between trades.
\newblock {\em arXiv:1605.09720v2}, preprint, 2016.

\bibitem{GaydukNadtochiy1}
R.~Gayduk and S.~Nadtochiy.
\newblock Liquidity effects of trading frequency.
\newblock {\em To appear in Mathematical Finance}, 2017.

\bibitem{HamadenLapeltier}
S.~Hamad\`ene and J.-P. Lepeltier.
\newblock Reflected bsdes and mixed game problem.
\newblock {\em Stochastic Processes and their Applications}, 85:177--188, 2000.

\bibitem{HamadeneZhang}
S.~Hamad\`ene and J.~Zhang.
\newblock The continuous time nonzero-sum dynkin game problem and application
  in game options.
\newblock {\em SIAM Journal on Control and Optimization}, 48(5):3659--3669,
  2010.

\bibitem{IMC}
K.~Ito and H.~P.~M. Jr.
\newblock {\em Diffusion processes and their sample paths. Grundlehren der
  Mathematischen Wissenschaften, Volume 125}.
\newblock Springer Berlin, 1965.

\bibitem{KaratzasLi}
I.~Karatzas, Q.~Li, and R.~J. Elliott.
\newblock {BSDE} approach to non-zero-sum stochastic differential games of
  control and stopping.
\newblock In {\em Stochastic Processes, Finance and Control: A Festschrift in
  Honor of Robert J. Elliott}, pages 105--153. World Scientific, 2011.

\bibitem{KaratzasSuderth}
I.~Karatzas and W.~Suderth.
\newblock Stochastic games of control and stopping for a linear diffusion.
\newblock In {\em Random Walk, Sequential Analysis and Related Topics: A
  Festschrift in Honor of Y. S. Chow}, pages 100--117. World Scientific, 2006.

\bibitem{Kushner}
H.~Kushner and P.~G. Dupuis.
\newblock {\em Numerical methods for stochastic control problems in continuous
  time}.
\newblock Springer Science \& Business Media, 2013.

\bibitem{NadShk}
S.~Nadtochiy and M.~Shkolnikov.
\newblock Particle systems with singular interaction through hitting times:
  application in systemic risk.
\newblock {\em Annals of Applied Probability}, to appear, 2018.

\bibitem{ZhouZhou}
Z.~Zhou.
\newblock Non-zero-sum stopping games in continuous time.
\newblock {\em arXiv:1508.03921}, preprint, 2015.

\end{thebibliography}

%\begin{thebibliography}{5}
	%\bibitem{AFM} De Angelis, T., Ferrari, G. and Moriarty, J., 2015. Nash equilibria of threshold type for two-player nonzero-sum games of stopping. arXiv preprint arXiv:1508.03989.
	%\bibitem{Bens} Bensoussan, A. and Lions, J.L., 1982. Applications of variational inequalities in stochastic control. Studies in Mathematics and its Applications, 12.
	%\bibitem{Dayanik} Dayanik, S., 2008. Optimal stopping of linear diffusions with random discounting. Mathematics of Operations Research, 33(3), pp.~645-661.
	%\bibitem{DayanikKaratzas}
	%Dayanik, S. and Karatzas, 2003. On the optimal stopping problem for one-dimensional diffusions.
	%Stochastic Processes and their Applications, 107(2), pp.~173--212.
	%\bibitem{GaydukNadtochiy1}
	%Gayduk, R. and Nadtochiy, S., 2015. Liquidity effects of trading frequency (extended version).
	%Technical report, arXiv:1508.07914.
	%\bibitem{GaydukNadtochiy2} Gayduk, R. and Nadtochiy, S., 2016. Endogenous formation of Limit Order Books: dynamics between trades. arXiv preprint, arXiv:1605.09720.
	%\bibitem{Kushner} Kushner, H. and Dupuis, P.G., 2013. Numerical methods for stochastic control problems in continuous time. Springer Science \& Business Media.
	%\bibitem{IMC} Ito, K. and McKean Jr, H.P., 1965. Diffusion processes and their sample paths. Grundlehren der Mathematischen Wissenschaften, 125.
	%\bibitem{HZ} Hamad\'ene, S. and Zhang, J., 2010. The continuous time nonzero-sum Dynkin game problem and application in game options. SIAM Journal on Control and Optimization, 48(5), pp.3659-3669.
%\end{thebibliography}
\end{document}